\newtheorem{theorem}{Theorem}
\newtheorem{corollary}[theorem]{Corollary}
\newtheorem{lemma}[theorem]{Lemma}
\newtheorem{proposition}[theorem]{Proposition}
\newtheorem{definition}[theorem]{Definition}
\newcommand{\M}{\mathcal{M}}
\newcommand{\C}{\mathcal{C}}
\newcommand{\F}{\mathcal{F}}
\newcommand{\R}{\mathcal{R}}
\newcommand{\one}{\mathds{1}}
\renewcommand{\H}{\mathcal{H}}
\newcommand{\HH}{\mathcal{H}}
\newcommand{\HHF}{$\mathcal{H}$-\textit{Free}}
\newcommand{\HF}{\H\textit{-free}}
\newcommand{\HHFE}{\textsc{$\H$-free Editing($\sigma$)}}
\newcommand{\HHFD}{\textsc{$\H$-free Deletion}}
\newcommand{\HHFI}{\textsc{$\H$-free Insertion}}
\newcommand{\op}{\Delta}
\newcommand{\E}{\mathcal{E}}
\title{A $O^*((2 + \epsilon)^k)$ Time Algorithm for Cograph Deletion Using Unavoidable Subgraphs in Large Prime Graphs}
\author{Manuel Lafond, Francis Sarrazin \\ Université de Sherbrooke, Canada}
\date{\vspace{-5ex}}
\begin{document}

\maketitle

\begin{abstract}
    We study the parameterized complexity of the Cograph Deletion problem, which asks whether one can delete at most $k$ edges from a graph to make it $P_4$-free.  This is a well-known graph modification problem with applications in computation biology and social network analysis.  
    All current parameterized algorithms use a similar strategy, which is to find a $P_4$ and explore the local structure around it to perform an efficient recursive branching. 
 The best known algorithm achieves running time $O^*(2.303^k)$ and requires an automated search of the branching cases due to their complexity.  

 Since it appears difficult to further improve the current strategy, we devise a new approach using modular decompositions.  We solve each module and the quotient graph independently, with the latter being the core problem.  This reduces the problem to solving on a prime graph, in which all modules are trivial.  We then use a characterization of Chudnovsky et al. stating that any large enough prime graph has one of seven structures as an induced subgraph.  These all have many $P_4$s, with the quantity growing linearly with the graph size, and we show that these allow a recursive branch tree algorithm to achieve running time $O^*((2 + \epsilon)^k)$ for any $\epsilon > 0$.  
 This appears to be the first algorithmic application of the prime graph characterization and it could be applicable to other modification problems.  Towards this goal, we provide the exact set of graph classes $\H$ for which the $\H$-free editing problem can make use of our reduction to a prime graph, opening the door to improvements for other modification problems.
\end{abstract}

\section{Introduction}

Modifying a graph to satisfy a desired property is a fundamental problem in computer science.  Notable examples include clustering applications where edges are added/removed to obtain disjoint cliques~\cite{bocker2012golden,cohen2024combinatorial},  or computing the treewidth of a graph by adding edges to make it chordal~\cite{bouchitte2001treewidth}.  Famously, for any hereditary graph class, i.e., closed under taking induced subgraphs, deciding whether it is enough to modify at most $k$ edges from a graph to make it belong to the class is fixed-parameter tractable (FPT)~\cite{cai1996fixed}.  This means that there is an algorithm whose running time has the form $O(f(k) poly(n)) = O^*(f(k))$\footnote{Unless stated otherwise, $n$ and $m$ denote the number of vertices and edges of the given graph, respectively; $poly(n)$ represents any polynomial function of $n$; $O^*$ hides polynomial factors.}.  In practice, we aim to make $f(k)$ as small as possible, and recent work have studied  this for a variety of hereditary graph classes (see~\cite{crespelle2023survey} for a recent survey).

In this work, we focus on the \textsc{Cograph Deletion} problem, which asks, given a graph $G$ and an integer $k$, whether it is possible to delete at most $k$ edges from $G$ to obtain a cograph.  
Cographs have multiple equivalent definitions, the simplest one being that they are exactly the graphs with no $P_4$, the induced path with four vertices, and have several applications.  They have been widely studied in computational biology, since every cograph can be represented as a so-called cotree which represents an evolutionary scenario~\cite{hellmuth2013orthology,lafond2014orthology,hellmuth2024resolving}, and errors in the data are corrected using edge modifications~\cite{hellmuth2015phylogenomics,lafond2016link}.   
Cographs also have applications in social network studies, as they can be used to represent overlapping communities~\cite{jia2015defining}.
The variant \textsc{Cograph Insertion}, where only edge additions are allowed, is equivalent to the deletion variant since $P_4$ is self-complementary, and \textsc{Cograph Editing} is often used to refer to the variant where both deletions/insertions are allowed.

 There is a simple $O(3^k (n + m))$ time algorithm for \textsc{Cograph Deletion}: find a $P_4$, then recursively branch into the three ways to delete an edge to remove this $P_4$, with each branch reducing the parameter by $1$.  This gives a recursion tree of arity $3$ and depth at most $k$, and the complexity follows from the fact that one can find a $P_4$, if any, in linear time~\cite{habib2005simple}.
 This complexity was subsequently improved using more sophisticated branching rules.  Nastos and Gao~\cite{nastos2012bounded} showed that either a graph is $P_4$-sparse (every induced subgraph on five vertices has at most one $P_4$), or it has one of seven induced subgraphs on five vertices that can be branched on efficiently.  This gave them a time of $O^*(2.562^k)$.  The same technique was used by Liu et al.~\cite{liu2012complexity} to solve \textsc{Cograph Editing} in time $O^*(4.612^k)$.  By diving deeper into the structure of $P_4$-sparse graphs, Tsur~\cite{Tsur:[3]} improved this to $O^*(4.329^k)$ and, for the \textsc{Cograph Deletion} problem, obtained time $O^*(2.303^k)$.  
 
 The latter is the best upper bound at the moment.  Achieving this complexity is a non-trivial feat and requires several combinatorial ideas, but unfortunately the number of branching cases becomes unmanageable by hand.  Tsur's algorithm relies on a python script to infer a set of branching rules with optimal branching factor, making it difficult to verify and implement.  Moreover, all the previous algorithms are based on $P_4$-sparsity and use a similar strategy: find a $P_4$ and look at the small induced subgraphs ``around'' this $P_4$, and use the local structure to find more clever branching rules.
 It would appear that Tsur's algorithm is at the limit of what this local strategy is able to achieve, and novel ideas are required to improve the state of the art.

 We therefore turn to a more global structure with modular decompositions.  A module $M$ in a graph $G$ is a subset of vertices that all have the same neighbors outside of $M$, and it is well-known that graph that are connected and complement-connected admit a vertex partition $\M^* = \{M_1, \ldots, M_p\}$ into maximal strong modules (see next section for definitions).  The quotient $G / \M^*$ is the result of contracting each module $M_i$ into a single vertex.  Previous results on the cograph editing problem, namely those of Guillemot et al.~\cite{guillemot2013non}
 and of~Fritz et al.~\cite{fritz2020cograph}, have established that 
 there is always an optimal deletion set that preserves the modules of the graph, which then suggests the following strategy: (1) compute the modular decomposition $\M^*$; (2) recursively solve each induced subgraph $G[M_i]$ independently; (3) solve the quotient $G / \M^*$, adding a weight to vertices to remember the size of the contracted modules; (4) expand the modifications made on $G / \M^*$ to the original graph $G$.

 However, this does not immediately lead to an improved FPT algorithm.  To achieve this, we observe that the hard part is step (3), solving the quotient $G / \M^*$, which restricts the graph to be \emph{prime}, that is, whose modules  have either one or all of the vertices.  This is again not of immediate help, but recently, Chudnovsky et al.~\cite{Chudnovsky:[1]} provided an unavoidable subgraph characterization of prime graphs (also re-proved a bit later in~\cite{malliaris2018unavoidable}).  They showed that if $G$ is large enough and prime, then it has a large subgraph that induces one of seven structures (or their complements).  These are described later on, but one concrete example is that $G$ must contain either a subdivided $K_{1, c}$, or one of the six other structures (a subdivided $K_{1, c}$ is a star graph with $c$ leaves in which we subdivide each edge).  Here, $c$ depends on the size of the given prime graph and can be arbitrarily large.
 The $K_{1, c}$ has \emph{many} induced $P_4$s, and it is possible to devise branching rules that achieve a time complexity $O^*(2 + o(1))^k$, where $o(1)$ tends to $0$ as $c$ grows.  With some work, we can show that all seven unavoidable structures contain many $P_4$s and allow the same type of efficient branching.  Moreover, although the number of branching cases can be admittedly tedious, it is still manageable by hand and does not require automated search as in Tsur's algorithm.  
 So, overall, we can take the decomposition ideas from prior work, and when solving $G / \M^*$, either we find a large unavoidable subgraph to branch on, or $G / \M^*$ is small we use brute-force on it.  This leads us to a $O((2 + \epsilon)^k)$ time algorithm.  
 To our knowledge, this is the first algorithmic application of the unavoidable subgraph characterization of Chudnovsky et al. for prime graphs.

 The astute reader may notice that the general strategy of solving each $G[M_i]$ and $G / \M^*$ independently does not involve cographs --- the latter are only necessary once we get to solving $G / \M^*$.  This begs the question: for which graph property exactly does this general strategy work?  
 In other words, for which editing problems is it sufficient to solve on a prime graph? 
 To answer this, we present our algorithm in the context of the more general $\HF$ problem, where $\H$ is a (possibly infinite) 
 set of graphs and we must remove all induced copies of graphs in $\H$ using edge operations.  We formalize the aforementioned generic strategy and characterize exactly for which hereditary class the algorithm works: it is correct if, and \emph{only if}, every minimal  graph in $\H$ is prime (where minimal means that none of its proper induced subgraphs is in $\H$).  

 We say that a recursive branching algorithm has branching factor $b$ if it leads to a recursion tree with $O(b^k)$ nodes.  We summarize our findings as follows:
 \begin{enumerate}
     \item 
     Let $\H$ be a possibly infinite set of graphs in which every minimal graph is prime, such that one can check in polynomial time whether a graph $G$ is $\HF$.  Suppose that for the $\HF$ editing problem, where edge deletions and/or insertions are allowed, there is a polynomial-time branching algorithm that, \emph{on prime graphs}, achieves branching factor $b$.  Then the $\HF$-editing problem can be solved in time $O^*(b^k)$ on any graph.

     \item 
     For the \textsc{Cograph Deletion} problem, for any fixed real number $\epsilon > 0$, there is a polynomial-time branching algorithm that, on prime graphs, achieves branching factor $b = 2 + \epsilon$.  Consequently, \textsc{Cograph Deletion} can be solved in time $O^*((2 + \epsilon)^k)$.
 \end{enumerate}

 In particular, point 1. above uses the aforementioned modular decomposition strategy.  We also show that whenever some minimal graph of $\H$ is \emph{not} prime, we can find instances on which this strategy fails.
 This clarifies which $\HF$ modification problems can make use of the prime graph characterizations.  One notable example on which the strategy is applicable is the \textsc{Cluster Editing} problem, which is equivalent to \textsc{$\{P_3\}$-free Editing}.  The unavoidable subgraphs of large prime graphs also contain a large number of $P_3$s, opening the door to improvements over the $O^*(1.618^k)$ time achieved in~\cite{bocker2012golden}, which has stood for more than a decade.
 Our techniques could also be used to improve the complexity of the \textsc{Cograph Editing} problem where both insertions/deletions are allowed.  The current best algorithm from Tsur achieves $O^*(4.329^k)$, and perhaps our approach can lower this to $O^*((4 + \epsilon)^k)$, albeit refinements will be required to avoid an explosion in the number of cases to look at.  Another good candidate for future improvements is the \textsc{Bicluster Editing} problem, which further restricts the prime graphs to be bipartite~\cite{tsur2023faster,lafond2024improved}.
 In the longer term, novel branching ideas or novel prime graph characterizations may then lead to improvements of other $\HF$-editing problems when $\H$ contains prime graphs.

 \medskip 

 \noindent 
 \textbf{Related work.}  
 Aside from the FPT algorithms mentioned above, it was shown by Guillemot et al.~\cite{guillemot2013non} that the cograph edge modification variants admit a kernel with $O(k^3)$ vertices.  The authors also use modular decompositions for this, but not the properties of prime graphs.
 This was improved recently  to a $O(k^2 \log k)$ vertex-kernel by Crespelle et al.~\cite{crespelle2024quasi}, and it remains open to find ways to remove the $\log k$ factor.  
 It is known that \textsc{Cograph Deletion} admits no subexponential time algorithm in $k$ unless the ETH fails~\cite{drange2015exploring}.  However, we are not aware of a SETH lower bound for the problem: it could be that $O^*(2^k)$ is best possible under the hypothesis, making our result almost tight, but current known reductions do not appear to imply this.  
  In terms of approximation, it is to our knowledge widely open whether any of the cograph modification variants admits a constant factor approximation algorithm (note that~\cite{bliznets2018hardness} exclude this for many $\HF$ editing problems, but not cographs).  The best results we are aware of are a $O(d)$-approximation for graphs of maximum degree $d$~\cite{natanzon2001complexity}, and a no-constant factor approximation for \emph{edge-weighted} cograph editing under the Unique Games Conjecture~\cite{dondi2017approximating}. 
 Let us finally mention that~\cite{liu2015edge} also use modular decompositions to devise branching algorithms for other edge-deletion problems, namely for chain graphs and trivially perfect graphs, but do not make use of prime graph characterizations.

 

\section{Preliminary notions}

The graphs in this work are simple, loopless and \emph{vertex-weighted}.  
Such a graph is denoted $G = (V, E, \omega)$, where $V = V(G)$ is the vertex set, $E = E(G)$ is the edge set, and  $\omega : V \rightarrow \mathbb{N}$ assigns a positive integer weight to each vertex. 
If a weighing of a graph $G$ is not specified, we assume that $\omega(u) = 1$ for all $u \in V$. 
We extend $\omega$ to a vertex pair $uv$ and we define $\omega(uv) = \omega(u) \cdot \omega(v)$.  Roughly speaking, vertex weights are used to represent the number of vertices in a module that got contracted into a single vertex when computing quotient graphs.

The complement of $G$, denoted $\overline G$, has $V(\overline{G}) = V(G)$ and two distinct vertices adjacent if and only if they are not adjacent in $G$ (with vertex weights unchanged). A graph is $G$ \emph{connected} if there exists a path between every pair of vertices. It is \emph{complement-connected} if $\overline{G}$ is connected. A \emph{component} of a graph is a connected subgraph of $G$ that is maximal, with respect to the number of vertices and edges. 
The subgraph of $G$ induced by $S \subseteq V(G)$, denoted $G[S]$, is the graph with vertex set $S$ in which $uv$ is an edge if and only if $u,v \in S$ and $uv \in E(G)$.

\paragraph*{Modules and modular decompositions.}
A \emph{module} $M$ of $G$ is a subset of $V(G)$ such that every vertex of $M$ has the same set of neighbors in $V(G) \setminus M$. A module $M$ is \emph{trivial} if $M=V(G)$ or $|M|=1$, and \emph{maximal} if $M \neq V(G)$ and $M$ is not a proper subset of any module other than $V(G)$. A module is \emph{strong} if it does not overlap with any other module, that is, for any other module $M'$ either $M \cap M' = \emptyset$ or one module is contained in the other.  
A graph is \emph{prime} if all of its modules are trivial\footnote{Note that some works require at least four vertices for a graph to be prime, for instance~\cite{hellmuth2024resolving}, which we do not require here.  In particular, a graph consisting of an independent set with two vertices is prime, but with three vertices it is not prime.}.

 A \emph{modular partition} $\M$ of $G$ is a partition of $V(G)$ in which each set $M \in \M$ is a module. 
 A subgraph $G[M]$ induced by a module $M \in \M$ is called a \emph{factor} of $\M$. 
 The \emph{quotient graph} $G/\M = (V_\M, E_\M, \omega_\M)$ is the graph whose set of vertices is $\M$ and in which two vertices $M_1, M_2 \in V(G/\M)$ are adjacent if and only if every vertex of $M_1$ is adjacent to every vertex of $M_2$ in $G$.   Moreover, we define the weights as $\omega_\M(M) = \sum_{v \in M} \omega(v)$ for each $M \in \M$. This is the usual notion of quotient, but when a module $M$ is ``compressed'' to a single vertex, $\omega_{\M}$ keeps track of the total weight of vertices represented by that vertex.  Observe that alternatively, one can obtain $G / \M$ by choosing exactly one vertex per module of $\M$, and taking the subgraph induced by the $|\M|$ chosen vertices (then adjusting weights accordingly).  This works because in a module, all vertices have the same adjacencies in other modules. 
 
  When $G$ is connected and complement-connected, then its maximal strong modules form a unique modular partition, which we call the \emph{modular decomposition} of $G$ and usually denote by $\M^*$.  In this case, the quotient graph $G / \M^*$ is known to be prime \cite{Gallai:[4]}. If $G$ is disconnected, then we define $\M^*$ as the set of the components of $G$ (i.e., each $M \in \M^*$ is the set of vertices of some component of $G$), and if $\overline{G}$ is disconnected, then $\M^*$ is the set of the components of $\overline{G}$. We note that $\M^*$ can be found in linear time~\cite{Tedder:[2]}.

\paragraph*{Graph classes.}
A \emph{graph class} $\mathcal{C}$ is a set of graphs, possibly infinite.  
A graph class $\mathcal{C}$ is \emph{hereditary} if, for any $G \in \mathcal{C}$, every induced subgraph of $G$ is in $\C$.  If $\H$ is a set of 
graphs, the class $\HF$ is the set of all graphs that do not contain an induced subgraph isomorphic to a graph in $\H$.  
It is well-known that a class $\C$ is hereditary if and only if $\C = \HF$ for some set of graphs $\H$ (possibly infinite, see e.g.~\cite[Chapter 2]{Kitaev2015}). 
We may say that a graph $G$ is $\HF$ instead of saying that $G$ is ``in'' $\HF$. 
An hereditary graph class $\HF$ is 
\emph{infinite} if there is an infinity of graphs in $\HF$.
A graph $H \in \H$ is \emph{minimal} if no proper induced subgraph of $H$ is in $\H$.

The graph class \emph{cograph} is the set of graphs with no induced $P_4$, the chordless path on four vertices.  It therefore coincides with the class $\{P_4\}$-free, which we may just write as $P_4$-free.

\paragraph*{Edge modification problems.}
For a graph $G$, we write ${ V(G) \choose 2}$ for the set of unordered pairs of $V(G)$ and call a subset of ${V(G) \choose 2}$ an \emph{editing set}. For $\E \subseteq {V(G) \choose 2}$, we denote by $G \op \E$ the graph with vertex set $V(G \op \E) = V(G)$ and $E(G \op \E) = E(G) \Delta \E$, where $\Delta$ is the symmetric difference.  The pairs in $\E \cap E(G)$ are called \emph{edge deletions}, and those of $\E \setminus E(G)$ are \emph{edge insertions}.  If $G \op \E$ is $\HF$, then $\E$ is an \emph{$\HF$ editing set}.  If $\E$ contains only insertions or only deletions, we may call it an $\HF$ insertion set and $\HF$ deletion set, respectively.  We write $cost(\E)$ for the sum of its edge weights, i.e., $cost(\E) = \sum_{uv \in \E} \omega(u) \omega(v)$.

We use the symbol $\oplus$ to denote the operation of edge insertions and $\ominus$ for edge deletions.  
For $\sigma \subseteq \{\oplus, \ominus\}$, we say that an editing set $\E$ \emph{respects} $\sigma$ if either $\sigma = \{\oplus, \ominus\}$; or $\sigma = \{\oplus\}$ and $\E$ contains only edge insertions; or $\sigma = \{\ominus\}$ and $\E$ contains only edge deletions.  This allows us to define the edge deletion/insertion/both variants of the editing problem.

\medskip
\noindent 
The \HHFE{} problem \\
\noindent 
\textbf{Input.}  A graph $G$ and an integer $k$. \\
\textbf{Question.}  Does there exist an $\HF$ editing set $\E \subseteq {V(G) \choose 2}$ that respects $\sigma$ with $cost(\E) \leq k$?

\medskip 

The variants $\sigma = \{\oplus\}$ and $\sigma = \{\ominus\}$ are called the \HHFI{} and \HHFD{} problems, respectively.
If $\E$ is an $\HF$ editing set of minimum cost that respects $\sigma$, then $\E$ is called an \emph{optimal $\HF$ editing set}.  In this case, we denote $opt(G) = cost(\E)$.

\paragraph*{Safe editing sets and branching factors.}
Given an instance $G$ of \HHFE, a collection $S = \{\E_1, \ldots, \E_p\}$ of non-empty editing sets is \emph{safe} if each of them respects $\sigma$, and if there exist $\E_i \in S$ and an optimal $\HH$-free editing set $\E^*$ of $G$ such that $\E_i \subseteq \E^*$.  Roughly speaking, this means that if we recursively branch into all the editing sets of $S$, one of the resulting graphs can lead to an optimal solution.
The \emph{branching vector} of $S$ is the vector $\beta(S) = (cost(\E_1), \ldots, cost(\E_r))$, and the \emph{branching factor} $b(S)$ of $S$ is the largest real root of the polynomial 
$x^k - \sum_{\E_i \in S} x^{k-cost(\E_i)}$.  It is well-known that a recursive algorithm that branches on each editing set $\E_1, \ldots, \E_p$, where each branch reduces the parameter $k$ by $cost(\E_i)$, has $O^*( b(S)^k)$ nodes in the recursion tree.

Given two branching vectors $\beta(S) = (b_1, b_2, \ldots, b_x)$ and $\beta(S') = (b'_1, b'_2, \ldots, b'_y)$ we say that $\beta(S)$ \emph{dominates} $\beta(S')$ if there is an injective map $f : [y] \rightarrow [x]$ such that $b'_i \geq b_{f(i)}$ for each $i \in [y]$ (we use the notation $[n] = \{1, 2, \ldots, n\}$).  The intuition is that each entry of $\beta(S')$ can be paired with an entry of $\beta(S)$ such that the latter reduces the parameter by \emph{less}, and thus $\beta(S)$ has a larger recursion tree.  For example, $(1, 1, 2, 3, 4)$ dominates $(1, 2, 3, 5)$.  Observe that when this is the case, then $b(S') \leq b(S)$, so in our worst-case analysis we usually only consider the dominant branching vectors.  



\section{The generic \HHFE{} Algorithm}

The idea of our \HHFE{} algorithm is as follows:
\begin{enumerate}
    \item 
    If $G$ has at most $C$ vertices, where $C$ is a constant to be specified later, then solve $G$ optimally using any brute-force strategy.

    \item 
    Otherwise, compute the modular decomposition $\M^*$ of $G$.  Solve the quotient $G / \M^*$ optimally and make the corresponding modifications in $G$.  Then, for each $M_i \in \M^*$, solve $G[M_i]$ independently.
\end{enumerate}
 
We note that although we focus on cograph deletion problems in this paper, there is nothing specific to cographs in this general strategy, and it could be applicable to other graph classes, beyond cographs as done in~\cite{guillemot2013non,fritz2020cograph}.  
To describe the algorithm more precisely, let $\M$ be a modular partition of $G$, and let $\E$ be an editing set of $G/\M$.  
We define
\[
    ext(G,\M,\E) = \{xy \mid M_1 M_2 \in \E, x \in M_1, y \in M_2\},
\]
which is the editing set of $G$ obtained by \emph{extending} the modifications on the quotient graph $G/\M$ to $G$.
Let us note that, as argued later, our weights of quotient graphs ensure that $cost(\E) = cost(ext(G, \M, \E))$.

Our \HHFE{} algorithm uses two black-box functions:
\begin{itemize}
    \item 
    \textbf{solve} takes as input a vertex-weighted graph $G$ and returns an optimal editing set of $G$. 
    
    \item 
    \textbf{branch} takes as input a vertex-weighted graph $G$ and returns a safe set $S = \{\E_1, \ldots, \E_p\}$ of editing sets of $G$ that each respect $\sigma$.  The \emph{branching vector of $S$} is $(|\E_1|, \ldots, |\E_p|)$.  The \textbf{branch} function may return sets with  different branching vectors depending on the input graph.  
    The \emph{branching factor} of \textbf{branch} is the maximum branching factor of all branching vectors possibly returned by the function.
    
\end{itemize}
The intent is that when solving the \HHFE{} problem for a specific graph class $\H$, one only needs to define these two routines and feed them to the general purpose algorithm.  
The \textbf{solve} routine is used in the first step above for small enough graphs, and the \textbf{branch} routine is used in the second step when solving $G /\M^*$ or a $G[M_i]$ factor.  In fact, it then only suffices to describe the \textbf{branch} routine on a prime graph.  Algorithm~\ref{alg:alg} describes the procedure, which we simply call $f$, in more detail.
Note, due to how the algorithm works, in particular Line~\ref{line:else-solve-modules}, it is not enough to return yes or no to the decision problem --- it is important to return $opt(G)$ for the parent calls.  We use $INF$ to denote infinity and assume that adding $INF$ to any number yields $INF$.




\vspace{3mm}
\begin{algorithm}[H]
\setstretch{1.35}
\SetAlgoNoLine
\SetKwProg{Fn}{function}{}{}
\Fn{$f$($G=(V,E,\omega),$ $k$)}{
    \lIf{$k \geq 0$ and $G$ is $\HF$}{\Return $0$} \label{line:is-h-free}
    \lIf{$k \leq 0$}{\Return $INF$}  \label{line:k-zero}
    \lIf{$|V|<C$}{\Return \textbf{solve}$(G,k)$} \label{line:small-graph}
    Let $e=0$ and let $\M^*$ the modular decomposition of $G$\label{line:get-modular}\;
    \lIf{$|V(G/\M^*)|\geq C$ and $G / \M^*$ is not $\HH$-free}{$e=\min\limits_{\E\in \mathbf{branch}(G/\M^*)} f(G \Delta ext(G,\M^*,\E),k-cost(ext(G,\M^*,\E))) + cost(ext(G,\M^*,\E))$} \label{line:if-solve-gm}
    \lElse{$e= \mathbf{solve}(G/\M^*, k) +\sum_{M\in \M^*} f(G[M],k)$}  \label{line:else-solve-modules}
   \lIf{$e\leq k$}{\Return $e$}  \label{line:good-k}
    \Return $INF$ \label{line:bad-k}\;
 }
\caption{FPT \HHFE{} on a vertex-weighted graph $G$.}
\label{alg:alg}
\end{algorithm}
\vspace{3mm}

We now state the main theorem of this section. 
Here, we say that the algorithm is \emph{correct} if, given $G$ and $k$, either $opt(G) \leq k$ and the algorithm returns $opt(G)$, or $opt(G) > k$ and the algorithm returns $INF$.

\begin{theorem}\label{thm:algocorrect}
    Suppose that $\HF$ is an infinite graph class.  If all minimal graphs in $\H$ are prime, then Algorithm~\ref{alg:alg} is correct for any constant $C$.  Conversely, if some minimal graph in $\H$ is not prime, then for any constant $C$ there exist instances on which Algorithm~\ref{alg:alg} fails.

    Moreover, suppose that one can verify whether a given graph $G$ with $n$ vertices is $\HF$ in time $O(g(n))$, that \textbf{branch} runs in time $O(h(n))$, and that \textbf{branch} has branching factor $b$.  
Then Algorithm~\ref{alg:alg} runs in time $O(n \cdot b^k \cdot (g(n) + h(n) + n + m))$. 
\end{theorem}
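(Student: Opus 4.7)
The plan is to handle the three claims in turn. For part (1), the central tool is the decomposition identity
\[
opt(G) \;=\; opt(G/\M^*) \;+\; \sum_{M \in \M^*} opt(G[M]).
\]
The $\leq$ direction is constructive: combining an optimal quotient editing set $\E_Q$ (extended to $G$ via $ext$) with an optimal $\E_i$ for each induced factor $G[M_i]$ yields an $\H$-free editing set of the desired cost. Its $\H$-freeness uses primeness crucially: every $M \in \M^*$ remains a module of $G \Delta \E^*$, so any minimal induced $H^* \in \H$ in $G \Delta \E^*$, being prime by assumption, has $V(H^*) \cap M$ a trivial module of $H^*$ for every $M$; hence $H^*$ is either contained in a single factor (contradicting $\H$-freeness of $G[M_i] \Delta \E_i$) or projects to at most one vertex per module (contradicting $\H$-freeness of $G/\M^* \Delta \E_Q$). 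The $\geq$ direction is the exchange argument of~\cite{guillemot2013non,fritz2020cograph}: any optimal $\E^*$ can be transformed, without increasing cost, into one whose bipartite edits on each module pair are uniform, with the same primeness reasoning verifying that this transformation preserves $\H$-freeness. Once the identity is established, correctness of Algorithm~\ref{alg:alg} follows by induction on $|V(G)|+k$: the base cases (lines~\ref{line:is-h-free}, \ref{line:k-zero}, \ref{line:small-graph}) are immediate; line~\ref{line:if-solve-gm} uses the safety of \textbf{branch} together with the inductive hypothesis; line~\ref{line:else-solve-modules} uses the identity combined with induction on each factor.

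For part (2), I build an explicit failure instance from any minimal non-prime $H^* \in \H$ with a non-trivial strong module $N$. The plan is to construct $G$ with $|V(G)| \geq C$ containing an induced copy of $H^*$ such that the vertices of $N$ are confined to a single module of $\M^*(G)$ while the vertices of $V(H^*) \setminus N$ occupy other modules. This is achieved by attaching to $H^*$ a suitable padding (for instance, blowing up $N$ into a larger twin class of the same clique/independent-set type, or attaching a short path to a vertex of $V(H^*) \setminus N$ to break unwanted symmetries in $\M^*(G)$), chosen so that $|V(G)| \geq C$. By construction, every induced factor $G[M]$ of $\M^*(G)$ is either a singleton, a clique, an independent set, or a proper induced subgraph of $H^*$ --- all $\H$-free by minimality of $H^*$ --- and the quotient $G/\M^*$ is also $\H$-free, since a hypothetical induced $H^*$ in the quotient would lift, via representative selection in $G$, to an induced copy of $H^*$ using at most one vertex of the expanded module in place of the whole of $N$, thereby producing $H^*$ inside a proper induced subgraph of itself, contradicting minimality. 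Algorithm~\ref{alg:alg} therefore enters line~\ref{line:else-solve-modules}, every recursive call returns $0$, and it outputs $0 < 1 \leq opt(G)$.

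For part (3), let $T(G, k)$ denote the number of nodes in the recursion tree rooted at $f(G, k)$. I will show by induction on $|V(G)|+k$ that $T(G, k) \leq \alpha \cdot |V(G)| \cdot b^k$ for a suitable constant $\alpha$. In the branching case of line~\ref{line:if-solve-gm}, the defining relation $\sum_i b^{-cost(\E_i)} \leq 1$ for the branching factor $b$ yields $T(G, k) \leq 1 + \alpha |V(G)| b^k$, and absorbing the additive constant into $\alpha$ preserves the bound. In the modular-decomposition case of line~\ref{line:else-solve-modules}, the identity $\sum_M |M| = |V(G)|$ gives $T(G, k) \leq 1 + \sum_M \alpha |M| b^k = 1 + \alpha |V(G)| b^k$. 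Each recursive call performs $O(g(n) + h(n) + n + m)$ work --- the $\H$-free test, the linear-time modular decomposition~\cite{Tedder:[2]}, the \textbf{branch} subroutine, and a brute-force \textbf{solve} call on a quotient that is either of constant size (at most $C$) or already $\H$-free (returning $0$ after the $\H$-free test). Multiplying the work per node by $T(G, k) = O(n \cdot b^k)$ yields the claimed $O(n \cdot b^k \cdot (g(n) + h(n) + n + m))$ bound. The main obstacle throughout is the exchange argument at the heart of part (1): showing that uniformizing the bipartite edits between two modules cannot introduce a new minimal induced forbidden subgraph. The delicate case is when the cheaper uniform replacement appears to create a new prime $H^* \in \H$ straddling the module pair; primeness of $H^*$ is precisely what rules this out, since $V(H^*) \cap M_i$ and $V(H^*) \cap M_j$ must each be trivial modules of $H^*$, reducing the supposed new obstruction to one already present either within one factor or within the quotient --- both ruled out by the $\H$-freeness of the original edits.
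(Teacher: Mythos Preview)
Your treatment of parts (1) and (3) matches the paper's approach: the decomposition identity $opt(G) = opt(G/\M^*) + \sum_M opt(G[M])$ proved via the primeness-of-minimal-obstructions argument and the module-preserving exchange argument, followed by induction on the recursion tree, is exactly what the paper does (Lemmas~\ref{lem:module-h-free}--\ref{lem:safe-transfer}). The minor issue in part (3) of absorbing the ``$+1$'' into $\alpha$ during the induction is handled in the paper by bounding the number of \emph{leaves} first and then doubling; your version needs the same fix but is otherwise fine.

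Part (2), however, has a genuine gap. Your construction is only a sketch (``a suitable padding \ldots\ blowing up $N$ into a larger twin class of the same clique/independent-set type''), and the details you omit are exactly where the difficulty lies. First, the non-trivial module $N$ of $H^*$ need not be a clique or an independent set, so ``same type'' is undefined. Second, even when you blow something up to size $\geq C$, the resulting factor must itself be $\HH$-free for the algorithm to wrongly return $0$; you assert this follows ``by minimality of $H^*$'', but a clique or independent set of size $C$ is not a proper induced subgraph of $H^*$, so minimality does not apply. Showing that arbitrarily large cliques (or independent sets) are $\HH$-free is precisely where the hypothesis that $\HH$-free is \emph{infinite} enters, via a Ramsey argument --- your proof never invokes infiniteness. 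Third, your justification that the quotient is $\HH$-free (``producing $H^*$ inside a proper induced subgraph of itself'') is circular as written: after padding, $G$ has more vertices than $H^*$, so an induced $H^*$ in $G/\M^*$ does not automatically sit inside a proper induced subgraph of $H^*$.

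The paper's proof of this direction (Lemma~\ref{lem:alg-incorrect}) is substantially more delicate than your sketch suggests. It splits into cases depending on whether the minimal non-prime $H$ is edgeless, complete, or neither; in the general case it replaces a single vertex of a smallest module by an arbitrary large $\HH$-free graph $F$ (whose existence uses infiniteness), and then must still handle the possibility that the enlarged module is \emph{not} $\HH$-free, via a further minimality argument and a three-way subcase analysis on connectivity. Your plan does not anticipate any of these obstacles.
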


The rest of the section builds the necessary tools to prove the theorem, which has two non-obvious components.  First, that the algorithm does return an optimal editing set if the minimal graphs of $\H$ are prime.  Second, that if some minimal graph of $\H$ is not prime, then the algorithm \emph{fails}, no matter how large the constant $C$ is.  
Let us begin with the easier properties, by emphasizing that most of the following intermediate results apply to any modular partition $\M$, not necessarily the modular decomposition $\M^*$.  The latter is used to guarantee that $G / \M^*$ is a prime graph later on.
 
\begin{lemma}\label{lem:module-h-free}
    Suppose that all minimal graphs of $\HH$ are prime.  
    Let $\M$ be a modular partition of graph $G$.
    If $G / \M$ is $\HH$-free and $G[M_i]$ is also $\HH$-free for every $M_i \in \M$, then $G$ is $\HH$-free.
\end{lemma}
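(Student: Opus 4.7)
The plan is to argue by contradiction: suppose $G$ contains an induced copy of some graph in $\H$, pick a minimal such copy, and then use primality together with modularity to push the copy either into a single factor $G[M_i]$ or into the quotient $G/\M$, each contradicting the hypotheses.

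More precisely, I would proceed as follows. Assume $G$ is not \HF{}, and let $S \subseteq V(G)$ be a \emph{minimum-size} subset such that $G[S]$ contains an induced subgraph isomorphic to some graph in $\H$; write $H = G[S]$. By minimality of $|S|$, no proper induced subgraph of $H$ lies in $\H$, so $H$ is itself a minimal graph of $\H$ and hence, by hypothesis, prime. For each $M_i \in \M$ set $S_i = S \cap M_i$. The first key observation is that each $S_i$ is a module of $H$: any two $u, v \in S_i$ have the same neighbors in $V(G) \setminus M_i$ (because $M_i$ is a module of $G$), and in particular the same neighbors in $S \setminus S_i$. Since $H$ is prime, each $S_i$ must be a trivial module of $H$, meaning $|S_i| \leq 1$ or $S_i = S$.

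Now split into two cases. If some $S_i = S$, then $S \subseteq M_i$, so $H = G[S]$ is an induced subgraph of $G[M_i]$, contradicting that $G[M_i]$ is \HF{}. Otherwise every $S_i$ has size at most $1$, so $S$ contains at most one vertex per module of $\M$. Because two vertices lying in different modules of $\M$ are adjacent in $G$ exactly when their modules are adjacent in $G/\M$, the map sending each nonempty $S_i$ to the corresponding vertex of $G/\M$ realizes $H$ as an induced subgraph of $G/\M$; this contradicts the assumption that $G/\M$ is \HF{}.

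The only genuinely nontrivial point is the first case analysis, namely that each $S_i$ inherits the module property from $M_i$; everything else is bookkeeping. Note also that this argument is exactly where the ``minimal graphs of $\H$ are prime'' hypothesis is used, foreshadowing that dropping it should indeed permit counterexamples, as the converse direction of Theorem~\ref{thm:algocorrect} will later show.
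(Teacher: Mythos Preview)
Your proposal is correct and follows essentially the same argument as the paper: take a minimum-size forbidden induced subgraph $H$, deduce it is minimal and hence prime, observe that each $S\cap M_i$ is a module of $H$ and therefore trivial, and conclude by pushing $H$ into some $G[M_i]$ or into $G/\M$. The only cosmetic difference is that the paper phrases the case split in the contrapositive order (ruling out the two trivial-module situations first and then exhibiting a non-trivial module), whereas you first note all the $S_i$ are modules and then split; the content is identical.
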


\begin{proof}
    Assume for contradiction that $G$ is not $\H$-free, i.e., there is $X \subseteq V(G)$ such that $G[X]$ is (isomorphic to) a graph in $\HH$.  Choose such an $X$ of minimum size.
    Thus $G[X]$ is a minimal graph in $\HH$, since otherwise, some proper subset of $X$ would induce a smaller graph that is in $\HH$.  So by assumption, $G[X]$ is prime.  Observe that $X$ cannot be entirely contained in a module $M_i$ of $\M$, since $G[M_i]$ is $\HH$-free.  
    Likewise, $X$ cannot contain one vertex or less per module of $\M$: if that was the case, then $G[X]$ would also be an induced subgraph of $G / \M$, which is assumed to be $\H$-free.
Thus we may assume that $X$ contains at least two vertices of some module $M_i \in \M$, plus some vertices from modules other than $M_i$. 
But then, all the vertices in $X \cap M_i$ have the same neighbors in $X \setminus M_i \neq \emptyset$.  So, $X \cap M_i$ is a non-trivial module of $G[X]$, and so $G[X]$ is not prime, contradicting our assumption.  It follows that $G$ is $\HF$.
\end{proof}

We next show that we may assume that modules of the chosen modular partition $\M$ remain modules after editing.  A similar statement was shown in~\cite[Lemma 2]{guillemot2013non} for cographs, here we generalize to a larger number of $\HF$ classes, and to vertex-weighted graphs.  Since a similar fact already appears in~\cite{guillemot2013non}, we defer the proof to the appendix and provide a sketch of the proof.

\begin{lemma}\label{lem:keep-modules}
Suppose that all minimal graphs of $\HH$ are prime. Then for any vertex-weighted graph $G = (V, E, \omega)$ and any modular partition $\M$ of $G$, there exists an optimal \HHF{} editing set $\E$ such that between any two distinct modules $M_1, M_2 \in \M$, either $G \op \E$ contains all edges between $M_1$ and $M_2$, or none.
\end{lemma}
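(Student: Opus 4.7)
The plan is a probabilistic exchange argument. I would start with any optimal $\HF$ editing set $\E_0$ respecting $\s$ and transform it into a module-respecting editing set $\E$ of no greater cost. For each module $M_i \in \M$, I pick a representative $v_i^* \in M_i$ independently at random, with $v_i^*$ chosen equal to $v$ with probability $\omega(v)/\omega(M_i)$. Then $\E$ is defined by (i) taking exactly the same modifications as $\E_0$ inside each module, so that $\E \cap \binom{M_i}{2} = \E_0 \cap \binom{M_i}{2}$; and (ii) between any two distinct modules $M_i, M_j$, placing all pairs $xy$ (with $x \in M_i$ and $y \in M_j$) as edges of $G \op \E$ if $v_i^* v_j^* \in E(G \op \E_0)$, and as non-edges otherwise.

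Three things then need verification. For $\HF$-ness, by construction each $M_i$ is a module of $G \op \E$, so the quotient $(G \op \E)/\M$ is well-defined and is isomorphic to the subgraph of $G \op \E_0$ induced by the representatives $\{v_1^*, \ldots, v_p^*\}$; this subgraph is $\HF$ as an induced subgraph of an $\HF$ graph. Each factor $(G \op \E)[M_i]$ equals $(G \op \E_0)[M_i]$, and is therefore also $\HF$. Applying Lemma~\ref{lem:module-h-free} then yields that $G \op \E$ is $\HF$. That $\E$ respects $\s$ follows because within modules this is inherited from $\E_0$, and between two modules $M_i, M_j$ the original adjacency in $G$ is uniform (all edges or none), so $\E$ performs at most one type of edit on the pair; moreover, when $\s = \{\ominus\}$ we have $E(G \op \E_0) \subseteq E(G)$, so if $G$ has no edges between $M_i$ and $M_j$ then $v_i^* v_j^* \notin E(G \op \E_0)$ and $\E$ inserts nothing on the pair, and the case $\s = \{\oplus\}$ is symmetric.

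The main step is the cost bound $E[cost(\E)] \leq cost(\E_0)$. Since within-module costs are identical, I would compare between-module costs pair by pair. Fix $M_i, M_j$ and assume first that $G$ contains all edges between them (the no-edges case is symmetric). Let $N_{ij}$ be the total weight of pairs between $M_i$ and $M_j$ deleted by $\E_0$; this is exactly $\E_0$'s cost on the pair. Under the weighted random sampling, the probability that $v_i^* v_j^* \notin E(G \op \E_0)$ equals $N_{ij}/(\omega(M_i)\omega(M_j))$; in that event $\E$ pays $\omega(M_i)\omega(M_j)$ to delete all edges between $M_i$ and $M_j$, and otherwise pays $0$. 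The expected contribution on the pair is thus $N_{ij}$, matching $\E_0$. Summing over all module pairs gives $E[cost(\E)] = cost(\E_0)$, and by the probabilistic method some realization of representatives achieves $cost(\E) \leq cost(\E_0)$; optimality of $\E_0$ forces equality.

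The main obstacle is making this cost equality go through: natural deterministic choices of representative (heaviest vertex, arbitrary vertex, or a majority-based rule between each pair of modules) do not yield the right bound on a per-pair basis. The trick is the weighted-random sampling, which is tuned precisely so that the per-pair expectations coincide with $\E_0$'s per-pair costs, at which point the exchange argument closes.
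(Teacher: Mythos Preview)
Your argument is correct and is essentially the same as the paper's proof, only phrased probabilistically rather than combinatorially. The paper introduces representatives as pairs $\langle u,t\rangle$ with $t\in[\omega(u)]$ and averages the between-module cost over all such representative sets; this is exactly uniform sampling over a multiset in which each vertex $u$ appears $\omega(u)$ times, i.e.\ your weighted sampling, and the paper's inequality ``average $\geq$ minimum'' is your probabilistic-method step.
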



\begin{proof}
Denote $\M = \{M_1, \ldots, M_p\}$.  For a module $M_i \in \M$ we write $\omega(M_i) = \sum_{u \in M_i} \omega(u)$. 
For the duration of the proof, a 
\emph{representative} of a module $M_i$ is a pair $\langle u, t \rangle$ such that $u \in M_i$ and $t \in [\omega(u)]$. 
The intuition is that we choose one vertex $u$ to represent $M_i$, but give ``more ways'' to choose $u$ according to $\omega(u)$.
A \emph{representative set} is a set $R = \{ \langle u_1, t_1 \rangle, \ldots, \langle u_p, t_p \rangle \}$, where for each $i \in [p]$, $\langle u_i, t_i \rangle$ is a representative of module $M_i$.  For $i \in [p]$, we denote by $R(i)$ the vertex $u_i$ that belongs to the unique representative $\langle u_i, t_i \rangle$ of $M_i$.  
Let $\R$ denote the set of all possible representative sets, and note that $|\R| = \prod_{i \in [p]}\omega(M_i)$.

Let $\E$ be an optimal $\HF$-Editing set of $G$, and denote $\hat{G} = G \op \E$. 
For a representative set $R \in \R$, 
obtain a graph $\hat{G}|_R$ by starting from $\hat{G}$ and applying the following:\\
- for each $M_i, M_j \in \M$ such that $R(i) R(j) \in E(\hat{G})$, add all edges between $M_i$ and $M_j$; \\
- for each $M_i, M_j \in \M$ such that $R(i) R(j) \notin E(\hat{G})$, remove all edges between $M_i$ and $M_j$.

In other words, just use the representative vertices to dictate the presence/absence of all or none of the edges between modules.
Observe that for any $R \in \R$, any $M_i$ of $\M$ is a module of $\hat{G}|_R$, since by construction all vertices of $M_i$ are made to have the same neighbors in $V(G) \setminus M_i$.  Therefore, the graph $\hat{G}|_R / \M$ is well-defined.  Moreover, note that $\hat{G}|_R / \M$ is isomorphic to $\hat{G}[\{R(1), \ldots, R(p)\}]$, and since $\hat{G}[\{R(1), \ldots, R(p)\}]$ is $\HH$-free by heredity, we have that $\hat{G}|_R / \M$ is also $\HH$-free.  Also, for any $M_i \in \M$, $\hat{G}|_R[M_i]$ is $\HH$-free because our construction of $\hat{G}|_R$ does not modify edges of $\hat{G}$ within the modules.  
We therefore satisfy all the conditions of Lemma~\ref{lem:module-h-free} and we deduce that $\hat{G}|_R$ is $\HH$-free.

We argue that there is some $R \in \R$ such that the cost of edge modifications needed to transform $G$ into $\hat{G}|_R$ is at most $cost(\E)$.  
From now on, denote $\gamma = \sum_{i \in [p]} cost(E(G[M_i]) \Delta E(\hat{G}[M_i]))$, which is the cost of edge modifications from $G$ to $\hat{G}$ whose two ends are inside a module of $\M$ (recall that $\Delta$ is the symmetric difference).  Notice that $\gamma$ is equal to
$\sum_{i \in [p]} cost(E(G[M_i]) \Delta E(\hat{G}|_R[M_i]))$ for any $R \in \R$, since again $\hat{G}|_R$ does not alter the edges and non-edges of $\hat{G}$ within the $M_i$ modules.  

For $u, v \in V(G)$ that belong to two distinct modules of $\M$, denote 
\begin{align*}
\one_{u,v} = \begin{cases}
	1 & \mbox{if $uv \in E(G) \Delta E(\hat{G})$} \\
	0 & \mbox{otherwise}
\end{cases}
\end{align*}
which is just an indicator function for whether $uv$ was modified from $G$ to $\hat{G}$.

Notice that for $R \in \R$ and two modules $M_i, M_j \in \M$, if $\one_{R(i) R(j)} = 1$, then the edge $R(i) R(j)$ between $M_i$ and $M_j$ was  inserted/deleted to obtain $\hat{G}$.  There are only edges or only non-edge between $M_i$ and $M_j$ in $G$, so in $E(G) \Delta E(\hat{G}|_R)$, that same modification is done between all pairs of vertices in $M_i, M_j$.
Thus, the cost of modifications between $M_i$ and $M_j$ in $\hat{G}|_R$ is $\omega(M_i) \omega(M_j)$. 
If $\one_{R(i), R(j)} = 0$, then by the same reasoning there is nothing to edit between $M_i$ and $M_j$ to transform $G$ into $\hat{G}|_R$.

For $R \in \R$, denote the quantity 
\[
cost(R) = \sum_{1 \leq i < j \leq p} \omega(M_i)\omega(M_j) \cdot \one_{R(i),R(j)}
\]
which, by the explanations from the previous paragraph, is the cost of edges to modify between modules of $\M$ to transform $G$ into $\hat{G}|_R$.
Let $R^* \in \R$ be a set of representatives that minimizes $cost(R^*)$.  
Note that transforming $G$ into $\hat{G}|_{R^*}$ requires a cost of $\gamma + cost(R^*)$ edge modifications.  

Now consider the cost of edge modifications needed to transform $G$ into $\hat{G}$, which we count by summing over all sets of representatives.  That is, we can express 
\begin{align*}
	cost(\E) = cost(E(G) \Delta E(\hat{G})) = \gamma + \sum_{R \in \R} \sum_{1 \leq i < j \leq p} \frac{\omega(R(i)) \omega(R(j)) \cdot \one_{R(i), R(j)}}{\alpha(R(i), R(j))}
\end{align*}
where here, $\alpha(R(i), R(j))$ is the number of times that the value $\omega(R(i)) \omega(R(j) \cdot \one_{R(i), R(j)}$ appears in the double summation, to compensate for over-counting of the modified edges (that is, each unmodified edge adds $0$ to the double-summation, whereas each modified edge is added $\alpha(R(i), R(j))$ times).  
This value $\alpha(R(i), R(j))$ is equal to the number of elements of $\R$ that contain pairs of the form $\langle R(i), t_i \rangle$ and $\langle R(j), t_j \rangle$, where $t_i \in [\omega(R(i))]$ and $t_j \in [\omega(R(j))]$.  There are $\omega(R(i)) \omega(R(j))$ ways to fix those two, and counting the number of ways to choose representatives in other modules, we infer that 
\[
\alpha(R(i), R(j)) = \omega(R(i)) \omega(R(j)) \cdot \frac{\prod_{h \in [p]} \omega(M_h)}{\omega(M_i) \omega(M_j)}.
\]  
It follows that 
\begin{align*}
	cost(E(G) \Delta E(\hat{G})) &= \gamma + \sum_{R \in \R} \sum_{1 \leq i < j \leq p} \omega(M_i) \omega(M_j) \frac{ \omega(R(i)) \omega(R(j)) \cdot  \one_{R(i), R(j)}}{ \omega(R(i)) \omega(R(j)) \cdot \prod_{h \in [p]} \omega(M_h)} \\
					 &= \gamma + \frac{1}{\prod_{h \in [p]} \omega(M_h)} \sum_{R \in \R} \sum_{1 \leq i < j \leq p} \omega(M_i) \omega(M_j) \one_{R(i), R(j)} \\
					 &\geq \gamma + \frac{1}{\prod_{h \in [p]} \omega(M_h)} \sum_{R \in \R} cost(R^*) \\
                     &= \gamma + \frac{1}{\prod_{h \in [p]} \omega(M_h)} \prod_{h \in [p]} \omega(M_h) \cdot cost(R^*) \\
					 &= \gamma + cost(R^*)
\end{align*}
where, for the third line, we used the fact that $R^*$ minimizes the quantity $\sum_{1 \leq i < j \leq p} \omega(M_i)\omega(M_j) \one_{R(i), R(j)}$, and on the fourth line we used the fact that the number of elements in $\R$ is $\prod_{h \in [p]} \omega(M_h)$. 

This shows that transforming $G$ into $\hat{G}|_{R^*}$ requires no more modifications than $\hat{G}$, and $E(G) \Delta E(\hat{G}|_{R^*})$ is also an optimal editing set.  This concludes the proof, since $\hat{G}|_{R^*}$ has the form required by our statement.
\end{proof}

We can now argue that if all minimal graphs of $\H$ are prime, then we may indeed decompose the problem into optimizing $G /\M^*$ and each $G[M_i]$ independently.  Note that it is necessary to have vertex weights to be able to use $opt(G / \M)$ in the statement below.

\begin{lemma}\label{lem:opt-g}
    Suppose that all minimal graphs of $\HH$ are prime.  Let $G$ be a graph and let $\M$ be any modular partition of $G$.  
    Then $opt(G) = opt(G/\M) + \sum_{M_i \in \M} opt(G[M_i])$.

    Moreover, for any optimal editing set $\E$ of $G/\M$, there exists an optimal editing set $\E^*$ of $G$ such that $ext(G, \M, \E) \subseteq \E^*$.
\end{lemma}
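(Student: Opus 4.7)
The plan is to prove the equality by showing both inequalities $opt(G) \leq opt(G/\M) + \sum_{M_i \in \M} opt(G[M_i])$ and $opt(G) \geq opt(G/\M) + \sum_{M_i \in \M} opt(G[M_i])$, using Lemma~\ref{lem:module-h-free} for the first and Lemma~\ref{lem:keep-modules} for the second. The ``moreover'' statement will then follow directly from the construction used in the upper-bound direction.

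For the upper bound, let $\E_Q$ be an optimal $\HF$ editing set of $G/\M$ that respects $\sigma$, and for each $M_i \in \M$ let $\E_i$ be an optimal $\HF$ editing set of $G[M_i]$ that respects $\sigma$. Define $\E^* = ext(G, \M, \E_Q) \cup \bigcup_{i} \E_i$, which respects $\sigma$ since all components do. The key observation is that $\M$ remains a modular partition of $G \op \E^*$, because $ext(G, \M, \E_Q)$ flips all or no edges between each pair of modules, while each $\E_i$ only modifies edges inside a single module. Consequently, $(G \op \E^*)/\M = (G/\M) \op \E_Q$ is $\HF$ and $(G \op \E^*)[M_i] = G[M_i] \op \E_i$ is $\HF$ for each $i$, so Lemma~\ref{lem:module-h-free} gives that $G \op \E^*$ is $\HF$. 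For the cost, observe that for each $M_i M_j \in \E_Q$ the extension contributes $\sum_{x \in M_i, y \in M_j} \omega(x)\omega(y) = \omega_\M(M_i)\omega_\M(M_j)$, which exactly equals the cost of the pair $M_iM_j$ in $G/\M$; therefore $cost(ext(G, \M, \E_Q)) = cost(\E_Q) = opt(G/\M)$, yielding $opt(G) \leq cost(\E^*) = opt(G/\M) + \sum_i opt(G[M_i])$.

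For the lower bound, apply Lemma~\ref{lem:keep-modules} to obtain an optimal $\HF$ editing set $\E$ of $G$ such that between any two modules of $\M$, either all or no edges are present in $G \op \E$. Partition $\E$ into $\E_{out}$ (pairs with endpoints in distinct modules) and $\E_{in,i}$ (pairs with both endpoints in $M_i$) for each $i$. By the all-or-nothing property, there is a well-defined editing set $\E'_Q$ on $G/\M$ with $M_i M_j \in \E'_Q$ iff the edge status between $M_i$ and $M_j$ differs in $G$ versus $G \op \E$; then $\E_{out} = ext(G, \M, \E'_Q)$, and by the same weight computation as above $cost(\E_{out}) = cost(\E'_Q)$. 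Since $(G/\M) \op \E'_Q$ is isomorphic to an induced subgraph of $G \op \E$ obtained by choosing one vertex per module, it is $\HF$ by heredity, so $\E'_Q$ is an $\HF$ editing set of $G/\M$ respecting $\sigma$, and $cost(\E'_Q) \geq opt(G/\M)$. Likewise, each $(G \op \E)[M_i] = G[M_i] \op \E_{in,i}$ is $\HF$, giving $cost(\E_{in,i}) \geq opt(G[M_i])$. Summing gives $opt(G) = cost(\E) = cost(\E_{out}) + \sum_i cost(\E_{in,i}) \geq opt(G/\M) + \sum_i opt(G[M_i])$.

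Finally, for the moreover claim, given any optimal editing set $\E_Q$ of $G/\M$, choose any optimal $\E_i$ for each $G[M_i]$ and set $\E^* = ext(G, \M, \E_Q) \cup \bigcup_i \E_i$ as in the upper-bound argument. Then $\E^*$ is $\HF$ by Lemma~\ref{lem:module-h-free}, respects $\sigma$, and has cost $opt(G/\M) + \sum_i opt(G[M_i]) = opt(G)$, so it is optimal and contains $ext(G, \M, \E_Q)$ by construction. The only non-trivial step is reconciling the weighted cost on $G/\M$ with the unweighted cost of its extension, which is handled directly by the definition $\omega_\M(M) = \sum_{v \in M} \omega(v)$; all remaining ingredients are straightforward applications of the two preceding lemmas.
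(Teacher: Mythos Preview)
Your proof is correct and follows essentially the same approach as the paper: both directions use the same two lemmas (Lemma~\ref{lem:module-h-free} for the upper bound via the explicit construction $ext(G,\M,\E_Q)\cup\bigcup_i\E_i$, and Lemma~\ref{lem:keep-modules} for the lower bound via an optimal module-preserving solution), and the ``moreover'' part is deduced identically from the upper-bound construction. Your write-up is in fact slightly more explicit than the paper's in handling the cost equality $cost(ext(G,\M,\E_Q))=cost(\E_Q)$ and in checking that $\sigma$ is respected throughout.
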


\begin{proof}
    First let $\E^*$ be an optimal editing set of $G$ such that for distinct $M_i, M_j \in \M$, either all edges or none are present in $G \op \E^*$. Such an $\E^*$ exists by Lemma~\ref{lem:keep-modules}.  We argue that $cost(\E^*) \geq opt(G / \M) + \sum_{M_i \in \M} opt(G[M_i])$.   
    Since the $M_i$ modules are vertex-disjoint, $\E^*$ must have a cost of at least $opt(G[M_i])$ edge modifications whose endpoints are both in $M_i$, for each $M_i \in \M$.  
    These modifications cannot remove any copy of an $\HH$ graph that contains one vertex per module, and for we claim that at least $opt(G / \M)$ additional edge modifications are needed.  
    Indeed, it suffices to observe that because either all or none of the edges are present between modules, for any two $M_i, M_j \in \M$ the cost of edge modifications of $\E^*$ between the two modules is either $0$ or $\sum_{v \in M_i} \omega(v) \cdot \sum_{w \in M_j} \omega(w)$.  Because the weight of an $M_i$ in $G / \M$ is the sum of its vertex weights, modifiying $M_i M_j$ in $G / \M$ also has that same cost, and it follows that $(G \Delta \E^*) / \M$ (which is well-defined because $\M$ is still a modular partition of $G \Delta \E^*$) yields a graph that can be obtained from $G / \M$ with a cost that is exactly the sum of costs between modules from $G$ to $G \Delta \E^*$.  Moreover, since $(G \Delta \E^*) / \M$ is isomorphic to any induced subgraph of $G \Delta \E^*$ obtained by taking one vertex per module, it is $\HF$.  It is thus a possible solution and thus requires cost at least $opt(G / \M)$, which proves our claim.  Therefore, $opt(G) \geq opt(G / \M) + \sum_{i \in [p]} opt(G[M_i])$.  

    For the converse bound, 
    consider an editing set $\E'$ of $G$ obtained as follows: take an optimal editing set $\E$ of $G/\M$, then for each module $M_i \in \M$ take an optimal editing set $\E_i$ of $G[M_i]$.  Define the editing set $\E' = ext(G, \M,\E) \cup \E_1 \cup \ldots \cup \E_{|\M|}$, and define $G' = G \op \E'$.  

    Note that $\M$ is a modular partition of $G'$.  Moreover, $G'/\M$ is $\HH$-free and every $G'[M_i]$ is $\HH$-free for $M_i \in \M$, so by Lemma~\ref{lem:module-h-free} $G'$ is $\HH$-free.  Thus $\E'$ is an $\HH$-free editing set of $G$ and the cost of modified edges is the expression given in the lemma statement.  This shows that 
    $opt(G) \leq opt(G/\M) + \sum_{M_i \in \M} opt(G[M_i])$.
    We therefore have the desired equality on $opt(G)$.
    In particular, the editing set $\E'$ we just described achieves the lower bound, and so it is optimal.  Since the optimal editing set $\E$ of $G / \M$ used to construct $\E'$ is arbitrary, this shows that any such $\E$ is contained in some optimal editing set, which proves the ``moreover'' part of the statement.
\end{proof}

As a consequence, extending the modifications on $G /\M^*$ to $G$ itself, as performed by Algorithm~\ref{alg:alg}, can be shown to be safe.

\begin{lemma}\label{lem:safe-transfer}
    Suppose that all minimal graphs of $\HH$ are prime.  
    Let $G$ be a graph and let $\M$ be any modular partition of $G$.  Suppose that a collection of editing sets $S$ is safe for the quotient graph $G /\M$.  

    Then the collection
    \[
    S' = \{ ext(G, \M, \E) \mid \E \in S \}
    \]
    is safe for $G$.
\end{lemma}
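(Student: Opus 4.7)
The plan is to chain Lemma~\ref{lem:opt-g} together with monotonicity of the $ext$ operation. Unpacking the definition of safety, we need to exhibit, for some $\E \in S$, an optimal $\HF$-editing set $\hat{\E}$ of $G$ with $ext(G,\M,\E) \subseteq \hat{\E}$, and to verify that every $ext(G,\M,\E)$ respects $\sigma$.

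First I would invoke the safety of $S$ on $G/\M$: there exist $\E \in S$ and an optimal $\HF$-editing set $\E^*$ of $G/\M$ with $\E \subseteq \E^*$. By direct inspection of the definition of $ext$, this set-inclusion is preserved, i.e., $ext(G,\M,\E) \subseteq ext(G,\M,\E^*)$. Now Lemma~\ref{lem:opt-g} (the ``moreover'' part) applied to the optimal editing set $\E^*$ of $G/\M$ produces an optimal $\HF$-editing set $\hat{\E}$ of $G$ with $ext(G,\M,\E^*) \subseteq \hat{\E}$. Transitivity gives $ext(G,\M,\E) \subseteq \hat{\E}$, which is exactly the safety condition for $S'$ at this element.

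It remains to check that each $ext(G,\M,\E')$ for $\E' \in S$ respects $\sigma$, since this was a required condition in the definition of safety. This follows because between any two distinct modules $M_1, M_2 \in \M$, by definition all pairs in $M_1 \times M_2$ are either all edges of $G$ or all non-edges of $G$. So if $M_1M_2 \in \E' \cap E(G/\M)$ is a deletion on $G/\M$, then every pair $xy$ with $x \in M_1, y \in M_2$ is an edge of $G$ and is therefore a deletion in $G$; symmetrically for insertions. Hence $ext$ preserves the type of each modification, and $ext(G,\M,\E')$ respects $\sigma$ whenever $\E'$ does.

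There is no real obstacle here: the argument is essentially a bookkeeping exercise that combines the lifting statement of Lemma~\ref{lem:opt-g} with the observation that $ext$ is monotone under inclusion and type-preserving. The only subtlety worth flagging is that one must apply Lemma~\ref{lem:opt-g} to the ambient optimal editing set $\E^*$ rather than to the chosen $\E$, since $\E$ itself need not be optimal on $G/\M$, and then use monotonicity of $ext$ to descend back to $\E$.
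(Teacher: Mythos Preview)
Your proof is correct and follows essentially the same approach as the paper: invoke safety on $G/\M$ to get $\E \subseteq \E^*$ with $\E^*$ optimal, use monotonicity of $ext$, then apply the ``moreover'' part of Lemma~\ref{lem:opt-g} to lift $\E^*$ to an optimal editing set of $G$. Your additional verification that each $ext(G,\M,\E')$ respects $\sigma$ is a detail the paper's proof omits but is technically required by the definition of safety, so your write-up is in fact slightly more complete.
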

 
\begin{proof}
    Suppose that $S$ is safe for $G/\M$.  Then by definition, there exist an optimal editing set $\E^*$ of $G /\M$ and $\E \in S$ such that $\E \subseteq \E^*$.  Notice that $ext(G, \M, \E) \subseteq ext(G, \M, \E^*)$.  
    By Lemma~\ref{lem:opt-g}, there exists an optimal editing set $\E_G$ of $G$ such that $ext(G, \M, \E^*) \subseteq \E_G$.
    We infer that 
    \[
    ext(G, \M, \E) \subseteq ext(G, \M, \E^*) \subseteq \E_G
    \]
    implying that $S'$ as defined in the statement is safe. 
\end{proof}

The facts accumulated so far are sufficient to prove the correctness of the algorithm.  Before moving on to the proof of the theorem, we look at the cases in which the algorithm is incorrect.  This proved to be surprisingly complex, mainly because we want the algorithm to fail for any $C$.

\begin{lemma}\label{lem:alg-incorrect}
    If some minimal graph in $\H$ is not prime, then for any constant $C$ there exist instances on which Algorithm~\ref{alg:alg} fails.
\end{lemma}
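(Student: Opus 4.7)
The plan is to produce, for every constant $C$, a graph $G$ with $|V(G)| \geq C$ and $opt(G) \geq 1$ on which Algorithm~\ref{alg:alg}, called with $k = 1$, returns $0$ incorrectly.

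Let $H^* \in \H$ be a minimal non-prime graph, let $M^*$ be a non-trivial module of $H^*$ with $2 \leq |M^*| < |V(H^*)|$, and set $U = V(H^*) \setminus M^*$. The starting observation is that $H^*$ itself already violates Lemma~\ref{lem:module-h-free}: with the modular partition $\M = \{M^*\} \cup \{\{u\} : u \in U\}$, the quotient $H^*/\M$ is isomorphic to $H^*[\{m^*\} \cup U]$ for any $m^* \in M^*$---a proper induced subgraph of $H^*$ (since $|M^*| \geq 2$), hence $\H$-free by minimality---and every factor $H^*[M^*]$ or $H^*[\{u\}]$ is likewise a proper induced subgraph of $H^*$, therefore $\H$-free. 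Only the bounded size $|V(H^*)|$ prevents $H^*$ itself from being a counterexample, so the task is to enlarge $H^*$ while preserving the same structural pattern under $\M^*(G)$.

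To reach size $\geq C$, I pick $m^* \in M^*$ and replace it by $t$ pairwise non-adjacent clones $m^*_1, \ldots, m^*_t$ that each retain $m^*$'s adjacencies in $V(H^*) \setminus \{m^*\}$, with $t$ chosen so that $|V(H^*)| - 1 + t \geq C$. Calling the resulting graph $G$, picking any single clone together with $V(H^*) \setminus \{m^*\}$ induces a copy of $H^*$, so $opt(G) \geq 1$. When $H^*$ is connected and complement-connected (e.g.\ $C_4$, paw, $K_{2,3}$), the modular decomposition $\M^*(G)$ places the inflated module $M_{\text{big}} = (M^* \setminus \{m^*\}) \cup \{m^*_1, \ldots, m^*_t\}$ inside one maximal strong module (possibly merged with those $u \in U$ sharing the same $U$-neighborhood as $M^*$); one checks that $G/\M^*(G)$ is isomorphic to a proper induced subgraph of $H^*$, hence $\H$-free by minimality, and that every factor $G[M_i]$ is either a proper induced subgraph of $H^*[U]$ or lies inside the inflated block and stays $\H$-free under an appropriate choice of inflation (independent clones when $H^*[M^*]$ is an independent set, clique clones otherwise), by a case analysis using minimality of $H^*$. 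When instead $H^*$ is disconnected or complement-disconnected (e.g.\ $H^* = \overline{K_s}$), where the inflation may produce a graph whose modular decomposition is trivial, one uses a tailored alternative: for $\overline{K_s}$, take $G = G_1 \cup K_1$ with $G_1$ any graph on at least $C-1$ vertices of independence number exactly $s-1$ (such as $K_n$ minus a perfect matching), giving $\M^*(G) = \{V(G_1), \{w\}\}$ with quotient $\overline{K_2}$ and each factor $\H$-free, while $G$ still contains $\overline{K_s}$ by combining an independent $(s-1)$-set of $G_1$ with the isolated vertex.

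Tracing Algorithm~\ref{alg:alg} on $(G, 1)$ in either case: $G$ is not $\H$-free, $k = 1 > 0$, and $|V(G)| \geq C$, so lines~\ref{line:is-h-free}--\ref{line:small-graph} are skipped and $\M^*(G)$ is computed; the condition of line~\ref{line:if-solve-gm} fails because $G/\M^*(G)$ is $\H$-free; and line~\ref{line:else-solve-modules} returns $\mathbf{solve}(G/\M^*(G), 1) + \sum_{M_i \in \M^*(G)} f(G[M_i], 1) = 0 + 0 = 0$, contradicting $opt(G) \geq 1$. The main obstacle is producing a uniform construction that works for every minimal non-prime $H^* \in \H$: different structural cases (connected vs disconnected, $H^*[M^*]$ dense vs sparse) require different choices of inflation or gadget, and certifying $\H$-freeness of each factor of $\M^*(G)$ requires exploiting the assumption that $\H$-free is infinite, e.g.\ via Ramsey-type existence of $\H$-free graphs of prescribed independence or clique number to populate the inflated block without inadvertently creating another forbidden induced subgraph.
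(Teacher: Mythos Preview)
Your overall strategy matches the paper's: enlarge the minimal non-prime $H^*$ by blowing up a vertex in a non-trivial module so that the resulting graph $G$ has size at least $C$, still contains an induced $H^*$, yet has $\H$-free quotient and $\H$-free factors, making Algorithm~\ref{alg:alg} return $0$. Where your argument breaks is the step asserting that the inflated factor $G[M_{\text{big}}]$ remains $\H$-free ``under an appropriate choice of inflation (independent clones when $H^*[M^*]$ is an independent set, clique clones otherwise)''. This is exactly the hard part, and the dichotomy you propose neither covers all cases (what if $H^*[M^*]$ is neither a clique nor an independent set?) nor is justified in the cases it does cover: $H^*[M^*]$ being $\H$-free does not imply that the blow-up of one of its vertices into a large clique or independent set is $\H$-free. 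Minimality of $H^*$ tells you nothing about which \emph{larger} graphs are in $\H$-free.

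The paper confronts precisely this obstacle and resolves it with two ideas you are missing. First, rather than clones, it uses the hypothesis that $\H$-free is infinite to pick an \emph{arbitrary} $\H$-free graph $F$ with at least $C$ vertices and substitutes $F$ for a vertex $v$ of a smallest maximal strong module $M_1$. Second---and this is the crux---it does \emph{not} assume the resulting factor $H'[M_1']$ is $\H$-free. If it is, the argument concludes as you describe. If it is not, the paper takes a \emph{minimal} subset $X$ with $V(F)\subseteq X\subseteq M_1'$ such that $H'[X]$ is not $\H$-free, and shows by a further case analysis (on whether $H'[X]$ is disconnected, complement-disconnected, or both connected) that Algorithm~\ref{alg:alg} fails on $H'[X]$ instead. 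The minimality of $X$ guarantees that every proper factor containing $V(F)$ is $\H$-free, which is what makes the recursion bottom out. Your proposal has no analogue of this step, and without it the argument does not go through.

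A smaller issue: you work with an arbitrary non-trivial module $M^*$ of $H^*$, then speculate about how $\M^*(G)$ might look (``possibly merged with those $u\in U$ sharing the same $U$-neighborhood''). The paper instead works from the start with the maximal strong modules of $H^*$, so that the modular decomposition of the inflated graph is controlled. Your treatment of the disconnected and complement-disconnected cases is also only an example ($\overline{K_s}$) rather than an argument; the paper handles these as separate explicit cases using a Ramsey argument to show that all cliques (resp.\ all independent sets) are $\H$-free.
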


\begin{proof}
We construct an instance on which the algorithm will fail.  Note that this instance needs to have more than $C$ vertices, as otherwise it will be solved exactly by \textbf{solve}.  Since we do not control $C$, we must build arbitrarily large instances.  We need a few cases depending on the type of non-prime graph found in $\H$.

\medskip 

\noindent 
\emph{Case 1.}  $\H$ contains a minimal graph $H$ that is not prime and edgeless.    In that case, note that $|V(H)| \geq 3$ since an edgeless graph with two vertices is prime (all modules are trivial).  
Moreover, since $\HF$ is infinite, one can see that all complete graphs are $\HF$.  Indeed, if some $K_l$ is not $\HF$, then all graphs in $\H$ are $\{H, K_l\}$-free, with $H$ an edgeless graph.  But by a Ramsey argument, every large enough graph has an induced $H$ or $K_l$ and the class could not be infinite.

So, consider the graph $H'$ obtained as follows: take a clique of size $C$, then add two vertices $v, v'$ adjacent to all vertices of the clique ($v, v'$ are not adjacent).  Let us call this graph $F$.  Then, add $|V(H)| - 2$ singleton vertices to $F$, i.e., new vertices that are incident to no edge.  The result is called $H'$, and we claim that the algorithm fails on it.

First note the modular decomposition $\M^*$ of $H'$ consists of its $|V(H)| - 1$ connected components.  Hence $H' /\M^*$ is an edgeless graph with $V(H) - 1$ vertices, so it is a proper induced subgraph of $H$ an is thus $\HF$, by the minimality of $H$.  When the algorithm recurses on the singleton components, it will (correctly) return that they are $\HF$. When it recurses on $F$, it will use the modular decomposition in which every vertex of the clique is a module, and $\{v, v'\}$ is a module.  The quotient graph of $F$ is a clique with two vertices and the algorithm will return that it is $\HF$.  The two factors, i.e., the large clique and the subgraph induced by $v, v'$, will also be found to be $\H$-free (respectively by our argument that cliques are $\HF$, and because $v, v'$ forms an edgeless graph with $2 < |V(H)|$  vertices).  The algorithm will thus return that $F$ is $\HF$. 
In turn, it will return that $H'$ is itself $\HF$.  
However, $v, v'$ and the singleton vertices induce an edgeless graph with $|V(H)|$ vertices and $H'$ is not $\HF$.  Hence Algorithm~\ref{alg:alg} incorrectly returns $0$ on that input.

\medskip 

\noindent 
\emph{Case 2.}  $\H$ contains a minimal graph $H$ that is not prime and that is a complete graph.  This case is symmetric to the previous one: one can see that since $\HF$ is infinite, all edgeless graphs are $\HF$.  Then, it suffices to take the complement of the graph $H'$ described above, and the same arguments apply.

\medskip 

\noindent 
\emph{Case 3.} No minimal and non-prime graph in $\H$ is an edgeless graph or a complete graph.  
In that case, let $H$ be a minimal graph of $\H$ which is not prime.  First note that $\H$ has no edgeless graph: if it did, that edgeless graph would be prime by assumption and would have only two  vertices.  By minimality, $H$ could not have a non-edge and it would be a clique, which we assume is not the case.

Let $\M^* = \{M_1, \ldots, M_p\}$ be the modular decomposition of $H$, with  $p > 1$.  By the minimality of $H$, each $H[M_i]$ is $\HH$-free, for $M_i \in \M^*$.
Also note that some module $M_i \in \M^*$ has at least two vertices: if $H$ is disconnected, some component is not a singleton because $H$ is not an edgeless graph; if $\overline{H}$ is disconnected, the same applies; if $H$ and $\overline{H}$ are connected,  some module is non-trivial.  This means that $H / \M^*$ is isomorphic to some induced subgraph of $H$, and we have that $H / \M^*$ is also $\HH$-free.   
Therefore, the algorithm would incorrectly return that $H$ is $\HF$, \emph{unless} $H$ has less than $C$ vertices, in which case \textbf{solve} would solve it.
So, we next need to modify $H$ so that it has more than $C$ vertices.

Because $\HH$-free is infinite, there exists some $\HH$-free graph $F$ with at least $C$ vertices.  
Let us suppose, without loss of generality, that module $M_1 \in \M^*$ is of minimum size, among all modules of $\M^*$.
Consider the graph $H'$ obtained from $H$ by taking an arbitrary vertex $v \in M_1$, and replacing it by $F$.  That is, take the disjoint union of $H - v$ and $F$, then for every $x \in V(F)$ and $y \in V(H - v)$, add $xy$ to $E(H')$ if and only if $vy \in E(H)$.  Note that $H'$ is not $\HH$-free, since any vertex of $F$ can play the role of $v$.
Denote $M'_1 = V(F) \cup (M_1 \setminus \{v\})$.  One can verify that $\M' = \{M_1', M_2, \ldots, M_p\}$ is the modular decomposition of $H'$ (these are modules by construction, and no proper subset of all the modules can be merged by taking their union to obtain a alrger module, as otherwise we could do the same in $\M^*$, and so they are also maximal).

Suppose that $H'[M_1']$ is $\HH$-free.  In that case, we have that $H' / \M'$ is isomorphic to $H / \M^*$, and it is thus $\HH$-free as argued before.  Moreover, $H'[M_1']$ is $\HH$-free by assumption, and each $H'[M_i]$ is $\HH$-free because they are the same graph as $H[M_i]$, for $M_i \in \M'$.  Hence the algorithm will return $0$ on input $H'$, which is incorrect since $H'$ is not $\HH$-free.

So assume from now on that $H'[M_1']$ is not $\HH$-free.  If $|M_1| = 1$, then note that by replacing the single vertex $v$ of $M_1$ with $F$, we get that $M'_1 = V(F)$ induces an $\HH$-free graph.  So we assume that $|M_1| \geq 2$.
Our aim is now to find an induced subgraph of $H'[M_1']$ of size more than $C$ on which Algorithm~\ref{alg:alg} fails.
Let $X \subseteq M_1'$ be a subset of $M'_1$ of minimum size such that $V(F) \subset X$ and $H'[X]$ is not $\HH$-free.  Note that $X$ exists since $M_1'$ is one possible value for $X$.  
Also note that for any non-empty subset $Y \subseteq X \setminus V(F)$, $H'[X] - Y$ is $\HH$-free.

Let $\M^X$ be the modular decomposition of $H'[X]$.
Note that $H'[X]$ is not prime, because it contains $V(F)$ as a non-trivial module, in particular.  
We must have one more subdivision into subcases.

\begin{itemize}
    \item 
    $H'[X]$ is disconnected.  
    Then $H'[X] / \M^X$ is an independent set and, since we may assume that $\HH$ does not contain an edgeless graph, it is $\HH$-free.  
    Now let $Y$ be a connected component of $H'[X]$.  If $Y$ does not intersect with $V(F)$, then $Y$ is a proper subset of $X$ and thus of $M_1$, and since $H[M_1]$ is $\HF$ we have that $H'[Y] = H[Y]$ is $\HH$-free.
    If $Y$ contains $V(F)$, then $Y$ is $\HH$-free as otherwise we would take $X$ to be $Y$ instead.
    So assume that $Y$ intersects with $V(F)$ but does not contain $V(F)$.  If $v$, the vertex we replaced with $F$, has a neighbor in $X \setminus V(F)$, then every vertex of $F$ has that neighbor and $V(F)$ would form a connected subgraph of $H'[X]$, in which case $Y$ would need to contain $V(F)$.  So we may assume that $v$ and thus all of $V(F)$ have no neighbor in $X \setminus V(F)$.  Therefore, $Y$ must be a proper subset of $V(F)$, and $H'[Y]$ is therefore $\HH$-free.  

    We have thus argued that $H'[X]$ is not $\HH$-free, but $H'[X] / \M^X$ is $\HH$-free and all of its connected components are $\HH$-free.  Thus, the algorithm will fail on $H'[X]$, which has more than $C$ vertices.

    \item 
    $\overline{H'[X]}$ is disconnected.  
    The arguments are similar to the previous case.  
    Then $H'[X] / \M^X$ is a complete graph and because $\H$ has no complete graph it is $\HH$-free.  
    Then, a connected component $Y$ of
    $\overline{H'[X]}$ can either be a subset of $X \setminus V(F)$, contain $V(F)$, or be a proper subset of $V(F)$ (which happens only if, in $H$, $v$ has no non-neighbor in $X$).  In all cases, $H'[Y]$ is $\HH$-free, and again the algorithm will fail on $H'[X]$.

    \item 
    Both $H'[X]$ and $\overline{H'[X]}$ are connected.  In that case, 
    we have that $V(F)$ is contained in some module of $\M^X$, and so $H'[X] / \M^X$ is (isomorphic to) an induced subgraph of $H[M_1]$, because we blew up $v \in M_1$ into $V(F)$, and when taking the quotient there is no difference between $v$ and $V(F)$.  
Because $H[M_1]$ and its induced subgraphs are $\HH$-free, $H'[X] / \M^X$ is also $\HH$-free.  
Now consider a module $M \in \M^X$.  If $M$ is the module that contains $V(F)$, then $H'[M]$ is $\HH$-free because $M$ is a proper subset of $X$ that contains $V(F)$.  Otherwise, $M$ does not intersect with $V(F)$ and thus $M \subset M_1$, in which case it is again $\HH$-free.  Therefore, the algorithm fails on $H'[X]$.
\end{itemize}

This completes the proof since every case leads to a graph of arbitrary size on which the algorithm fails.  
\end{proof}

\medskip 

\noindent 
\emph{Sketch of the proof of Theorem~\ref{thm:algocorrect}.}
We have all the necessary facts need for  Theorem~\ref{thm:algocorrect}.  Since the proof uses standard arguments, the full proof is in the appendix.
In short, the fact that it is correct when all minimal graphs of $\H$ is proved by induction over the recursion tree created by the algorithm.  The base cases are easy to check, and the induction uses Lemma~\ref{lem:opt-g} and Lemma~\ref{lem:safe-transfer}.  The fact that the algorithm fails if not all minimal graphs of $\H$ are prime was proved in Lemma~\ref{lem:alg-incorrect}.  
As for the complexity, we can again use induction on the recursion tree to argue that it has at most $nb^k$ leaves.  In one case, we recurse on Line~\ref{line:if-solve-gm}, making calls that each reduce the parameter $k$ in correspondence with the branching vector of \textbf{branch}, whose branching factor is at most $b$ by assumption, in which case the bound can be shown to hold.  If we recurse on Line~\ref{line:else-solve-modules}, the calls do not reduce $k$, but they are made on a partition of $V(G)$ and a simple inductive argument shows the bound.  The final complexity is obtained by multiplying the number of nodes in the recursion tree with the time spent per node.

    
\subsection*{Proof of Theorem~\ref{thm:algocorrect}.}

The statement contains three facts, and we prove each of them separately.  



\medskip 

\noindent
\textbf{Fact 1.}  If all minimal graphs in $\H$ are prime, then Algorithm~\ref{alg:alg} is correct for any constant $C$.

The proof is by induction on the height of the recursion tree created by the algorithm.
As a base case, consider the leaves of the recursion tree.  If the algorithm returns on Line~\ref{line:is-h-free}, then $k \geq 0$ and $G$ is \HHF{}, in which case $0 \leq k$ editions are required and therefore $0$ is correctly returned. If we do not return on line~\ref{line:is-h-free}, then $G$ is not \HHF{} or $k < 0$. If $k\leq 0$, we correctly return $INF$ on Line~\ref{line:k-zero} since $G$ is not \HHF{} and no more editions are allowed. If the algorithm returns on Line~\ref{line:small-graph} with $G$ being of constant size, then by assumption \textbf{solve} returns the correct value. One can easily check that any other situation makes a recursive call to $f$, so this covers every base case. 

We now assume by induction that any child call of a recursion tree node returns a correct value.
We first assume that the algorithm makes recursive calls on Line~\ref{line:if-solve-gm} on each editing set returned by \textbf{branch}. For this to happen, we have $k>0$, $G$ is not $\HH$-free, and the size of the quotient graph $G/\M^*$ is larger than or equal to the constant $C$.
Suppose that there is an optimal editing set for $G$ with $opt(G) \leq k$ modifications. 
By assumption, \textbf{branch} returns a safe collection $S$ of editing sets of $G / \M^*$.  By Lemma~\ref{lem:safe-transfer}, $\{ext(G, \M^*, \E) \mid \E \in S\}$ is safe for $G$.  So there is some optimal editing set $\E^*$ of $G$ with $cost(\E^*) = opt(G) \leq k$, and $\E \in S$ such that $ext(G, \M^*, \E) \subseteq \E^*$.  
This means that $G \Delta ext(G, \M^*, \E)$ can be made $\H$-free optimally with $cost(\E^*) - cost(ext(G, \M^*, \E))$ modifications.  The algorithm recurses on $G \Delta ext(G, \M^*, \E)$ at some point, and by induction we may assume that it correctly returns the optimal cost $cost(\E^*) - cost(ext(G, \M^*, \E))$ of an editing set on that graph, and therefore the variable $e$ gets set to a value of at most $cost(\E^*)$.  One can also see that, in the minimization, $e$ is never set to a quantity smaller than $cost(\E^*)$. 
This is because every recursive call on some $\E$ returns either $INF$, or the cost $e_\E$ of an $\HF$ editing set of $G \Delta ext(G, \M^*, \E)$.  The value $cost(ext(G, \M^*, \E)) + e_\E$ used in the minimization is therefore the cost of some $\HF$ editing set.  In other words, all values considered in the minimization correspond to costs of editing sets or $INF$, so none of these can be smaller than the optimal $cost(\E^*)$.  It follows that $e = cost(\E^*)$ and that we return the correct value.

Next suppose that $G$ cannot be made $\HH$-free with a cost of at most $k$ modifications.  This implies that for any $\E$ in the \textbf{branch} set,
there is no $\HF$ editing set with cost at most $k-cost(ext(\E, G, \M^*))$ for $G \Delta ext(\E, G, \M^*)$.
By induction, every call on $f$ returns $INF$ and therefore $e$ is set to $INF$ and we return the correct value.

We then assume that the algorithm makes recursive calls on Line~\ref{line:else-solve-modules}. 
The correctness follows from Lemma~\ref{lem:opt-g}, since  $opt(G / \M^*) + \sum_{M_i \in \M^*} opt(G[M_i])$ is the cost of an optimal editing set and this is what we compute on that line.  
More specifically, if there exists an editing set of $G$ of size at most $k$, then the same is true for $G / \M^*$ and each $G[M_i]$ subgraph.  In that case, $opt(G / \M^*)$ is computed correctly using \textbf{solve} by assumption, and each $opt(G[M_i])$ is correct by induction (in particular, none of these is $INF$).   
Thus $e$ is set to the correct value.  
If no editing set of size at most $k$ exists, then if \textbf{solve} or a recursive call returns $INF$, we correctly return $INF$.  If \textbf{solve} and all recursive calls return some value, then again $e$ is set to the minimum size of an editing set --- but since that size is strictly greater than $k$, we will correctly return $INF$.  Noting that our arguments apply regardless of the constant $C$, this completes the proof of correctness.


\medskip 

\noindent
\textbf{Fact 2.}  If some minimal graph in $\H$ is not prime, then for any constant $C$ there exist instances on which Algorithm~\ref{alg:alg} fails.  This was proved in Lemma~\ref{lem:alg-incorrect}.

\medskip 

\noindent
\textbf{Fact 3.}  Algorithm~\ref{alg:alg} runs in time $O(n b^k (g(n) + h(n) + n + m))$, with $g(n)$ the time to verify $G \in \HF$, $h(n)$ is the running time of \textbf{branch}, and $b$ is the branching factor of \textbf{branch}.

Consider the time spent in one call to the algorithm, without counting recursive calls.  First checking if $G \in \HF$ takes time $O(g(n))$.  
If we call \textbf{solve}, then $|V| < C$ for a constant $C$ and thus this takes constant time.  Otherwise, we need to compute $\M^*$ in time $O(n + m)$, and computing $G / \M^*$ takes the same complexity.  Checking for $G / \M^* \in \HF$ again takes time $O(g(n))$, and in case it is not $\HF$, calling \textbf{branch} adds a time of $h(n)$. 
We then enumerate each editing set returned by \textbf{branch}.  We assume there is a constant number of such sets, and for each we compute $G - ext(G, \M^*, \E)$ in time $O(n + m)$, which does not add to the complexity.  
If instead we call \textbf{solve} and recurse on the modules, the former takes constant time and computing the induced subgraphs tame time $O(n + m)$.  Overall, one call takes time $O(g(n) + h(n) + n + m)$.

It remains to bound the number of recursive calls made.  Let $t(n, k)$ be the number of leaves of the recursion tree when given as input a graph with $n$ vertices and parameter $k$.  We prove by induction on the depth of the recursion tree that $t(n, k) \leq n \cdot b^k$, where we recall that $b$ is the branching factor of \textbf{branch}.
As a base case, a terminal call has one leaf in its recursion tree and the statement holds.  
So assume that recursive calls are made and that the statement is true in each such call.  
We note that as Line~\ref{line:if-solve-gm} and Line~\ref{line:else-solve-modules} never occur simultaneously, we will prove them individually.
If the algorithm enters Line~\ref{line:if-solve-gm}, it calls the \textbf{branch} function which returns editing sets of sizes $a_1, \ldots, a_p$, and then makes a recursive call on each of them, where the passed parameters on each call are $k - a_1, \ldots, k - a_p$.  Using induction, we thus get:
\begin{alignat*}{2}
    t(n,k) &\leq t(n,k-a_1) + t(n,k-a_2) + \dots + t(n,k-a_p)\\
    &\leq n\cdot b^{k-a_1} + n\cdot b^{k-a_2} + \dots + n\cdot b^{k-a_p}\\
    &= n\cdot \sum_{i=1}^p b^{k-a_i}.
\end{alignat*}
Now, recall that $b$ is the branching factor of the worst branching vector $(c_1, \ldots, c_q)$ of \textbf{branch}.  That is, $b$ is the largest real that is a solution to 
$b^k - \sum_{i=1}^q b^{k - c_i} = 0$.  So, $b^k = \sum_{i=1}^q b^{k - c_i}$.  Since $(c_1, \ldots, c_q)$ is the branching vector possibly returned by \textbf{branch} that results in the largest possible real root, this means that $(a_1, \ldots, a_p)$ is not worse, i.e., the largest real root of $x^k - \sum_{i=1}^p x^{k - a_i}$ is at most $b$.  Thus when $x = b$, that expression is positive and it follows that $b^k \geq \sum_{i=1}^p b^{k - a_i}$.
We can thus plus $b^k$ above and we obtain $t(n, k) \leq n \cdot b^k$ as desired.

If instead the algorithm enters Line~\ref{line:else-solve-modules}, it makes a recursive call on each module of $\M^* = \{M_1, \ldots, M_p\}$, with $k$ unchanged.  Denote by $n_i = |M_i|$ for $i \in \{1, \ldots, p\}$.  Using induction and the fact that modules are disjoint, we get:
\begin{alignat*}{2}
    t(n,k) &= t(n_1,k) + t(n_2,k) + \dots + t(n_p, k)\\
    &\leq n_1\cdot b^k + n_2\cdot b^k + \dots + n_p\cdot b^k\\
    &= b^k\cdot \sum_{i=1}^p n_i\\
    &= n\cdot b^k\\
\end{alignat*}
The number of nodes in the recursion tree is at most twice $nb^k$, and multiplying the time per call gives the complexity given in the statement.
\qed

\section{Using unavoidable induced subgraphs for cograph deletion}

We now shift from our generic view to the specific \textsc{Cograph Deletion} problem. We apply Algorithm~\ref{alg:alg} to obtain time $O^*((2 + \epsilon)^k)$ time.  Recall that cographs are exactly the $P_4$-free graphs, so we are solving the $\H$-free Deletion problem with $\H = \{P_4\}$.    Every graph in $\H$ is prime, so by Theorem~\ref{thm:algocorrect} the algorithm is correct.  
Note that Algorithm~\ref{alg:alg} uses a certain constant $C$, which we leave unspecified for now, but should be thought of as ``as large as needed'' (without ever depending on the size of the input graph).
According to Theorem~\ref{thm:algocorrect}, the complexity of the algorithm depends on the time taken to decide if a given graph is a cograph, 
which can be done in linear time~\cite{habib2005simple}, and on the running time and branching factor of a black box \textbf{branch} function.  We now unravel this black box for \textsc{Cograph Deletion}.

Importantly, the algorithm only calls \textbf{branch} on a quotient graph $G / \M^*$.
If $G$ or its complement is disconnected, then $G / \M^*$ is an independent set or a clique and it is a cograph.  We assume that this is not the case, and therefore $G / \M^*$ is a prime graph, a property that we will exploit.
From here on, we therefore let $G$ be the \emph{prime} graph given as input to the \textbf{branch} function.

Some remarks are needed before proceeding.  
Our goal is to return a safe set of deletion sets
with the smallest possible branching factor.  In most cases, we will provide a list of deletion sets $S$ such that \emph{any} optimal solution must perform the deletions that are in one element of $S$. When describing an element $\E$ of $S$, we may say that we \emph{delete} an edge, meaning that we put it in $\E$, or that we \emph{conserve} some edge.  The latter means that we assume that an optimal deletion set does not delete that edge, which often allows further deductions. 
As an example, if $G$ contains a $P_4$ $w - x - y - z$, then one can easily see that \emph{any} optimal deletion set must do one of the following: either (1) delete $wx$; (2) conserve $wx$ and delete $xy$; (3) conserve both $wx, xy$ and delete $yz$.  This describes a set $S$ with three deletion sets, each one containing one edge to delete, and thus with corresponding branching vector $(1, 1, 1)$.
Occasionally, we will argue that there \emph{exists} some optimal deletion set that has the same deletions as some set of $S$.

Recall that $G$ is assumed to be vertex-weighted according to a function $\omega$, and that deleting an edge $uv$ costs $\omega(u) \omega(v)$.  In a branching vector $B = (b_1, \ldots, b_p)$, it is always advantageous to replace an entry $b_i$ with a larger number $b'_i > b_i$, since $B$ will dominate the resulting vector.  Therefore, deleting an edge of weight larger than $1$ is better than deleting an edge of weight $1$, and for that reason we shall assume that all edges have weight $1$, unless weights need to be taken into account. 


\subsection{Unavoidable subgraphs and ever-growing branching vectors}

We use the unavoidable induced subgraphs in large prime graphs developed by Chudnovsky et al.~\cite{Chudnovsky:[1]}, and then later by~\cite{malliaris2018unavoidable}. 
Beforehand, we need the following definition.

\begin{definition}
    A \emph{chain} is a set of vertices $\{v_1, v_2, \ldots, v_c\}$ in which, for $i \in \{2, \ldots, c\}$, the vertex $v_i$ is of one of two types: \\
(type 0) $v_i$ is adjacent to every vertex in $v_1, \ldots, v_{i-2}$  and is not adjacent to $v_{i-1}$; or \\
(type 1) $v_i$ is adjacent to $v_{i-1}$ and not adjacent to any of $v_1, \ldots, v_{i-2}$.
\end{definition} 

Note that $v_1$ is of both types, so we assume that its type is assigned arbitrarily unless otherwise specified.
A chain can be described by a string over alphabet $\{0, 1\}$, where the $i$-th character indicates the type of the $i$-th vertex in the chain.  For example, we may write $[111111]$ to denote the chain consisting of $c = 6$ vertices of type $1$, which is in fact a $P_6$.  Chordless paths are thus special cases of chains.
Let us note for later that if $\{v_1, \ldots, v_c\}$ form a chain, then for any $1 \leq i \leq j \leq c$, $\{v_i, v_{i+1}, \ldots, v_j\}$ also forms a chain.  Moreover, if $B$ is the binary representation of the first chain, then the representation of the smaller chain is the substring of $B$ from position $i$ to $j$.

We can now state the main theorem of~\cite{Chudnovsky:[1]}, and we refer to the figures below for the definition of the induced subgraphs stated therein.

\begin{theorem}[\cite{Chudnovsky:[1]}]\label{thm:chudnov}
    For every integer $c\geq3$, there exists a constant $C$ such that every prime graph with at least $C$ vertices contains one of the following graphs or their complements as an induced subgraph.

    \begin{enumerate}
      \itemsep0em
      \item The 1-subdivision of $K_{1,c}$ (denoted by $K^{(1)}_{1,c}$) (Figures \ref{fig:us1} and \ref{fig:us2}).  
      \item The line graph of $K_{2,c}$, where $K_{2,c}$ is a complete bipartite graph with two vertices on one side and $c$ on the other (Figures \ref{fig:us3} and \ref{fig:us4}).
      \item The thin spider with $c$ legs, which has an independent set $\{v_1, \ldots, v_c\}$, a clique $\{w_1, \ldots, w_c\}$, and edges $v_i w_i$ for $i \in [c]$ (Figures \ref{fig:us5} and \ref{fig:us6}).  The complement of a thin spider is called a thick spider.
      \item The half-graph of height $c$, denoted $H_c$, which has two independent sets $\{v_1, \ldots, v_c\}, \{w_1, \ldots, w_c\}$ and edge $v_i w_i$ iff $i \leq j$ (Figures \ref{fig:us7} and \ref{fig:us8}).
      \item The graph $H'_{c,I}$, obtained from $H_c$ by making the $b_i$'s a clique and adding a vertex $u$ adjacent to all the $v_i$'s (Figure \ref{fig:us9}).  This graph is self-complementary.
      \item The graph $H^*_c$, obtained from $H'_{c,I}$ by making $u$ only adjacent to $v_1$ (Figures \ref{fig:us10} and \ref{fig:us11}).
      \item A chain with $c$ vertices.
    \end{enumerate}
\end{theorem}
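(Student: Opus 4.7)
The plan is to prove this as an iterated Ramsey-style theorem, where the primeness condition repeatedly forces the existence of ``witness'' vertices with restricted adjacency patterns, and classical Ramsey extractions applied to these witnesses eventually force one of the seven structures. Since the class of graphs avoiding all seven patterns and their complements is closed under complementation (the list is itself closed, with self-complementary items collapsing), I can freely complement whenever convenient.

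First I would apply classical Ramsey's theorem: a prime graph $G$ of sufficient size $C$ contains a homogeneous set $S_0$ of size $N_1$, where $N_1$ is a very large function of $c$ that I would determine by unwinding the later extractions. Up to complementation, assume $S_0$ is independent. Since $G$ is prime and $|S_0| \geq 2$, the set $S_0$ is not a module, so for every pair $\{u,v\} \subseteq S_0$ there is a \emph{distinguisher} outside $S_0$ adjacent to exactly one of them. Next I would perform a multi-colored Ramsey argument on these distinguishers: to each external vertex $w$ we associate its trace on $S_0$ (which is a subset of $S_0$), and by a sunflower-plus-Ramsey cleanup we can extract a large sub-independent-set $S_1 \subseteq S_0$ together with a large family $W$ of distinguishers whose traces on $S_1$ all have the same ``combinatorial shape'' (same size, nested/crossing/disjoint relations among pairs, uniform adjacencies within $W$, etc.). Then I would case-split on this shape: (i) if each $w \in W$ has exactly one neighbor in $S_1$ and these neighbors are distinct, then the subgraph on $S_1 \cup W$ already contains either a subdivided star $K_{1,c}^{(1)}$ (when $W$ is independent), a thin spider (when $W$ is a clique), or the line graph of $K_{2,c}$ (when $W$ splits into two cliques joined appropriately); (ii) if the traces form a nested chain, I get a half-graph $H_c$, which further upgrades to $H'_{c,I}$ or $H^*_c$ depending on the adjacencies within $W$ and the presence of a universal-like vertex $u$; (iii) if none of the above uniform shapes emerges, the iterative witness-finding process itself constructs a chain, by repeatedly choosing $v_i$ inside and $v_{i+1}$ as a distinguisher of the previously built portion.

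Carrying out (iii) cleanly is the backbone: starting with an arbitrary $v_1$, at each step I maintain a large ``reservoir'' $R_i$ of candidate extenders compatible with the chain built so far; primeness guarantees that $R_i$ contains a distinguisher of the last vertex against the earlier ones, which must attach as either type~$0$ or type~$1$, and Ramsey on the attachment pattern refines the reservoir while losing only a bounded factor per step. After $c$ steps we either have a chain of length $c$, or we have accumulated enough structure to trigger one of the earlier cases. Quantitatively, if $R_i$ shrinks by a Ramsey factor $r(\cdot)$ per step, then $C$ must be chosen tower-exponential in $c$ to guarantee $R_c$ is nonempty, which is acceptable since $c$ is fixed.

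The main obstacle is the exhaustiveness of the case analysis: one must show that the seven patterns really do cover every ``combinatorial shape'' of the trace family, with no eighth unclassified pattern surviving. This is where the Chudnovsky et al. argument becomes delicate, since the shapes are parameterized not only by pair-adjacency behavior on $S_1$ but also by the induced subgraph on $W$ and by auxiliary distinguishers between vertices of $W$ (again forced by primeness). I would expect to need a second layer of Ramsey applied to $W$ itself, and possibly a third to handle distinguishers of subsets of $W$, with each layer absorbing into one of the seven target structures or feeding into the chain construction. Keeping the bookkeeping of Ramsey numbers consistent across all these layers, so that the final constant $C = C(c)$ is finite and well-defined, is the most technically demanding part of the plan.
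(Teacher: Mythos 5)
The theorem you are trying to prove is cited by the paper from Chudnovsky et al.\ \cite{Chudnovsky:[1]}; the manuscript under review contains \emph{no} proof of it, so there is no in-paper argument to compare your sketch against. Judged on its own merits, your iterated-Ramsey plan has the right general flavor --- the known proofs are indeed Ramsey-heavy and the resulting constant $C(c)$ is indeed of tower type --- but there is a concrete hole in step (iii), the chain construction, and it is not merely a bookkeeping issue.

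First, primeness of $G$ gives you, for any set $X$ with $1 < |X| < |V(G)|$, \emph{some} vertex of $V(G)\setminus X$ that is mixed on $X$ (has both a neighbor and a non-neighbor in $X$). It does not give a vertex whose adjacency to $\{v_1,\ldots,v_i\}$ is exactly ``only $v_i$'' (type~1) or exactly ``all but $v_i$'' (type~0). Already at $i=3$, a mixed vertex on $\{v_1,v_2,v_3\}$ could be adjacent to $v_1,v_3$ and not $v_2$, which is neither type and does not extend the chain; your assertion that a distinguisher ``must attach as either type~0 or type~1'' is simply false, and this is precisely where the real proof has to be clever rather than greedy. Second, even the existence of \emph{some} mixed vertex is only guaranteed globally in $V(G)$; it need not lie in the Ramsey-pruned reservoir $R_i$ you have set aside, so the step ``$R_i$ contains a distinguisher'' is unjustified --- primeness is a property of $G$, not of an arbitrary induced subgraph. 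The actual argument must instead show that a sequence of arbitrary mixed vertices either contains a long chain as a subsequence or forces one of the six specific structures, and that dichotomy is the heart of the theorem. You yourself flag exhaustiveness of the case analysis as the main obstacle, and that diagnosis is correct; but the chain step is also unsound as written, and both would need to be repaired before this becomes a proof.
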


\begin{figure}[!htb]
   \begin{minipage}{0.33\textwidth}
     \centering
     \includegraphics[height=2.9cm]{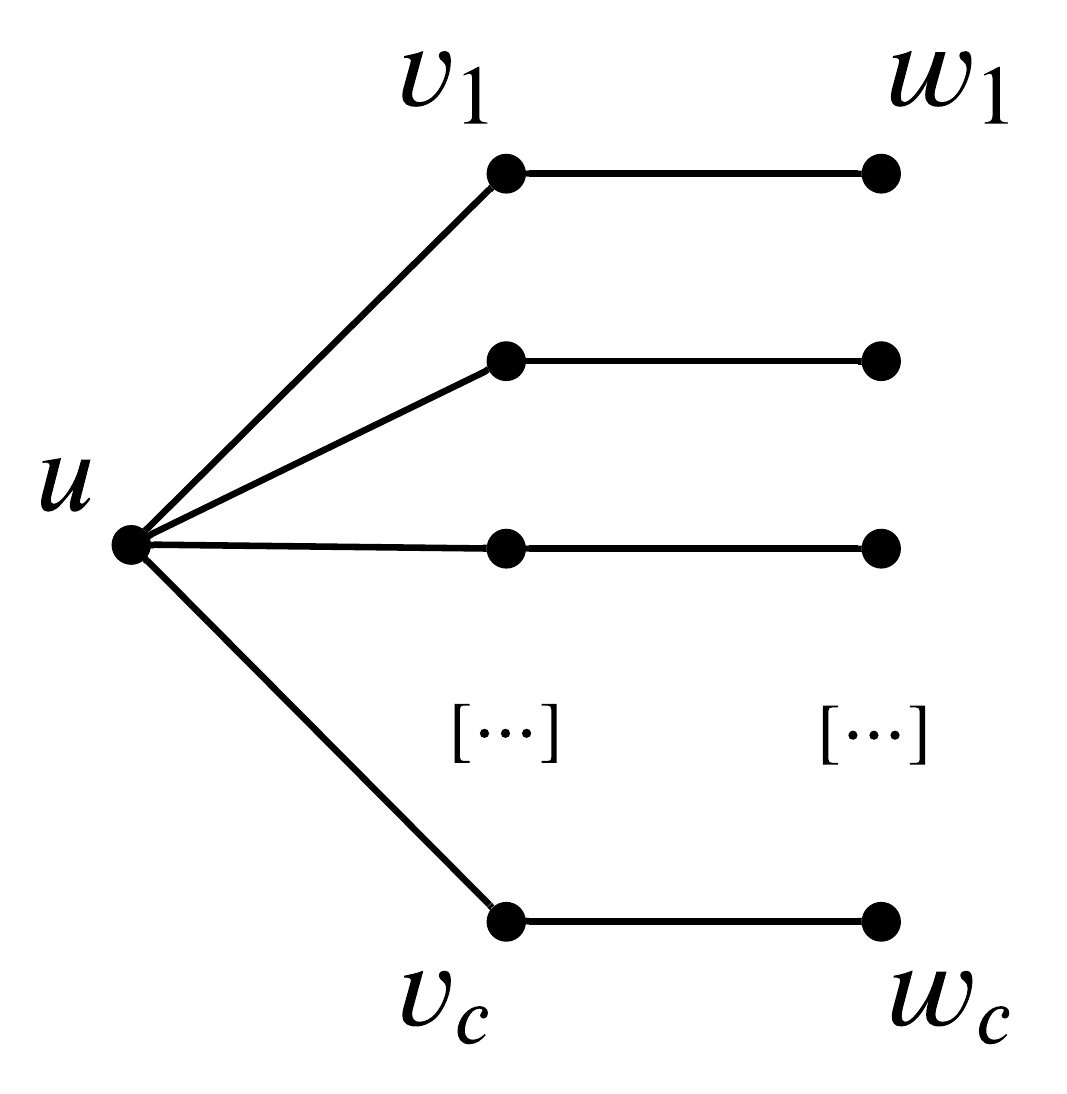}
     \caption{$K_{1,c}$}\label{fig:us1}
   \end{minipage}\hfill
   \begin{minipage}{0.33\textwidth}
     \centering
     \includegraphics[height=3cm]{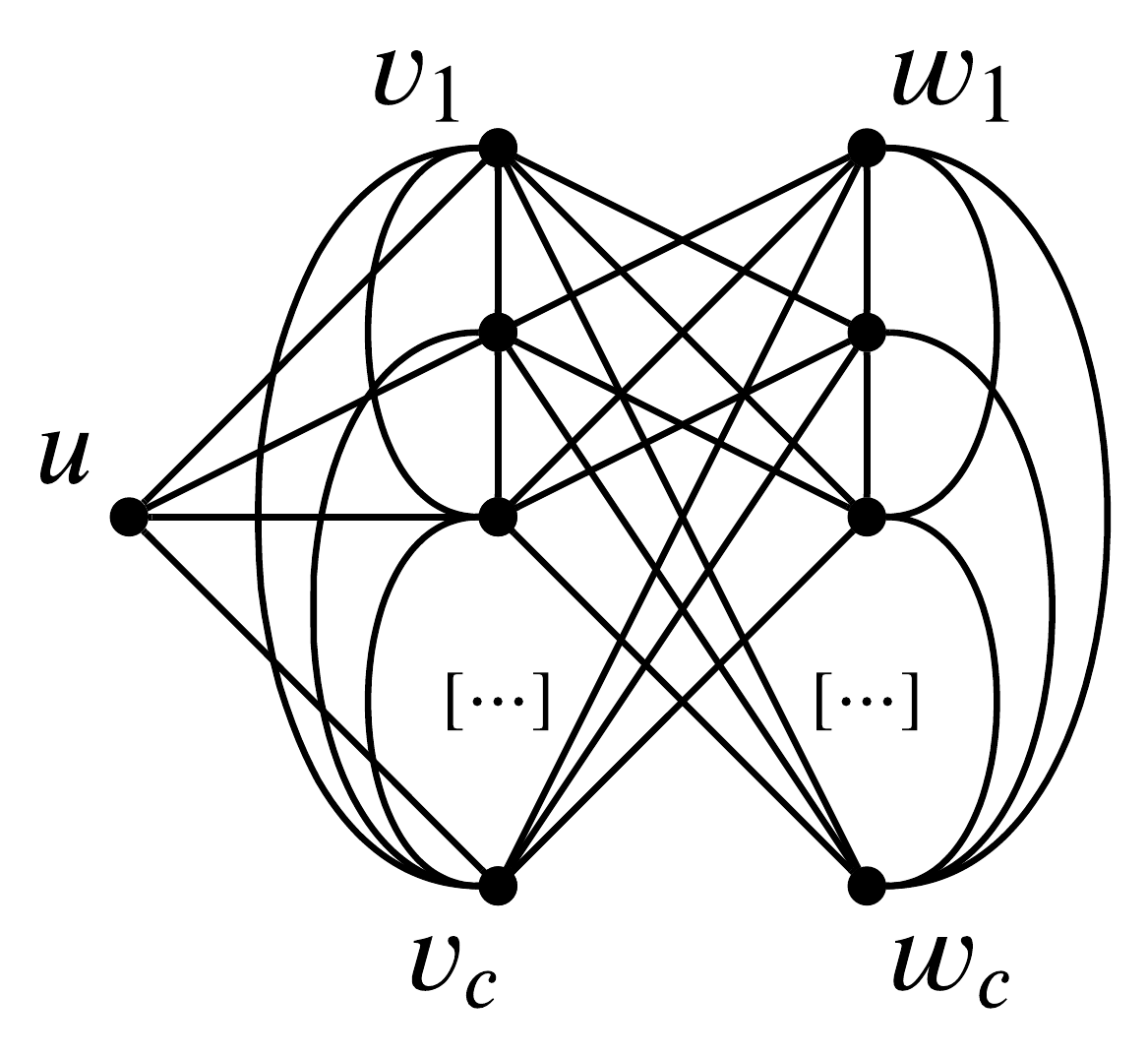}
     \caption{$\overline{K_{1,c}}$}\label{fig:us2}
   \end{minipage}\hfill
   \begin{minipage}{0.33\textwidth}
     \centering
     \includegraphics[height=3cm]{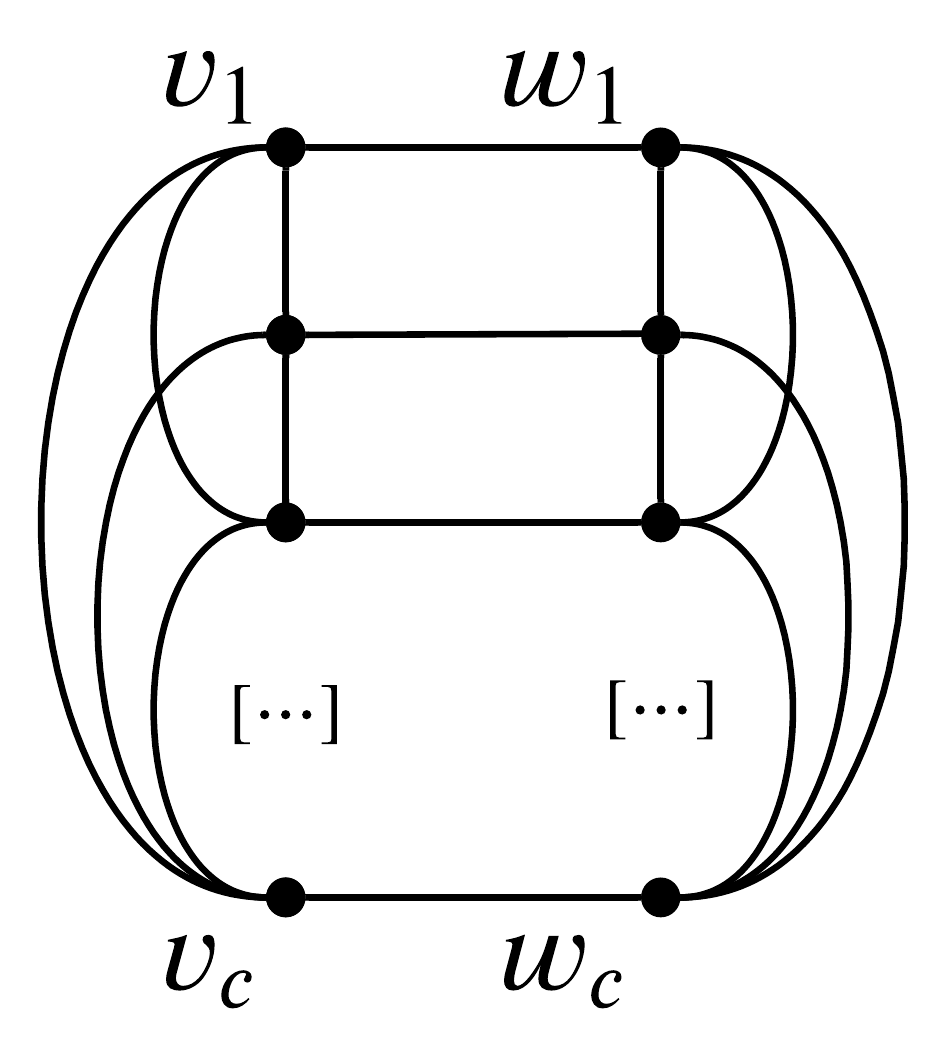}
     \caption{$L(K_{2,c})$}\label{fig:us3}
   \end{minipage}\hfill
\end{figure}
\begin{figure}[!htb]
   \begin{minipage}{0.33\textwidth}
     \centering
     \includegraphics[height=3cm]{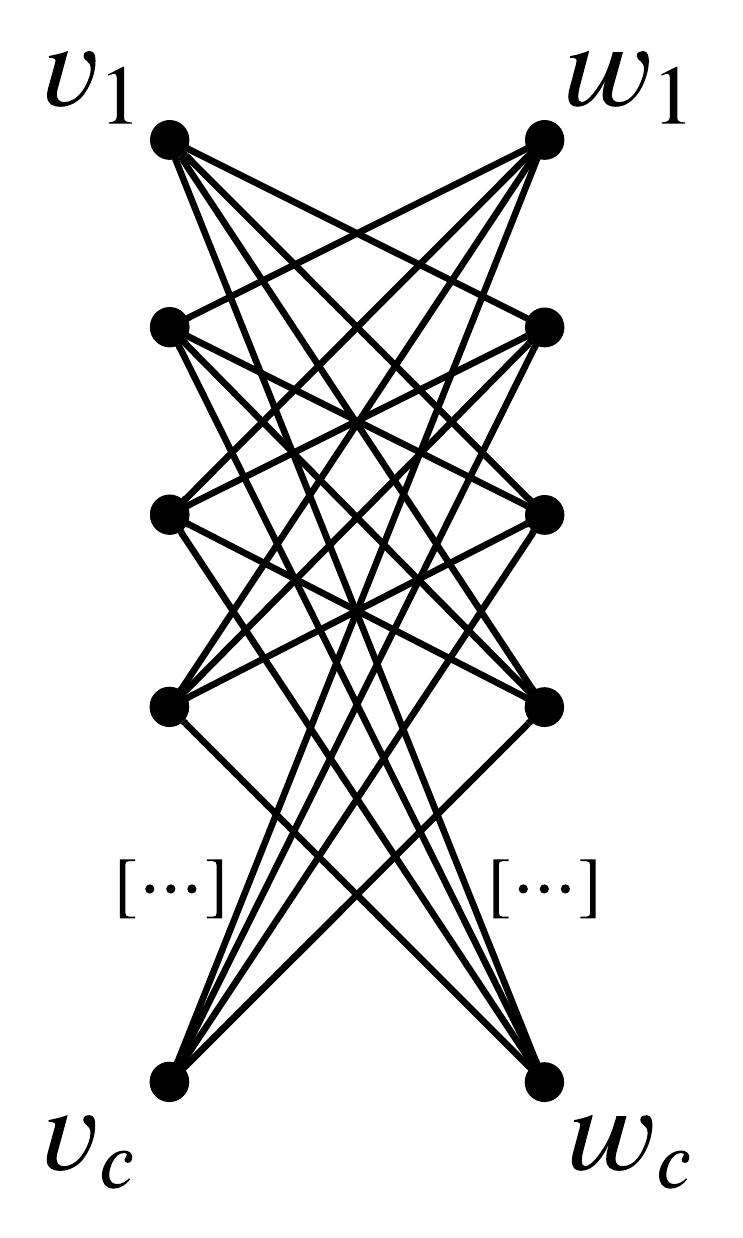}
     \caption{$\overline{L(K_{2,c})}$}\label{fig:us4}
   \end{minipage}\hfill
   \begin{minipage}{0.33\textwidth}
     \centering
     \includegraphics[height=3cm]{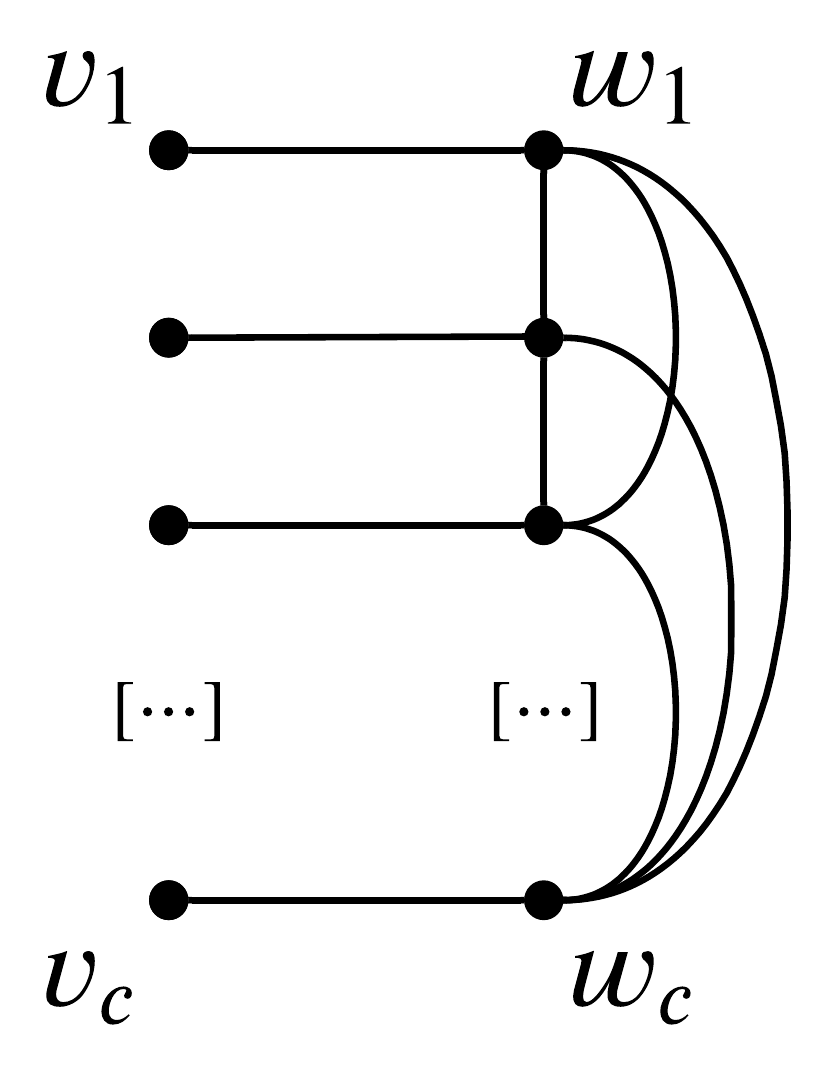}
     \caption{$Thin$ $Spider$}\label{fig:us5}
   \end{minipage}\hfill
   \begin{minipage}{0.33\textwidth}
     \centering
     \includegraphics[height=3cm]{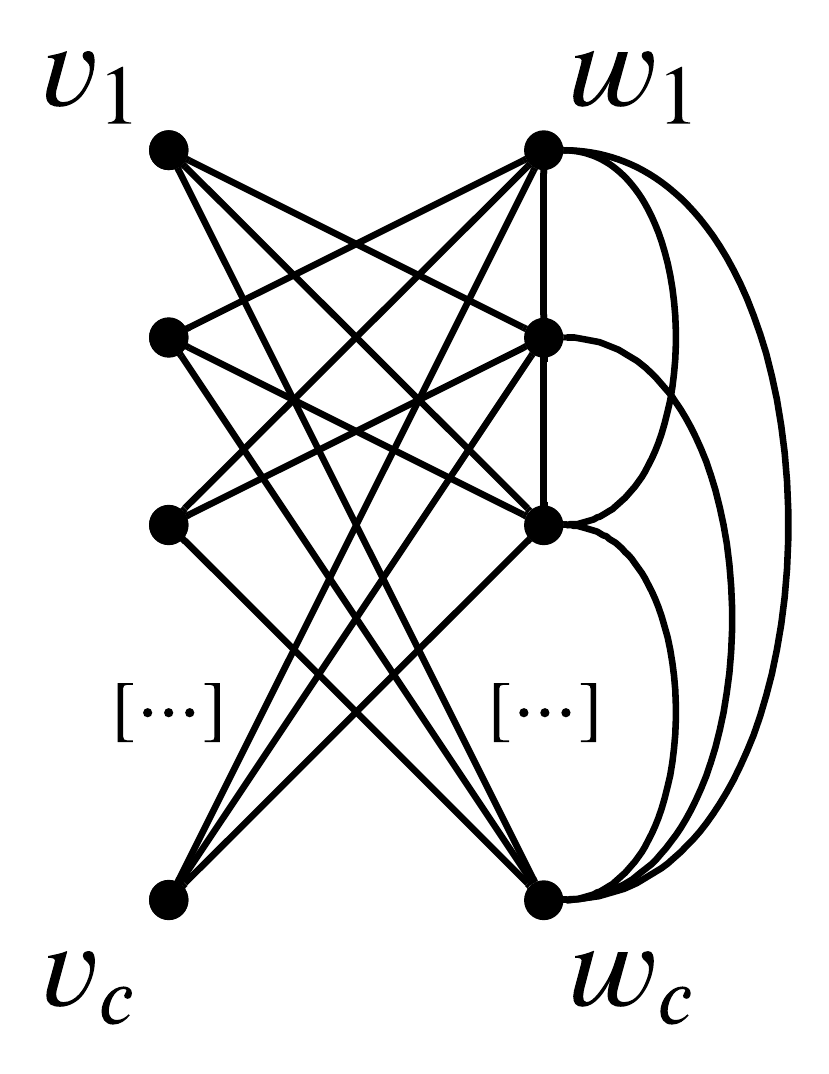}
     \caption{$Thick$ $Spider$}\label{fig:us6}
   \end{minipage}\hfill
\end{figure}
\begin{figure}[!htb]
   \begin{minipage}{0.33\textwidth}
     \centering
     \includegraphics[height=3cm]{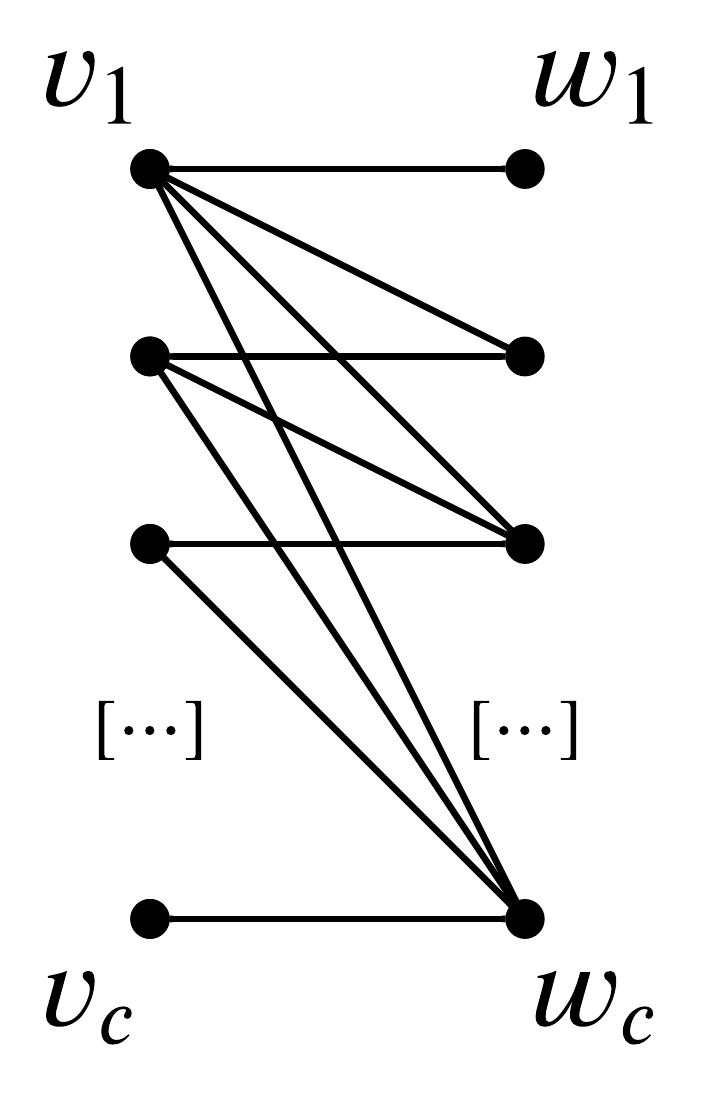}
     \caption{$H_c$}\label{fig:us7}
   \end{minipage}\hfill
   \begin{minipage}{0.33\textwidth}
     \centering
     \includegraphics[height=3cm]{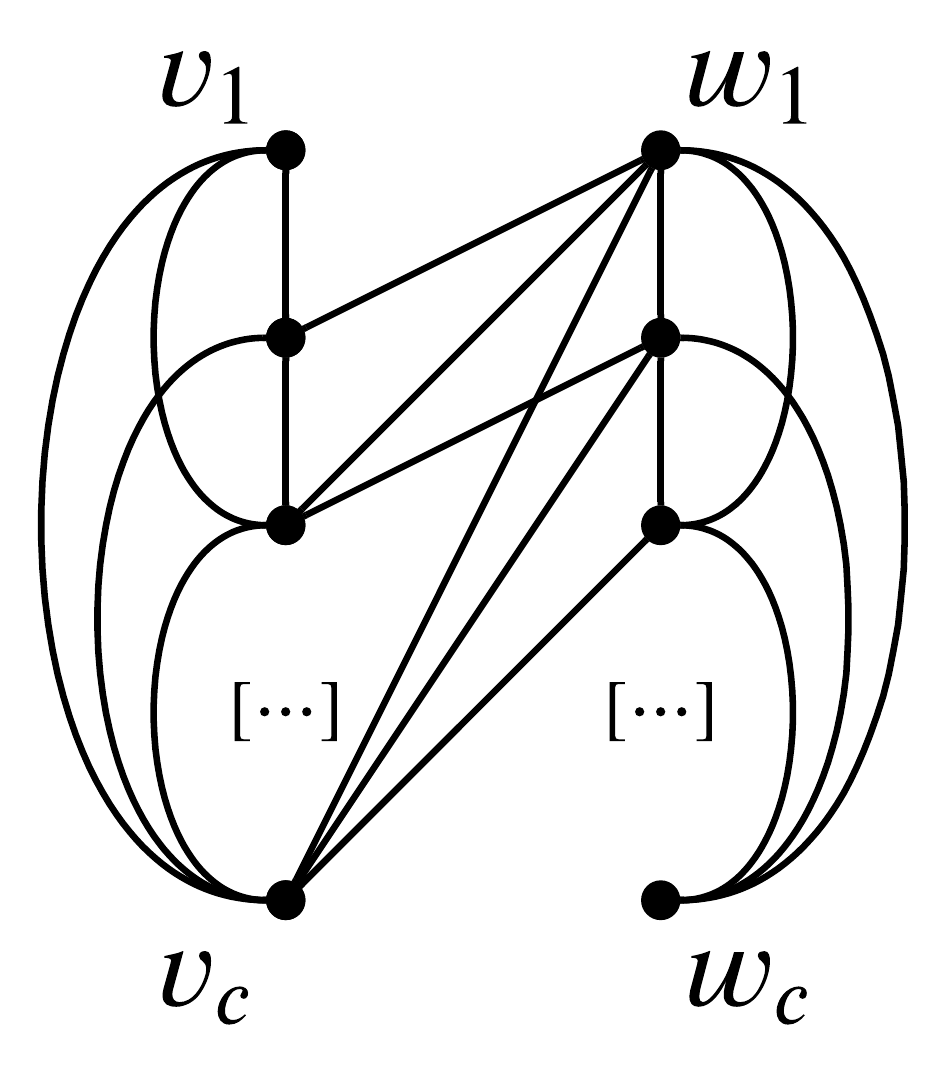}
     \caption{$\overline{H_c}$}\label{fig:us8}
   \end{minipage}\hfill
   \begin{minipage}{0.33\textwidth}
     \centering
     \includegraphics[height=3cm]{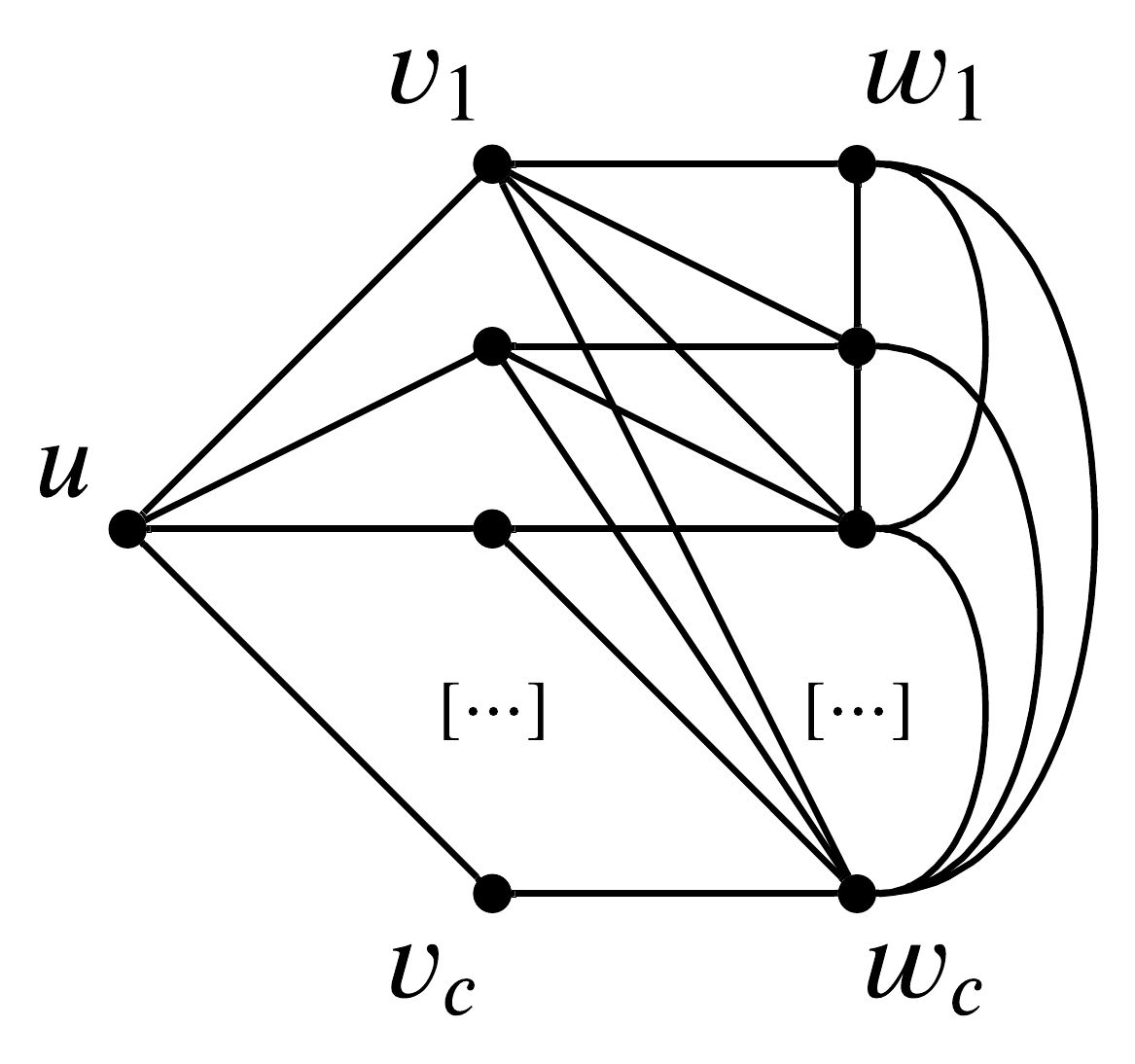}
     \caption{$H'_{c,I}$}\label{fig:us9}
   \end{minipage}\hfill
\end{figure}
\begin{figure}[!htb]
   \begin{minipage}{0.48\textwidth}
     \centering
     \includegraphics[height=3cm]{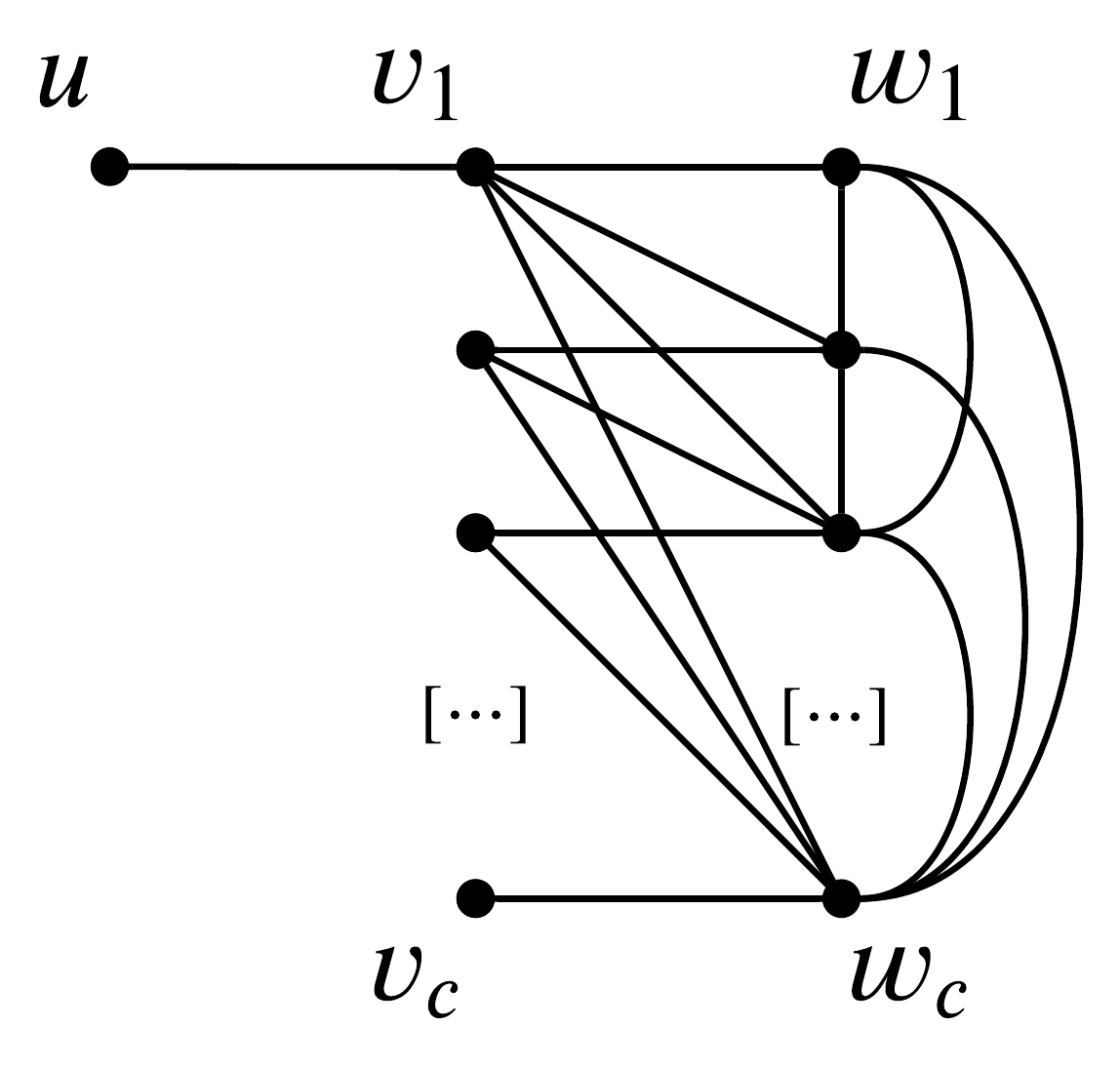}
     \caption{$H^*$}\label{fig:us10}
   \end{minipage}\hfill
   \begin{minipage}{0.48\textwidth}
     \centering
     \includegraphics[height=3cm]{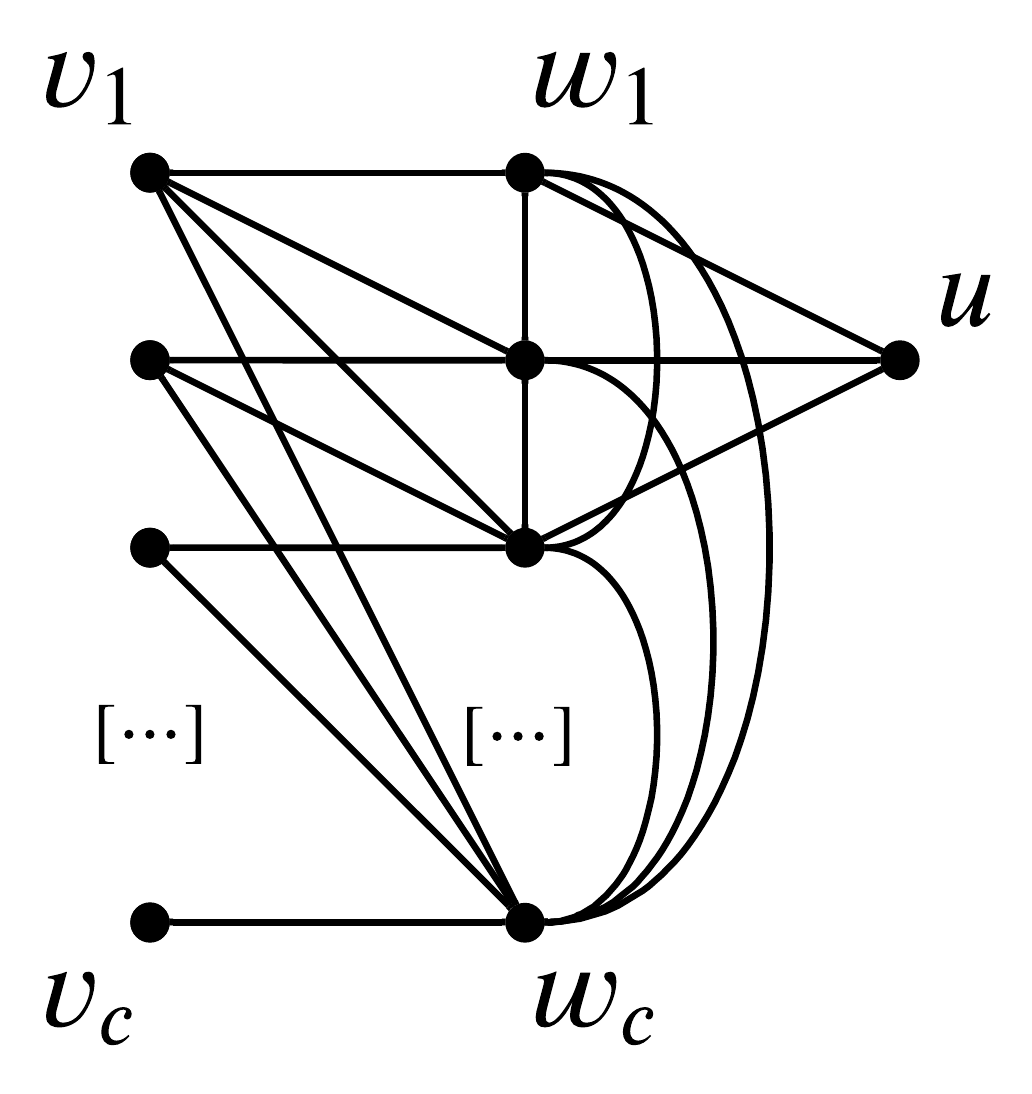}
     \caption{$\overline{H^*}$}\label{fig:us11}
   \end{minipage}\hfill
\end{figure}

The theorem allows us to make $c$ as large as desired, at the cost of increasing the constant $C$, which will be used in Algorithm~\ref{alg:alg}.  In~\cite{Chudnovsky:[1]}, the value of $C$ is inferred using a Ramsey argument and is at least exponential in $c$, but it is nonetheless a constant if we assume that $c$ is.
The number of entries in our branching vectors will depend on $c$ and we shall choose specific values later on.  To ease notation, we may describe our branching vectors by concatenating repeated entries.  That is, for an integer $i$, we write $(i)^r$ for the vector that contains the entry $i$ exactly $r$ times, and given  two vectors $\mathbf{a}$ and $\mathbf{b}$, we write $\mathbf{a} + \mathbf{b}$ to denote the vector obtained by concatenating the two vectors.
For example, $(1, 1, 1, \underbrace{c-1, c-1, \dots, c-1}_\textrm{$r$ times})$ can be written as $(1)^3+(c-1)^{r}$. 

We now turn to finding efficient safe deletion sets in the unavoidable subgraphs.  When we say that ``one can achieve branching vector $B = (b_1, \ldots, b_p)$'', we mean that it is possible to return, in polynomial time, a set of safe deletion sets whose branching factor is $B$.  We use the analogous definition for ``achieving branching factor $b$''.
Importantly, note that our notion of polynomial time assumes that $c$ is a constant, so $n^{O(c)}$ is allowed.

We will encounter achievable vectors of the form $(1, 1) + (c)^{2^c}$, and of the form $(1, 2, \ldots, c) + (\alpha c + \beta)^{\gamma}$, where $c$ can be as large as desired.  We argue that these can yield branching factor $2 + \epsilon$ for any desired $\epsilon$.

\begin{proposition}\label{prop:branch-factors}
    Let $\epsilon > 0$ be any positive real number.  Then the following holds:
    \begin{enumerate}
        \item 
        there exists an integer $c$ such that, for any integer $d \geq c$, the branching vector $(1, 1) + (d)^{2^{d}}$ has branching factor at most $2 + \epsilon$.

        \item 
        for any real $\alpha$ with $0 < \alpha < 1$ and any real $\beta$ and integer $\gamma$, there exists an integer $c$ such that the branching vector $(1, 2, 3, \ldots, c) + (\alpha c + \beta)^{\gamma}$ has branching factor at most $2 + \epsilon$.
    \end{enumerate}
\end{proposition}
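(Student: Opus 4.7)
The plan is to convert the branching-factor equation into a simpler form. The defining polynomial $x^k - \sum_i x^{k-b_i}$ has the same positive real roots as $1 - \sum_i x^{-b_i}$, so setting $f(x) = \sum_i x^{-b_i}$, the branching factor is the unique positive solution of $f(x)=1$: indeed $f$ is strictly decreasing and continuous on $(0,\infty)$ with $f(0^+)=\infty$ and $f(\infty)=0$, so the solution is unique. To conclude that the branching factor is at most $2+\epsilon$ it then suffices to verify $f(2+\epsilon)\leq 1$.

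For the first item, I plug $x = 2+\epsilon$ into $f$ for the vector $(1,1)+(d)^{2^d}$, obtaining
\[
f(2+\epsilon) \;=\; \frac{2}{2+\epsilon} + \frac{2^d}{(2+\epsilon)^d} \;=\; 1 - \frac{\epsilon}{2+\epsilon} + \left(\frac{2}{2+\epsilon}\right)^{d}.
\]
Since $2/(2+\epsilon) < 1$, the last term tends to $0$ as $d$ grows, so it is eventually strictly smaller than $\epsilon/(2+\epsilon)$ and hence $f(2+\epsilon) < 1$. I choose $c$ to be any such $d$; for every $d \geq c$ the last term only shrinks, so the inequality persists.

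For the second item, the analogous substitution into $(1,2,\ldots,c)+(\alpha c + \beta)^{\gamma}$ gives
\[
f(2+\epsilon) \;=\; \sum_{i=1}^{c}(2+\epsilon)^{-i} \;+\; \gamma\,(2+\epsilon)^{-(\alpha c + \beta)}.
\]
The partial geometric sum is bounded by its infinite counterpart $\sum_{i=1}^{\infty}(2+\epsilon)^{-i} = 1/(1+\epsilon)$, which is strictly below $1$ by the fixed margin $\epsilon/(1+\epsilon)$. Because $\alpha > 0$ and $2+\epsilon > 1$, the correction term $\gamma(2+\epsilon)^{-(\alpha c + \beta)}$ tends to $0$ as $c \to \infty$; picking $c$ large enough that it falls below $\epsilon/(1+\epsilon)$ yields $f(2+\epsilon) < 1$.

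I do not anticipate any serious obstacle; the argument is essentially asymptotic bookkeeping. The only subtlety worth flagging is the justification for replacing the polynomial form by $f(x)=1$ when locating the branching factor, which rests on the strict monotonicity of $f$ on $(0,\infty)$ and the factorization $x^k - \sum_i x^{k-b_i} = x^k\bigl(1-f(x)\bigr)$. Everything else reduces to a single geometric-series estimate together with the trivial observation that $(2/(2+\epsilon))^{d}$ and $(2+\epsilon)^{-\alpha c}$ decay to $0$ whenever $\epsilon>0$ and $\alpha>0$.
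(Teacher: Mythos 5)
Your proof is correct and follows essentially the same approach as the paper: evaluate the characteristic polynomial at $2+\epsilon$ and use monotonicity to bound the largest real root. Your normalization to $f(x) = \sum_i x^{-b_i}$ (so the root is the unique positive solution of $f(x)=1$) is a slightly cleaner way to organize the same computation, since the strict monotonicity of $f$ is immediate rather than requiring a derivative check of the unnormalized polynomial as the paper does.
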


\begin{proof}
Focus on the first statement.  Consider the branching vector $(1, 1) + (d)^{2^d}$, whose branching factor is the largest real root of the polynomial $a^k - 2a^{k-1} - 2^d a^{k-d}$.  To simplify, the roots are the same as the polynomial
$f(a) = a^d - 2a^{d-1} - 2^d$.  Observe that for $a > 2$ and any $d > 3$, this function is non-decreasing with respect to $a$ (this can be verified by taking the derivative and noting that it is non-negative for $a \geq 2$).  
Moreover, since $f(2) < 0$, the largest real root is somewhere between $2$ and any $a > 2$ such that $f(a)$ is positive.  

For any fixed $\epsilon > 0$, define $\delta = \epsilon/2$, so that $2 + \epsilon = 2 ( 1 + \delta)$.
Consider the value $a = 2 ( 1 + \delta)$.  Then $f(2(1 + \delta))$ simplifies to $2^d ((1 + \delta)^d - (1 + \delta)^{d - 1} - 1) = 2^d ((1 + \delta)^{d-1} \cdot \delta - 1)$.  Since $\delta > 0$ is fixed and $(1 + \delta)^{d-1} \geq (1 + \delta)^{c-1}$ grows to infinity with $d$ and thus $c$, there is a $c$ large enough such that $(1 + \delta)^{d-1} \cdot \delta - 1 > 0$.  Therefore, for large enough $c$, $f(2(1 + \delta)) = f(2 + \epsilon) > 0$, meaning that the largest real root of $f$ is less than $2 + \epsilon$, as desired.

Now consider the second statement.  We use a similar approach.  The branching factor of $(1, 2, 3, \ldots, c) + (\alpha c + \beta)^{\gamma}$ is the largest real root of $f(a) = a^c - \sum_{i=0}^{c-1} a^i - \gamma a^{\alpha c + \beta}$.  Here $f(a)$ simplifies to 
\[
a^c \left(1 - \frac{1}{a-1} - \frac{\gamma}{a^{c(1 - \alpha) - \beta}} \right) + \frac{1}{a - 1}.
\]
Note that for any fixed $\epsilon > 0$ and $a = 2 + \epsilon$, the quantity $1 - \frac{1}{a - 1} = 1 - \frac{1}{1 + \epsilon}$ is strictly greater than $0$.  Moreover, with large enough $c$ the expression $\frac{\gamma}{(2 + \epsilon)^{c (1 - \alpha) - \beta}}$ can be made arbitrarily small since $\alpha < 1$ and $\beta, \gamma$ are fixed.  We may thus choose a $c$ such that $f(2 + \epsilon)$ is positive.  By observing that, with this particular $c$, increasing $a$ beyond $2 + \epsilon$ also increases $f(a)$, we deduce that the largest real root is at most $2 + \epsilon$.
\end{proof}


\subsection{Specific unavoidable subgraphs}

We can now describe our branching algorithm.  We will refer to the graphs numbered 1-6 in Theorem~\ref{thm:chudnov} as \emph{specific graphs}, which have a fixed structure unlike chains.  We consider all possible induced subgraph that our prime graph $G$ could contain, starting with the specific graphs and then dealing with the more complex case of chains.
In the upcoming figures of graphs, unless stated otherwise, 
we use dotted lines to indicate edges that are deleted, solid doubled lines for edges that we assume are conserved, and for clarity some edges may be grayed out when they are not useful for our explanations.

\begin{lemma}\label{lem:fixed-graphs}
Suppose that a prime graph $G$ contains, as an induced subgraph, either $K_{1,c}$, $L(K_{2,c})$, a Thin Spider, $H_c$, $H'_{c,1}$, $H^*$, or their complements.
Then it is possible to achieve one of the following branching vectors: $(1, 1, c-1)$, $(1)+(c-1)^{2^{c-1}}$, $(1, 1, c-2)$, $(1)+(c-2)^{2^{c-2}}$ and $(1, 1)+(c-1)^{2^{c-2}}$.

Consequently, for any constant $\epsilon > 0$, there is a large enough constant $c$ such that if $G$ contains one of these induced subgraphs, one can achieve branching factor at most $2 + \epsilon$.
\end{lemma}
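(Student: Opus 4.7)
The plan is to handle each of the six specific unavoidable graphs (and their non-self-complementary counterparts) as a separate case, and for each exhibit a safe collection of deletion sets whose branching vector is one of those listed in the statement. The consequence for the branching factor is then immediate from Proposition~\ref{prop:branch-factors}: each of the listed vectors is dominated by one of the form $(1,1)+(d)^{2^d}$ with $d\in\{c-1,c-2\}$ (map the two $1$s to the two $1$s, the $c-1$ or $c-2$ entries to the $d$-entries, and note $2^d$ copies are more than enough), so for any fixed $\epsilon>0$ one can take $c$ large enough that the branching factor falls below $2+\epsilon$.

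The unifying mechanism is to fix one or two ``spine'' edges of the unavoidable structure, branch on the three classical ways to destroy a distinguished $P_4$ lying on that spine, and then in the ``conserve everything'' branch exploit the repetitive symmetry of the structure: every remaining $P_4$ sharing the conserved spine can only be destroyed by a deletion among the $\Theta(c)$ side edges that vary across copies. For the $1$-subdivision of $K_{1,c}$, with center $u$, subdivision vertices $x_1,\dots,x_c$, and leaves $v_1,\dots,v_c$, fix the $P_4$ $v_1,x_1,u,x_2$. Branch on (i) delete $x_1u$, (ii) conserve $x_1u$ and delete $v_1x_1$, or (iii) conserve both $x_1u$ and $v_1x_1$, in which case every $P_4$ of the form $v_1,x_1,u,x_j$ with $j\ne 1$ must be destroyed by deleting $ux_j$, producing $c-1$ forced deletions. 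This realises the vector $(1,1,c-1)$.

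For the denser structures ($L(K_{2,c})$, the thin and thick spiders, the half-graph $H_c$ and its enrichments $H'_{c,I}$ and $H^*$, and the non-self-complementary complements), the recipe is the same, but the ``conserve everything'' branch itself splits into $2^{c-1}$ or $2^{c-2}$ sub-branches. Each sub-branch corresponds to a binary choice, for each of $c-1$ or $c-2$ independent symmetric pairs of side edges, of which edge of the pair is deleted; the symmetry of the structure then forces at least $c-1$ or $c-2$ further deletions in every sub-branch. This produces exactly the vectors $(1)+(c-1)^{2^{c-1}}$, $(1,1)+(c-1)^{2^{c-2}}$, $(1)+(c-2)^{2^{c-2}}$, and the shifted analogue $(1,1,c-2)$ used when one spine edge is absent. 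The complements are handled by directly re-running the analysis on the dual structure, since edge deletions in $G$ do not translate to edge deletions in $\overline{G}$.

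The main obstacle will be the careful case analysis in the denser graphs $L(K_{2,c})$, $H'_{c,I}$, and $H^*$, where many $P_4$s overlap; for every sub-branch one must verify both that the collection is safe (every optimal deletion set is a superset of at least one branch's deletions) and that the forced deletions in the ``conserve'' branches are genuinely forced and counted without double-counting across symmetric sub-branches. Once the case analysis is complete, the ``consequently'' part is a one-line appeal to Proposition~\ref{prop:branch-factors}(1) together with the domination observations above.
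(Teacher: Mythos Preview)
Your proposal is correct and follows essentially the same approach as the paper: a case-by-case analysis that fixes one or two ``spine'' edges of each unavoidable structure, branches on deleting or conserving them, and in the conserve branch exploits the $\Theta(c)$ repeated $P_4$s sharing that spine to force either $c-1$ (or $c-2$) deletions at once or a binary choice per symmetric pair, followed by the same domination argument via Proposition~\ref{prop:branch-factors}. One small miscategorization: in the paper, $L(K_{2,c})$ and $H'_{c,I}$ actually fall into the simpler $(1,1,c{-}2)$ and $(1,1,c{-}1)$ pattern (no exponential sub-branching, because the third edge of every $P_4$ is common), not the ``binary choice per pair'' mechanism you describe for the denser graphs---but this only improves the vectors and does not affect your conclusion.
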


\begin{proof}
First note that all the graphs mentioned have $O(c)$ vertices.  Thus, if $G$ contains one of the induced subgraphs mentioned, we can find one in time $n^{O(c)}$ by simply enumerating every induced subgraph of appropriate size (perhaps this could be improved to $f(c) n^{O(1)}$, but for now we are not concerned with optimizing polynomial factors).  Since $c$ is constant, for each subgraph enumerated we can check if it is isomorphic to one of the subgraphs mentioned in constant time.  Once a subgraph is found, we return a constant number of deletion sets on that subgraph, so computing those deletion sets does not add to the complexity.

We consider each possible induced subgraph that could be found in $G$, and provide a figure for each as a proof.
Note, in most cases, we either delete an edge, or not.  In the latter case, we mark the edge as conserved, as described earlier.  In several cases, that conserved edge $uv$ is part of multiple $P_4$'s, say $u - v - x_i - y_i$ for some range of $i$ values, whose only common edge is $xy$ (see e.g., when $G$ has an induced $\overline{K_{1,c}}$).  In that case, any solution that conserves $uv$ must delete one of the two other edges $v x_i$ or $x_i y_i$ for each possible $i$.  In such cases, we branch on every combination of choice, that is, two choices per value of $i$, which explains the exponential size of our branching vectors with respect to $c$.

\begin{figure}[!htb]
   \begin{minipage}{0.48\textwidth}
     \centering
     \includegraphics[height=2.9cm]{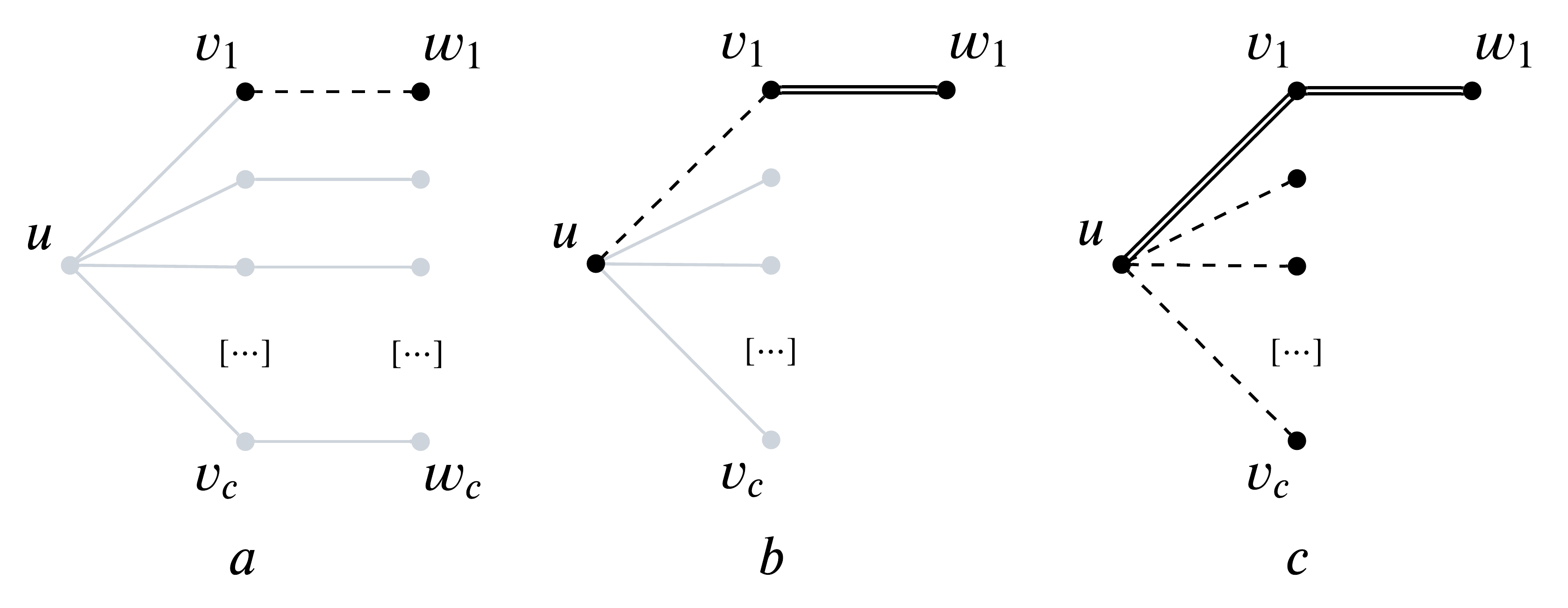}
     \caption{If $G$ has a $K_{1,c}$ as shown, then $w_1-v_1-u-v_i$ is a $P_4$ for each $2\leq i\leq c$.  Either we: (a) delete $w_1v_1$; (b) conserve it and delete $v_1 u$; (c) conserve both $w_1 v_1$ and $u v_1$, enforcing the deletion of $u v_i$ for each $2 \leq i \leq c$. This  results in branching vector $(1, 1, c-1)$.}\label{fig:fig1}
   \end{minipage}\hfill
   \begin{minipage}{0.48\textwidth}
     \centering
     \includegraphics[height=3cm]{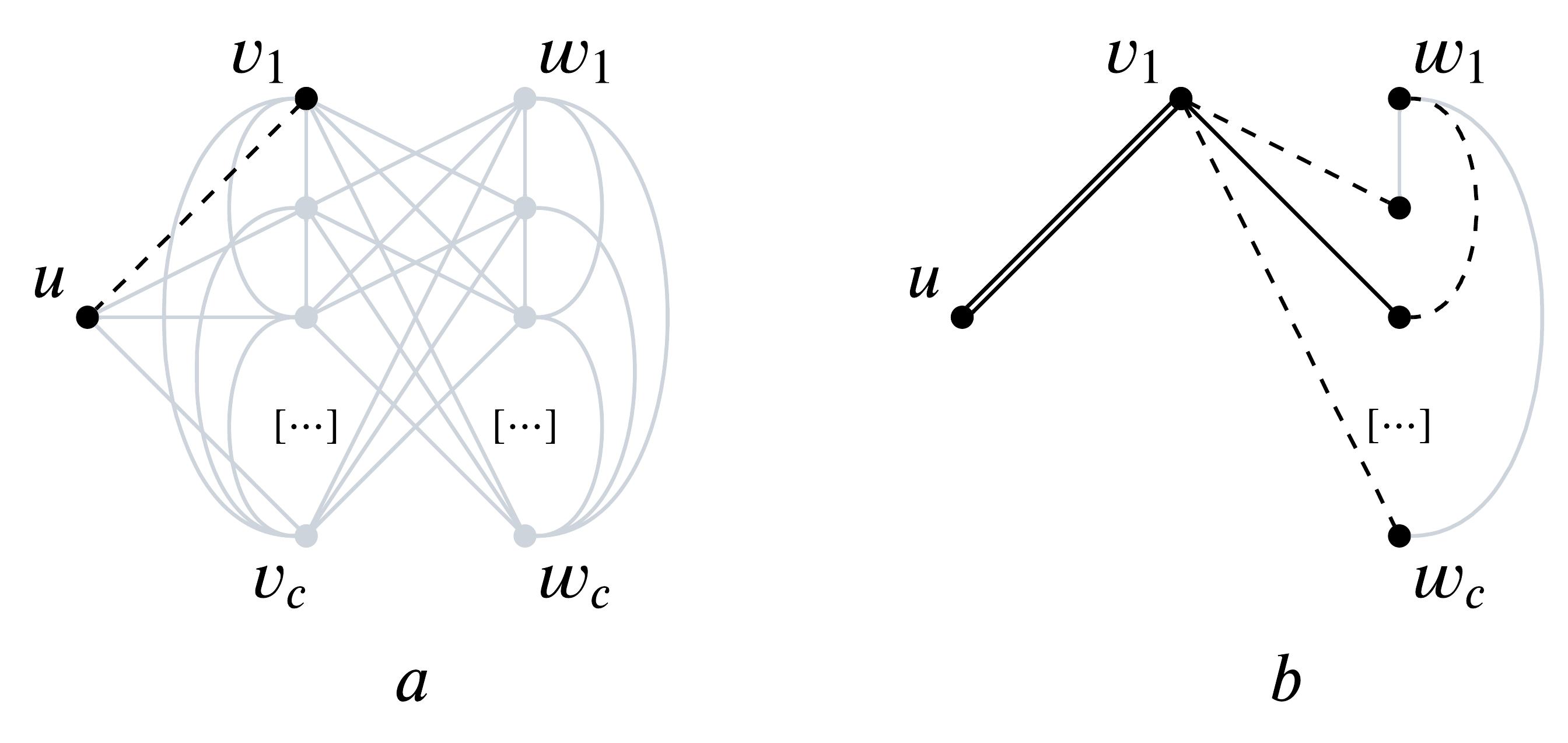}
     \caption{If $G$ has a $\overline{K_{1,c}}$, then $u-v_1-w_i-w_1$ is a $P_4$ for $2\leq i\leq c$.  Either we: (a) delete $u v_1$; (b) conserve $u v_1$.  In that case, for each $i$ with $2\leq i\leq c$, we must delete one of $v_1 w_i$ or $w_i w_1$.  There are $2^{c-1}$ combinations of choices that each delete $c - 1$ edges, and we put each of them in our deletion sets, resulting in branching vector $(1)+(c-1)^{2^{c-1}}$.}\label{fig:fig2}
   \end{minipage}\hfill
\end{figure}
\begin{figure}[!htb]
   \begin{minipage}{0.48\textwidth}
     \centering
     \includegraphics[height=3cm]{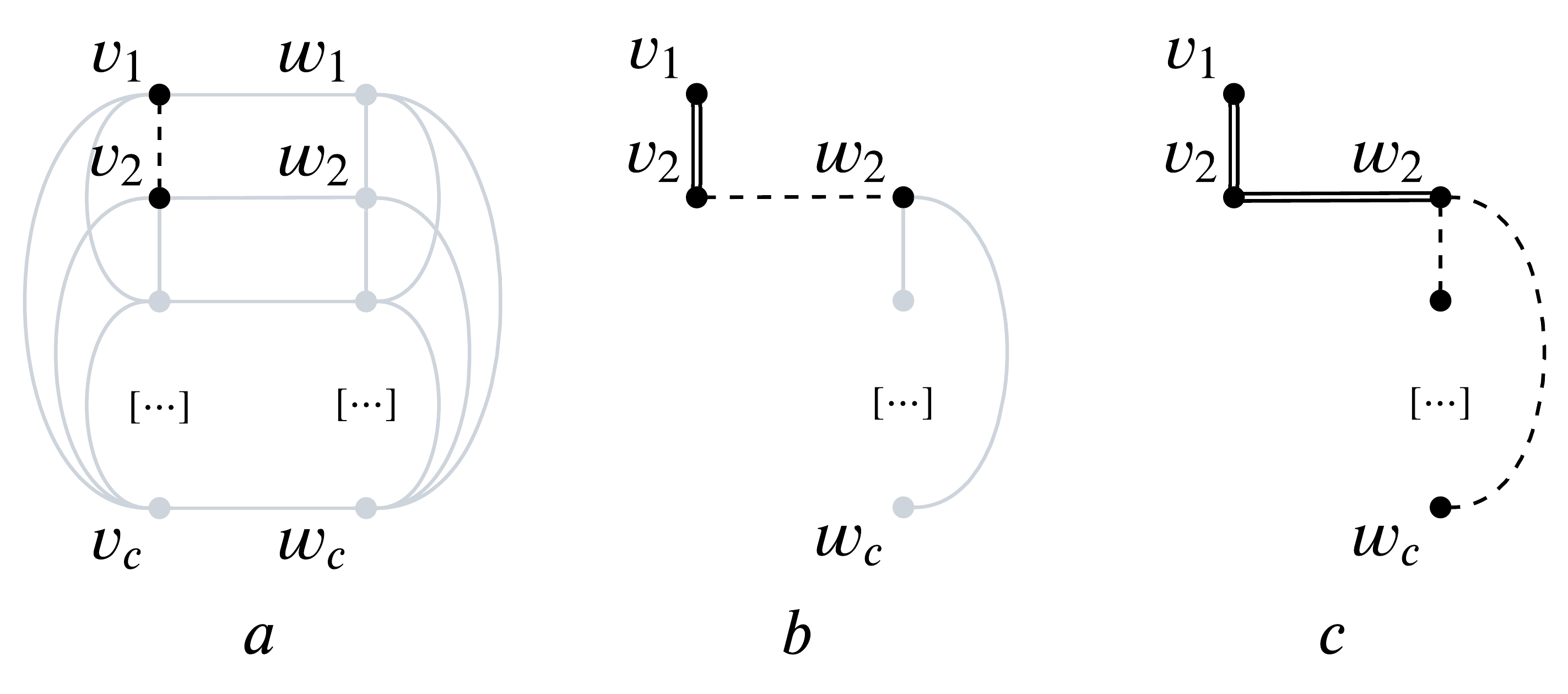}
     \caption{If $G$ has a $L(K_{2,c})$, then $v_1-v_2-w_2-w_i$ is a $P_4$ for $3\leq i\leq c$.  We either: (a) delete $v_1 v_2$; (b) conserve it and delete $v_2w_2$; (c) conserve both $v_1 v_2, v_2 w_2$, enforcing the deletion of $w_2w_i$ for each $3 \leq i \leq c$, resulting in branching vector $(1, 1, c-2)$.}\label{fig:fig3}
   \end{minipage}\hfill
   \begin{minipage}{0.48\textwidth}
     \centering
     \includegraphics[height=3cm]{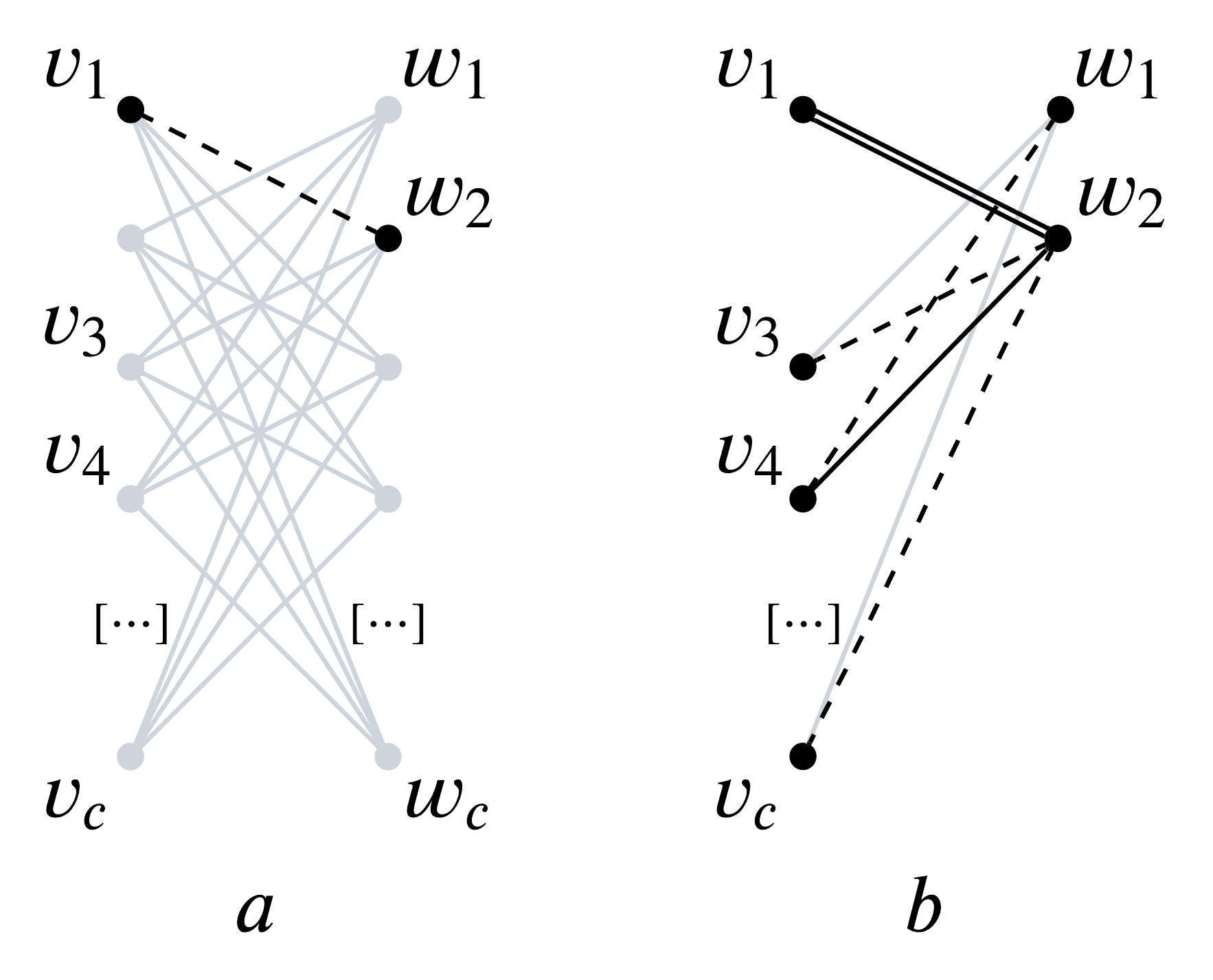}
     \caption{If $G$ has a $\overline{L(K_{2,c})}$, then $v_1-w_2-v_i-w_1$ is a $P_4$ for $3\leq i\leq c$.  We either: (a) delete $v_2 w_2$; (b) conserve it.  Then for each $i$ with $3 \leq i \leq c$, we must delete one of $w_2v_i$ or $v_i w_1$.  We try all combinations, resulting in branching vector $(1)+(c-2)^{2^{c-2}}$.}\label{fig:fig4}
   \end{minipage}\hfill
\end{figure}
\begin{figure}[!htb]
   \begin{minipage}{0.48\textwidth}
     \centering
     \includegraphics[height=3cm]{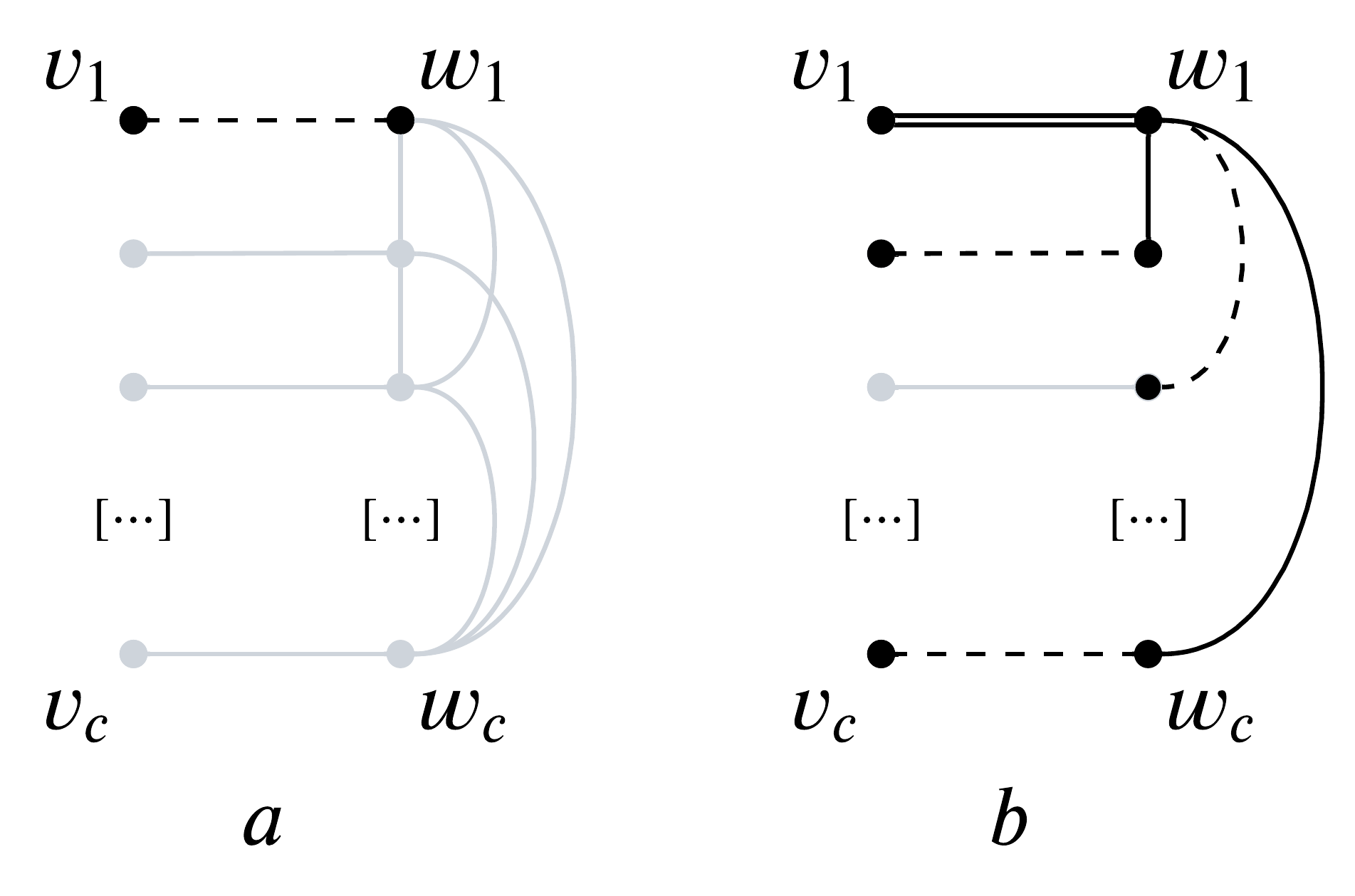}
     \caption{If $G$ has a $Thin$ $Spider$, then $v_1-w_1-w_i-v_i$ is a $P_4$ for $2\leq i \leq c$.  We either: (a) delete $v_1 w_1$; (b) conserve it.  Then for $i$ with $2 \leq i \leq c$, we must delete $w_1 w_i$ or $w_i v_i$, resulting in branching vector $(1)+(c-1)^{2^{c-1}}$.}\label{fig:fig5}
   \end{minipage}\hfill
   \begin{minipage}{0.48\textwidth}
     \centering
     \includegraphics[height=3cm]{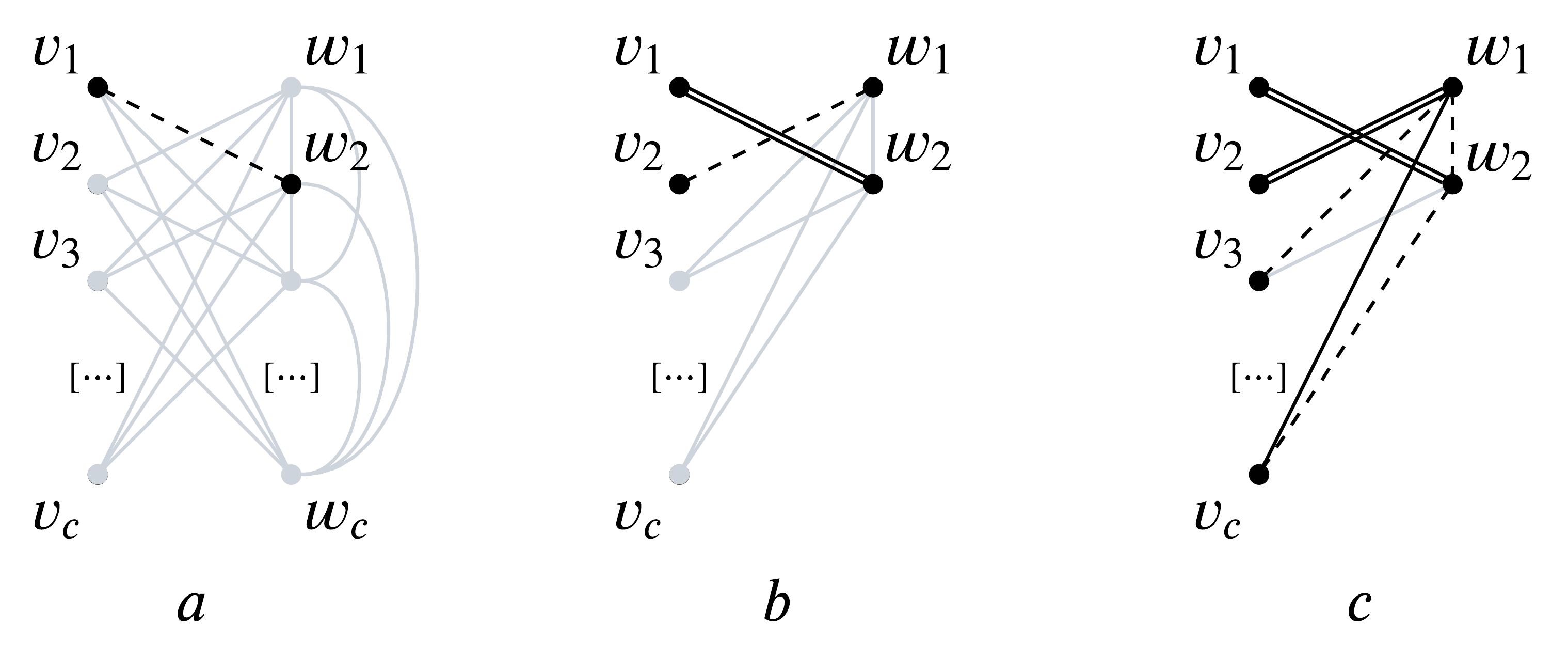}
     \caption{If $G$ has a $Thick$ $Spider$, this case is slightly special.  We have that $v_1-w_2-w_1-v_2$ is a $P_4$.  We either: (a) delete $v_1 w_2$; (b) delete $w_1 v_2$; (c) conserve both.  In that case we are then forced to delete $w_1 w_2$.  In turn, $v_2 - w_1 - v_i - w_2$ is a $P_4$ for $3 \leq i \leq c$.  Since $v_2 w_1$ is conserved, we must delete one of $w_1 v_i$ or $v_i w_2$, resulting in branching vector $(1, 1)+(c-1)^{2^{c-2}}$.}\label{fig:fig6}
   \end{minipage}\hfill
\end{figure}
\begin{figure}[!htb]
   \begin{minipage}{0.48\textwidth}
     \centering
     \includegraphics[height=3cm]{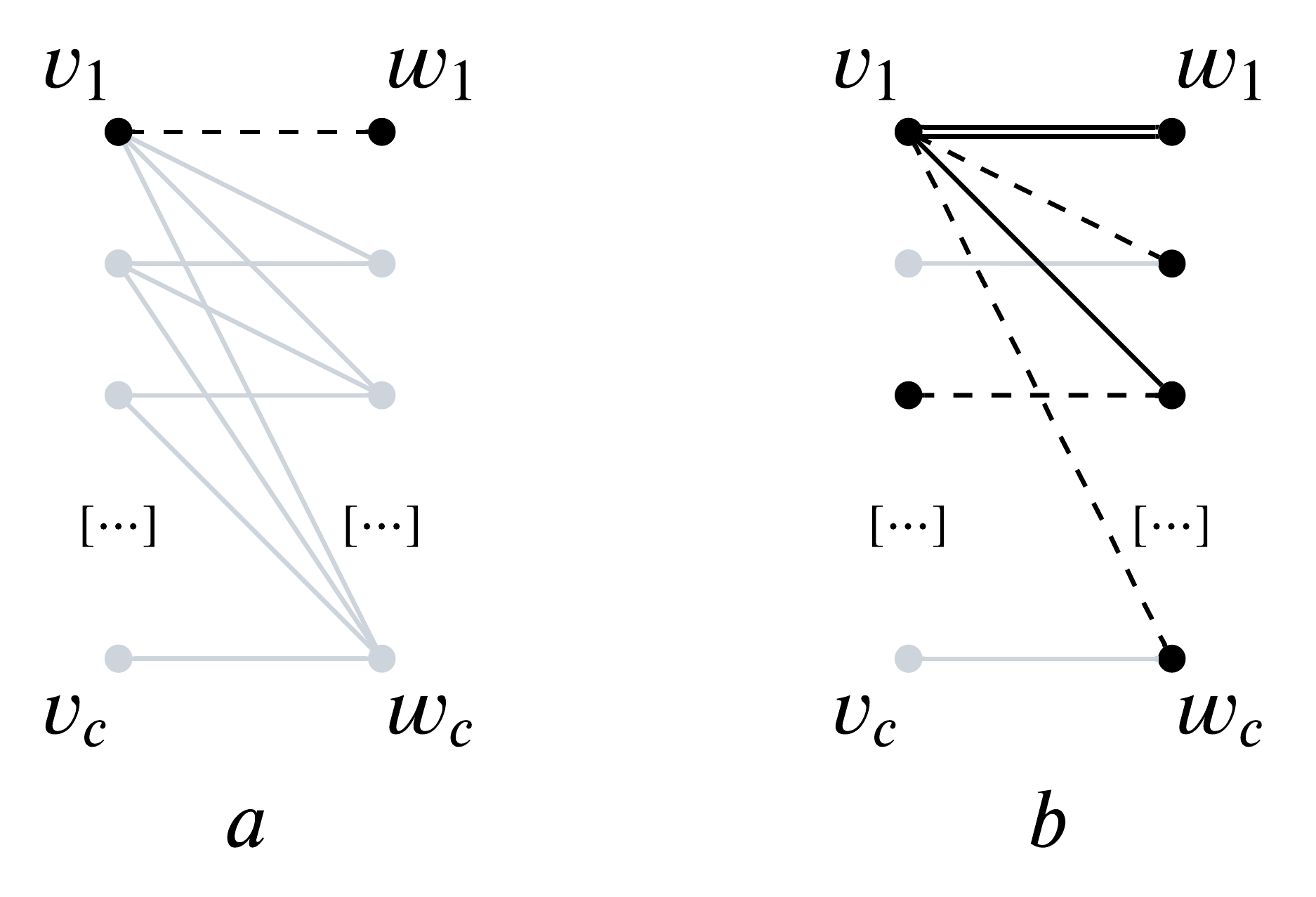}
     \caption{If $G$ has a $H_c$, then $w_1-v_1-w_i-v_i$ is a $P_4$ for  $2\leq i\leq c$.  We either: (a) delete $v_1 w_1$; (b) conserve it.  Then for each $i$ with $2 \leq i \leq c$, we must delete $v_1 w_i$ or $v_i w_i$, resulting in branching vector $(1)+(c-1)^{2^{c-1}}$.}\label{fig:fig7}
   \end{minipage}\hfill
   \begin{minipage}{0.48\textwidth}
     \centering
     \includegraphics[height=3cm]{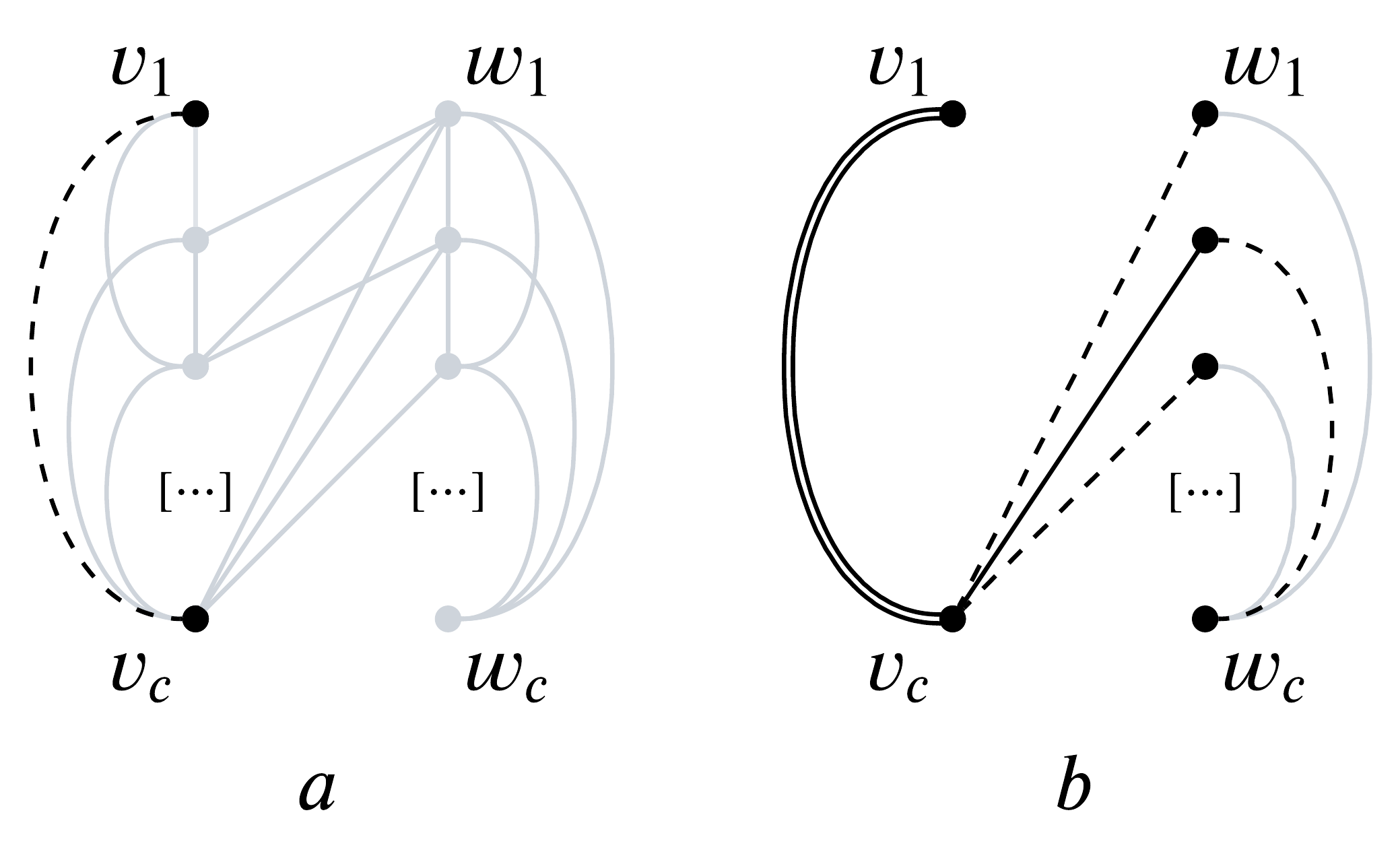}
     \caption{If $G$ has a $\overline{H_c}$, then $v_1-v_c-w_i-w_c$ is a $P_4$ for $1\leq i\leq (c-1)$.  We either: (a) delete $v_1 v_c$; (b) conserve it.  Then for each $i$ with $1 \leq i \leq c-1$, we must delete $v_c w_i$ or $w_i w_c$, resulting in branching vector $(1)+(c-1)^{2^{c-1}}$.}\label{fig:fig8}
   \end{minipage}\hfill
\end{figure}
\newpage
\begin{figure}[!htb]
   \begin{minipage}{1\textwidth}
     \centering
     \includegraphics[height=3cm]{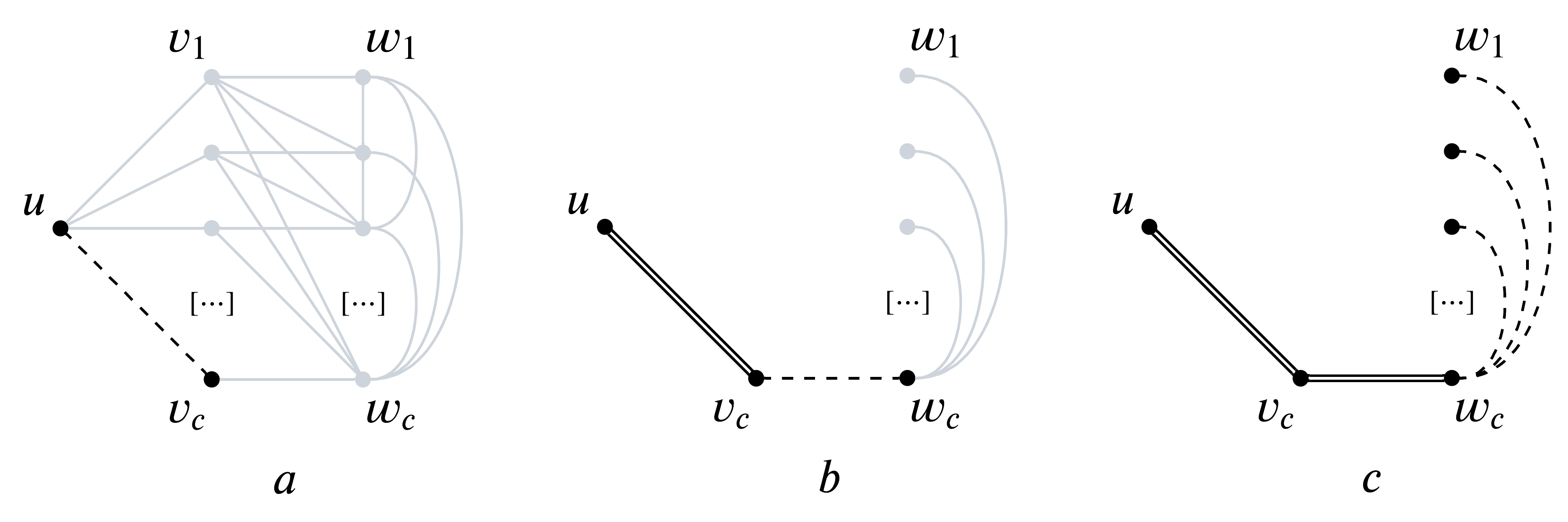}
     \caption{If $G$ has a $H'_{c,I}$, then $u-v_c-w_c-w_i$ is a $P_4$ for $1\leq i\leq (c-1)$.  We either: (a) delete $u v_c$; (b) delete $v_c w_c$; (c) conserve both, enforcing the deletion of $w_c w_i$ for $1 \leq i \leq c- 1$, resulting in branching vector $(1, 1)+(c-1)$. 
 Note that $H'_{c, I}$ is self-complementary, so the same analysis can be applied if $G$ has a $\overline{H'_{c,I}}$.}\label{fig:fig9}
   \end{minipage}\hfill
\end{figure}
\begin{figure}[!htb]
   \begin{minipage}{0.48\textwidth}
     \centering
     \includegraphics[height=3cm]{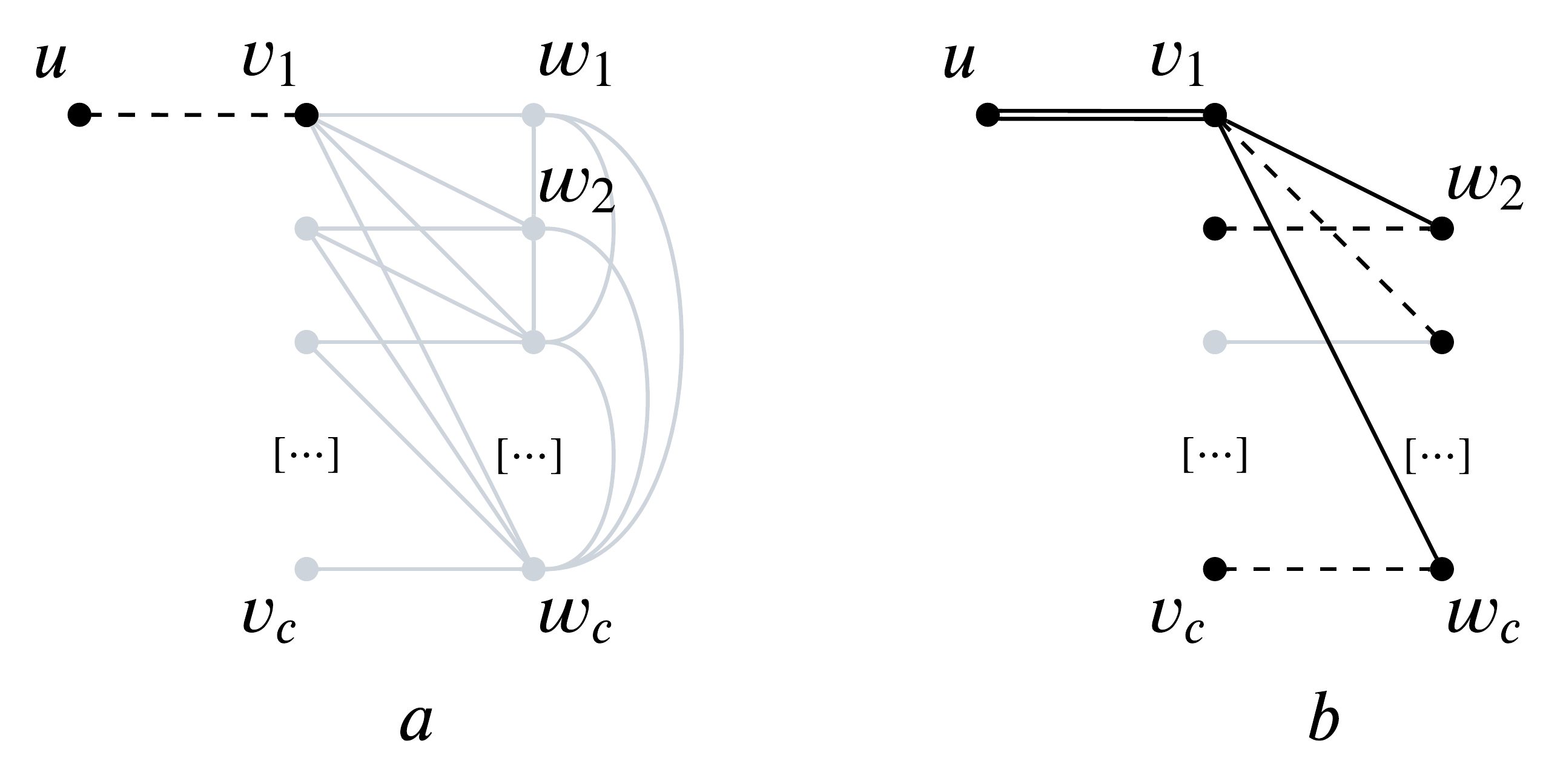}
     \caption{If $G$ has a $H^*$, then $u-v_1-w_i-v_i$ is a $P_4$ for $2\leq i\leq c$.  We either: (a) delete $u v_1$; (b) conserve it.  Then for each $i$ with $2 \leq i \leq c$, we must delete $v_1 w_i$ or $v_i w_i$, resulting in branching vector $(1)+(c-1)^{2^{c-1}}$.}\label{fig:fig10}
   \end{minipage}\hfill
   \begin{minipage}{0.48\textwidth}
     \centering
     \includegraphics[height=3cm]{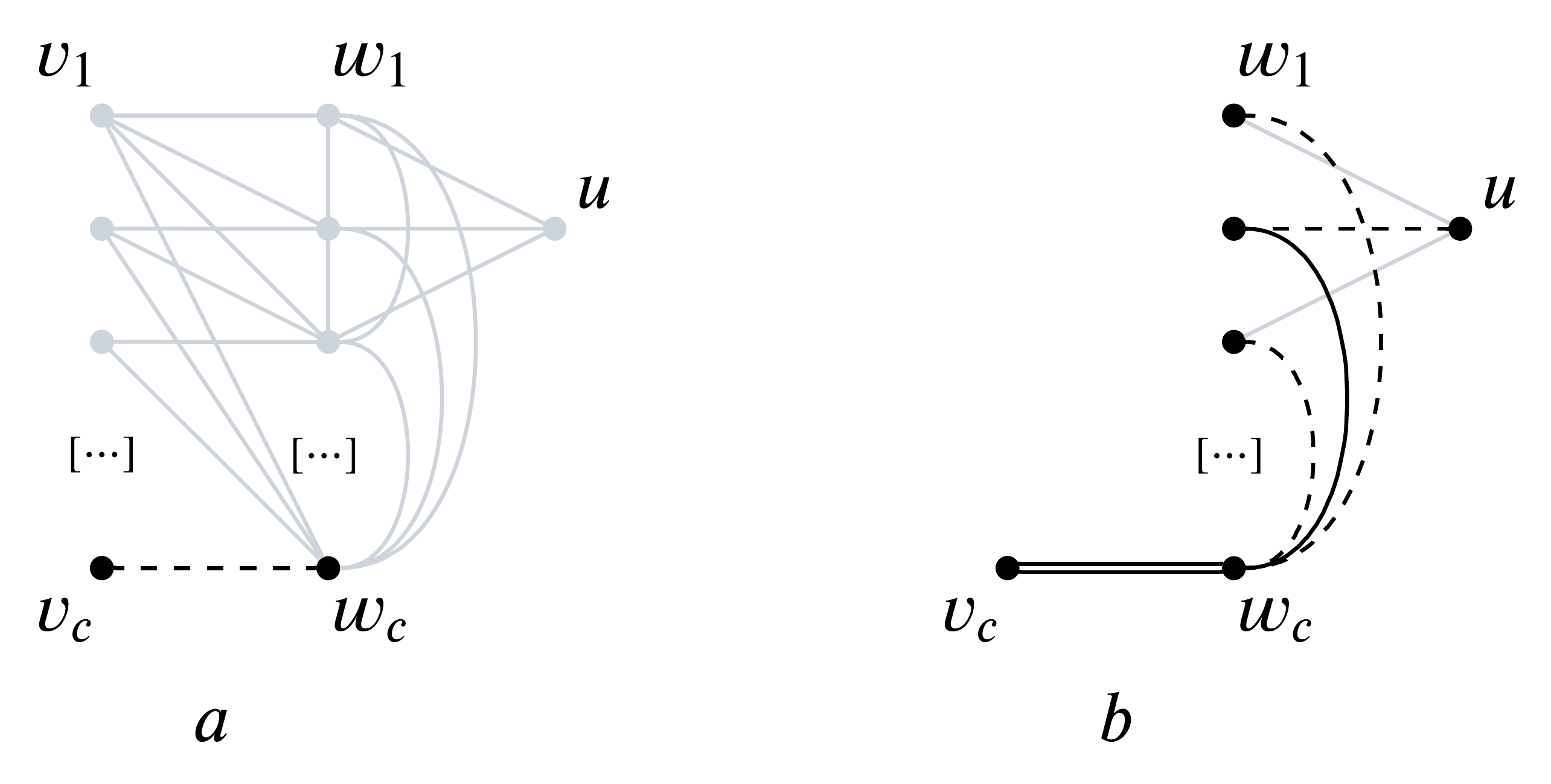}
     \caption{If $G$ has a $\overline{H^*}$, then $v_c-w_c-w_i-u$ is a $P_4$ for $1\leq i\leq (c-1)$.  We either: (a) delete $v_c w_c$; (b) conserve it.  Then for each $i$ with $1 \leq i \leq c - 1$, we must delete $w_c w_i$ or $w_i u$, resulting in branching vector $(1)+(c-1)^{2^{c-1}}$.}\label{fig:fig11}
   \end{minipage}\hfill
\end{figure}
This shows the first part of the lemma.

To see that this allows branching factor $2 + \epsilon$, we apply the first statement of Proposition~\ref{prop:branch-factors} but use $c - 2$ instead of $c$.  In this manner we can choose $c$ large enough so that for any $d \geq c - 2$, the branching vector $(1, 1) + (d)^{2^d}$ has branching factor at most $2 + \epsilon$.  
In particular, the branching vectors $B_{c-1} = (1, 1) + (c-1)^{2^{c-1}}$ and $B_{c-2} = (1, 1) + (c-2)^{2^{c-2}}$ achieve this.  
It then suffices to observe the following: $(1, 1, c - 1)$ and $(1) + (c-1)^{2^{c-1}}$ are both dominated by $B_{c-1}$; $(1, 1, c - 2)$ and $(1) + (c-2)^{2^{c-2}}$ are both dominated by $B_{c-2}$; and $(1, 1) + (c - 1)^{2^{c-2}}$ is dominated by $B_{c-1}$.
\end{proof}

\subsection{Easy chains}

We now assume that $G$ does not contain any of the specific induced subgraphs, so that by Theorem~\ref{thm:chudnov} it contains a long induced chain.  It turns out that the most complicated chains are long induced path graphs or their complement.  We start with the other, easier chains.
Recall that we use a binary representation of our chains, and it will be useful to find a forced pattern in those.  In the proofs that follow, for a string $B$, we write $B[i..j]$ to denote the substring of $B$ from position $i$ to $j$, inclusively

\begin{lemma}\label{lem:binary-pattern}
Let $B$ be a binary string of length $d$ for some integer $d \geq 6$ that is a multiple of $3$.
Then $B$ contains either a substring with $d/3$ consecutive $0$s, a substring with $d/3$ consecutive $1$s, or one of the substrings $[0101], [001], [011]$.
\end{lemma}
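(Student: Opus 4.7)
The plan is to prove the contrapositive: assume $B$ contains none of the substrings $[0101]$, $[001]$, or $[011]$, and deduce that $B$ contains $d/3$ consecutive $0$s or $d/3$ consecutive $1$s. The key is to translate the three forbidden patterns into structural constraints on maximal monochromatic runs of $B$. Writing $B = R_1 R_2 \cdots R_k$ as its alternating decomposition into maximal runs, the absence of $[001]$ forces every $0$-run of length $\geq 2$ to be the \emph{last} run $R_k$ (otherwise its successor starts with $1$, producing $[001]$). Symmetrically, the absence of $[011]$ forces every $1$-run of length $\geq 2$ to be the \emph{first} run $R_1$. Hence every interior run $R_2, \ldots, R_{k-1}$ has length exactly one, and only $R_1$ (if it is a $1$-run) and $R_k$ (if it is a $0$-run) may be long.

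Second, I would rule out $k \geq 5$. Under the above constraints, any $B$ with five or more alternating runs contains four consecutive characters forming $0101$: indeed, if $B$ starts with $1$ and has $k \geq 5$, then $R_2 R_3 R_4 R_5$ reads $0\,1\,0\,1$ (since $R_2, R_3, R_4$ are interior and $R_5$ is neither first nor of length $\geq 2$ as a $1$-run), and symmetrically if $B$ starts with $0$. A short enumeration over $k \in \{1, 2, 3, 4\}$, discarding cases that collapse to length $< 6$ or violate the constraints, leaves $B$ in one of the forms
\[
    1^d,\qquad 0^d,\qquad 1^a 0^b,\qquad 1^a 0 1,\qquad 0 1 0^c,\qquad 1^a 0 1 0^e,
\]
where the exponents are positive integers summing to $d$ or to $d-2$ as appropriate.

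Third, I would verify in each form that the longest run has length at least $d/3$. The first five forms admit a run of length $d$, at least $\lceil d/2 \rceil$, or exactly $d-2$, each of which is $\geq d/3$ for $d \geq 3$. The tight case is $B = 1^a 0 1 0^e$ with $a + e = d - 2$, in which one of $a, e$ is at least $\lceil (d-2)/2 \rceil$; the inequality $\lceil (d-2)/2 \rceil \geq d/3$ holds precisely when $d \geq 6$, which is exactly the hypothesis. The main obstacle is making the enumeration in Step 2 airtight, in particular confirming that $k \geq 5$ necessarily produces a $[0101]$ window and carefully handling edge cases where $a = 0$ or $e = 0$ collapses a larger form into a smaller one already listed.
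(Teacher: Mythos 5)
Your proof is correct, and it takes a genuinely different route from the paper. The paper argues directly and locally: it finds the first $0$ that occurs within the first $2d/3-1$ positions (if none exists, the prefix is already a long run of $1$s), and then scans the next three or four characters, branching on whether each is $0$ or $1$; each branch either yields one of $[001]$, $[011]$, $[0101]$ immediately or leaves a suffix of all $0$s that is long enough because $j$ was bounded. You instead prove the contrapositive via a global structural argument on the run decomposition: absence of $[001]$ pins every long $0$-run to the end, absence of $[011]$ pins every long $1$-run to the start, and absence of $[0101]$ caps the number of runs at four (and further kills the $k=4$ form starting with $0$), leaving a finite list of shapes on which the pigeonhole bound $\max(a,e) \geq \lceil (d-2)/2 \rceil \geq d/3$ is then verified. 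The run-decomposition view is more conceptual and makes the tightness of the bound $d \geq 6$ transparent (it is exactly the threshold for $\lceil (d-2)/2\rceil \geq d/3$ in the extremal form $1^a 0 1 0^e$), whereas the paper's scan is shorter and more elementary but leaves the structure of the surviving strings implicit. The one place your write-up would need tightening before submission is the enumeration in Step 2, as you note yourself: you should verify explicitly that $k \geq 5$ forces $0101$ for both starting characters, and that $k=4$ starting with $0$ also collapses to the string $0101$ itself.
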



\begin{proof}
We may assume that $B$ has a $0$ within the first $2d/3 - 1$ positions, as otherwise $B$ starts with at least $d/3$ consecutive $1$s and we are done.
Let $j \leq 2d/3-1$ be the position of the first $0$ within $B$. 
If $B[j..d]$ contains only $0$s or only $1$s, then our initial statement holds, so we assume this is not the case. 
If $B[j..d]$ begins with a series of two $0$s or more, then since $B[j..d]$ contain at least a $1$, the substring $[001]$ occurs. Otherwise, position $j+1 \leq 2d/3$ is a $1$ and we either find, starting from position $j$, $[011]$ or $[010]$. We are done if the first case occurs, and continue with the second, thus assuming that position $j + 2$ is $0$. 
Next note that position $j + 3$ exists since $j \leq 2d/3 - 1$ and $d \geq 6$.
If position $j+3$ is a $1$ then $[0101]$ occurs.
If instead position $j+3$ is a $0$, then either $B[j+2..d]$ only contains $0$s and has length $d - (j + 2) + 1 \geq d - (2d/3 - 1) - 1 \geq d/3$ and the statement still holds, or $B[j+2..d]$ has a $1$ and $[001]$ occurs. 
\end{proof}

We now look for unavoidable patterns in the chains.  Note that for simplicity, we require an initial chain with $4c$ vertices, because we will focus on a subchain with $c$ vertices.  This is for convenience, as we can ensure that $4c$ is large enough by increasing $C$ appropriately.

\begin{lemma}\label{lem:forced-subchain}
    Suppose that $G$ has an induced chain $P$ with $4c \geq 8$ vertices.  Then $G$ has a chain with $c$ vertices whose binary representation either has only $0$s or only $1$s, or ends with one of $[001], [011], [0101]$. 
\end{lemma}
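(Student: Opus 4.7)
The plan is to apply Lemma~\ref{lem:binary-pattern} to a carefully chosen length-$3c$ substring of the binary representation of $P$, and then cut out a $c$-vertex subchain ending at the detected pattern.

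Let $B \in \{0,1\}^{4c}$ denote the binary representation of $P$ and define $B' := B[c+1..4c]$, the suffix of length $3c$. The reason for discarding the first $c$ characters of $B$ is to leave a ``runway'' of $c$ positions to the left of any pattern we eventually find in $B'$: this way, no matter where in $B'$ the pattern occurs, there will always be enough room inside $B$ to extend a $c$-vertex subchain leftward from the pattern.

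Since $3c \geq 6$ is a multiple of $3$, Lemma~\ref{lem:binary-pattern} applies to $B'$ with $d = 3c$, so $B'$ contains either a run of $c$ consecutive $0$s, a run of $c$ consecutive $1$s, or one of the substrings $[001]$, $[011]$, $[0101]$. In the two monochromatic cases, the $c$ consecutive matching characters correspond to $c$ consecutive vertices of $P$; by the remark after the definition of chain (namely, that $B[i..j]$ is the representation of the subchain on $v_i, \ldots, v_j$), this directly gives a $c$-vertex subchain whose representation is all $0$s or all $1$s, as required.

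In the remaining cases, suppose for concreteness that $[001]$ occurs at positions $i, i+1, i+2$ of $B$. Because this occurrence lies inside $B' = B[c+1..4c]$, we have $i \geq c+1$, hence $i - c + 3 \geq 4 \geq 1$. Therefore the vertices $v_{i-c+3}, v_{i-c+4}, \ldots, v_{i+2}$ all exist in $P$ and form a subchain of size exactly $c$, whose binary representation $B[i-c+3..i+2]$ ends with $[001]$. The cases $[011]$ and $[0101]$ are handled identically (in the latter, we take $v_{i-c+4}, \ldots, v_{i+3}$, and the inequality $i - c + 4 \geq 1$ still follows from $i \geq c+1$, while $i + 3 \leq 4c$ holds since the pattern lies in $B'$). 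This plan has no real conceptual obstacle; the task is essentially bookkeeping on indices, and the only design decision that matters is to work with the suffix $B'$ instead of the full string $B$, for the runway reason above.
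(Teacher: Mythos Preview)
Your proof is correct and follows essentially the same argument as the paper: both restrict to the suffix $B' = B[c+1..4c]$, apply Lemma~\ref{lem:binary-pattern} with $d = 3c$, and then extract a length-$c$ window ending at the detected pattern (the paper indexes by the last position $j$ of the pattern and takes $B[j-c+1..j]$, while you index by the first position $i$, but this is the same computation). Your exposition adds a bit more index-checking, but there is no substantive difference.
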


\begin{proof}
    It suffices to argue that if $B$ is the binary representation of $P$, then $B$ contains a substring of length $c$ with only $0$s, only $1$s, or one of $[001], [011], [0101]$.

    To that end, let $B' = B[c + 1..4c]$ be the substring of $B$ with the last $3c$ characters and apply Lemma~\ref{lem:binary-pattern} on $B'$.  
    If $B'$ has a substring of $3c/3 = c$ consecutive $0$s or consecutive $1$s, we have found the chain required by the lemma.  
    Otherwise, there is an occurrence of $[001], [011]$, or $[0101]$ in $B'$.  Let $j \geq c + 1$ be the position of $B$ at which this substring occurs in $B$, with $j$ the last position of the substring.  Then the substring $B[j - c + 1 .. j]$ has length $c$ and ends with one of $[001], [011], [0101]$.
\end{proof}

We next handle easy chains.

\begin{lemma}\label{lem:easy-chains}
Suppose that $G$ contains an induced chain with $c$ vertices whose binary representation ends with $[0101], [001]$, or $[011]$.  
Then one can achieve a branching vector among 
$(1, 1)+(c-3)^{2^{c-3}}$, $(1, 1)+(c-4)^{2^{c-4}}$ and $(1, 1)+(c-4)$.

Consequently, for any constant $\epsilon > 0$, there is a large enough constant $c$ such that if $G$ contains one of these induced chains, one can achieve branching factor at most $2 + \epsilon$.

\end{lemma}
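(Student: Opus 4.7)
The plan is to handle each of the three ending patterns separately, identify in each a family of forced induced $P_4$s, and branch on their common structure to achieve the stated vectors. Throughout, write $v_1, \ldots, v_c$ for the chain, and recall that a vertex of type $0$ is adjacent to all strictly earlier vertices except its immediate predecessor, while a type $1$ vertex is adjacent only to its immediate predecessor among the earlier vertices.

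For the ending $[011]$, unwinding the last three types gives $v_c$ a pendant of $v_{c-1}$, $v_{c-1}$ a pendant of $v_{c-2}$, and $v_{c-2}$ adjacent to each $v_i$ with $i \in \{1,\dots,c-4\}$. Each such quadruple $\{v_i, v_{c-2}, v_{c-1}, v_c\}$ induces the $P_4$ $v_i - v_{c-2} - v_{c-1} - v_c$, and all $c-4$ of these $P_4$s share the two edges $v_{c-2}v_{c-1}$ and $v_{c-1}v_c$. Standard three-way branching (delete $v_{c-1}v_c$; conserve it and delete $v_{c-2}v_{c-1}$; conserve both and delete every $v_iv_{c-2}$) yields $(1,1,c-4)=(1,1)+(c-4)$. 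For the ending $[001]$, the only shift is that $v_{c-1}$ is no longer a pendant: both $v_{c-1}$ and $v_{c-2}$ are adjacent to every $v_i$ with $i \in \{1,\dots,c-4\}$, so each quadruple $\{v_{c-2}, v_i, v_{c-1}, v_c\}$ induces the $P_4$ $v_{c-2}-v_i-v_{c-1}-v_c$, and these $c-4$ $P_4$s share only the edge $v_{c-1}v_c$. Branching as (a) delete $v_{c-1}v_c$, or (b) conserve it and for every $P_4$ independently delete one of $v_{c-2}v_i$ or $v_iv_{c-1}$ gives $(1)+(c-4)^{2^{c-4}}$, which is dominated by the target $(1,1)+(c-4)^{2^{c-4}}$.

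The $[0101]$ ending is the most delicate. The last four vertices alone contribute only the two $P_4$s $v_{c-2}-v_{c-3}-v_{c-1}-v_c$ and $v_{c-2}-v_{c-3}-v_{c-1}-v_{c-4}$, which is too few. I would therefore exploit the type of $v_{c-4}$: when $v_{c-4}$ has type $0$ (chain ending $[00101]$), it is adjacent to each of $v_1, \ldots, v_{c-6}$, and each such $v_i$ contributes a new $P_4$ $v_{c-2}-v_{c-3}-v_i-v_{c-4}$. Together with the two ``core'' $P_4$s this yields $c-4$ $P_4$s all sharing the edge $v_{c-3}v_{c-2}$; branching first on $v_{c-3}v_{c-2}$, then on $v_{c-3}v_{c-1}$, and finally exhaustively on the remaining $P_4$s produces a vector dominated by $(1,1)+(c-3)^{2^{c-3}}$. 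When $v_{c-4}$ has type $1$, I would propagate the analysis one position further back, subdividing on the type of $v_{c-5}$ (and iterating): either a type-0 vertex is eventually reached, restoring the dense $P_4$ family, or the forced adjacencies expose a sub-chain on $\Theta(c)$ vertices ending in $[001]$ or $[011]$, already handled.

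Finally, Proposition~\ref{prop:branch-factors} applied with $d = c-3$ or $d = c-4$ (part 1), together with a direct calculation for $(1,1,c-4)$ whose only non-unit entry grows with $c$, shows that each of the three branching vectors has branching factor at most $2+\epsilon$ once $c$ is chosen large enough with respect to $\epsilon$. The main obstacle is the $[0101]$ case: the four-vertex tail contributes only two forced $P_4$s, so obtaining the $\Theta(c)$ $P_4$s required for the stated vector depends on the prefix structure, forcing the sub-case analysis on the type of $v_{c-4}$ (and if necessary $v_{c-5}$).
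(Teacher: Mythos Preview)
Your treatment of the $[011]$ and $[001]$ endings is correct and essentially matches the paper. The gap is in the $[0101]$ case, where you miss the key idea and your proposed workaround does not close.

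The paper's argument for $[0101]$ does not look for many $P_4$s in the original chain. It first branches on the single $P_4$ $v_c - v_{c-1} - v_{c-3} - v_{c-2}$: either delete $v_{c-3}v_{c-2}$, or delete $v_{c-1}v_c$, or conserve both. In the third branch the middle edge $v_{c-3}v_{c-1}$ is \emph{forced} to be deleted. The point is that both $v_{c-1}$ and $v_{c-3}$ are type $0$, so each is adjacent to every $v_i$ with $i \leq c-5$; once the edge $v_{c-1}v_{c-3}$ is removed, every such $v_i$ yields a new $P_4$ $v_c - v_{c-1} - v_i - v_{c-3}$ in which the conserved edge $v_cv_{c-1}$ forces one more deletion. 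This gives $\Theta(c)$ forced deletions in the third branch \emph{regardless of the types of $v_{c-4}, v_{c-5}, \ldots$}, and the resulting vector is dominated by $(1,1)+(c-3)^{2^{c-3}}$.

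Your approach instead looks for $P_4$s already present before any deletion, which forces a case split on the type of $v_{c-4}$. The type-$0$ subcase can be made to work, but your type-$1$ fallback (``iterate further back until a type-$0$ vertex appears, or reduce to $[001]/[011]$'') fails: consider the chain with representation $[1^{c-4}0101]$. Every $v_j$ with $j \leq c-4$ is type $1$, so your iteration never hits a type-$0$ vertex, and this string contains no occurrence of $[001]$ or $[011]$ anywhere. Thus neither exit from your loop is available, and the argument stalls. The forced deletion of $v_{c-3}v_{c-1}$ is precisely what rescues this case uniformly.
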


\begin{proof}
First note that if $G$ contains an induced chain with $c$ vertices, we can find it by brute-force enumeration of the $n^{O(c)}$ induced subgraphs of $G$.  For each subgraph, checking if it is a chain and computing its binary representation takes constant time, so the desired chain can be found in polynomial time.  We then return a constant number of deletion sets as follows.

Let $P$ be the induced chain with $c$ vertices whose binary representation ends with $[0101], [001]$, or $[011]$.   We provide Figures~\ref{fig:fig12} through~\ref{fig:fig14} as proof.
Recall that a vertex of type 1 is only adjacent to its immediate predecessor, and a vertex of type 0 to all predecessors except the immediate one. 

\begin{figure}[H]
   \begin{minipage}{1\textwidth}
     \centering
     \includegraphics[width=\linewidth]{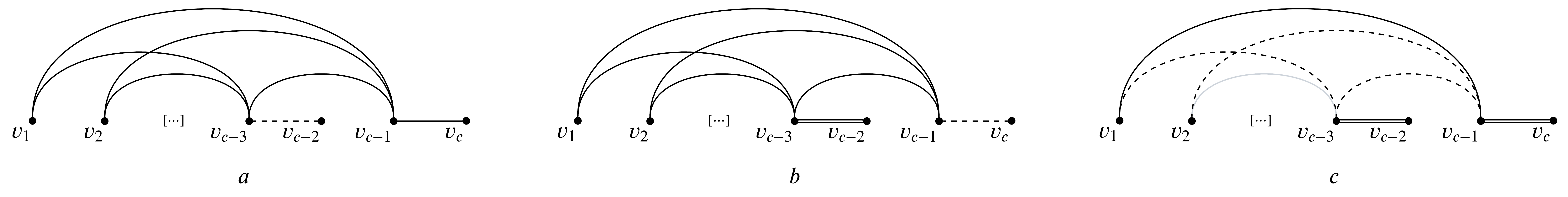}
     \caption{
     If $P$ ends with $[0101]$, corresponding to vertices $v_{c-3}, v_{c-2}, v_{c-1}, v_c$, then 
     $v_c - v_{c-1} - v_{c-3} - v_{c - 2}$ is a $P_4$.  We either (a) delete $v_{c-3} v_{c-2}$; (b) delete $v_{c-1} v_c$; (c) conserve both edges, in which case we must delete $v_{c-3} v_{c-1}$.  After doing so, $v_c - v_{c-1} - v_i - v_{c-3}$ is a $P_4$ for every $1 \leq i \leq c - 4$.  For each such $P_4$, we must delete either $v_i v_{c-3}$ or $v_i v_{c-1}$.  We add to our deletion sets all $2^{c-4}$ combinations, which each delete $c -4 + 1 = c - 3$ edges.  This results in branching vector $(1, 1) + (c - 3)^{2^{c-4}}$.
     }\label{fig:fig12}
   \end{minipage}\hfill
\end{figure}
\begin{figure}[H]
   \begin{minipage}{1\textwidth}
     \centering
     \includegraphics[height=2cm]{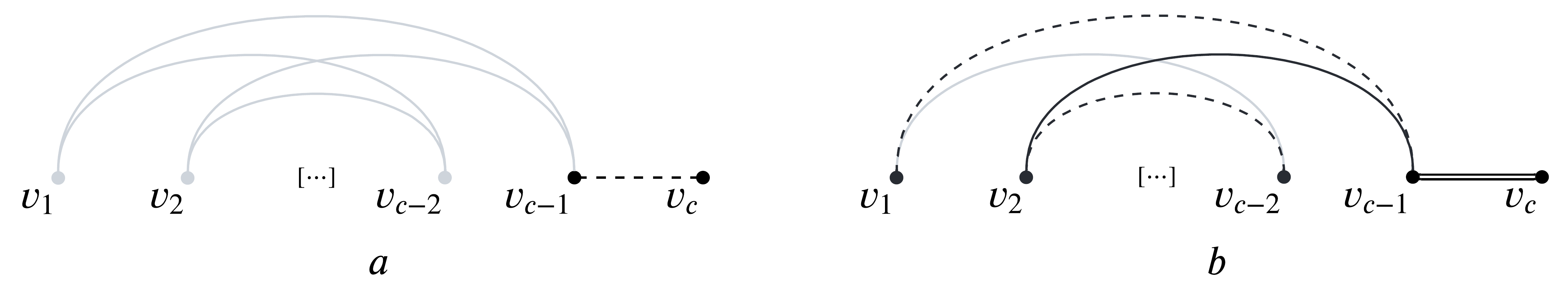}
     \caption{If $P$ ends with $[001]$, then $v_c-v_{c-1}-v_i-v_{c-2}$ is a $P_4$, where $1\leq i\leq c-4$.  We either: (a) delete $v_c v_{c-1}$; (b) conserve it.  Then for each $i$ with $1 \leq i \leq c- 4$, we must delete one of $v_i v_{c-2}$ or $v_i v_{c-1}$.  There are $2^{c - 4}$ possible combinations that each delete $c - 4$ edges, resulting in branching factor $(1)+(c-4)^{2^{c-4}}$.}\label{fig:fig13}
   \end{minipage}\hfill
\end{figure}
\begin{figure}[H]
   \begin{minipage}{1\textwidth}
     \centering
     \includegraphics[height=1.68cm]{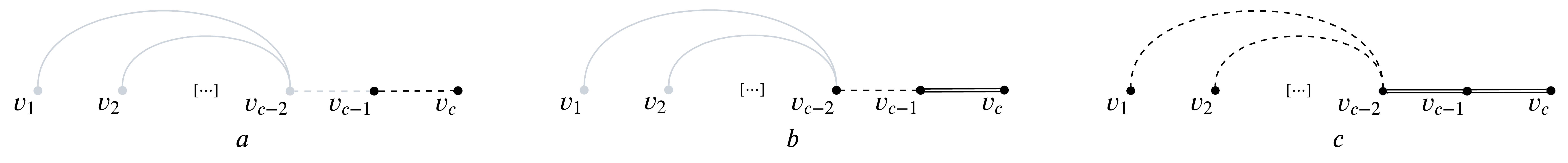}
     \caption{If $P$ ends with $[011]$, then $v_c-v_{c-1}-v_{c-2}-v_{i}$ is a $P_4$, where $1\leq i\leq c - 4$.  We either: (a) delete $v_{c-1} v_c$; (b) delete $v_{c-2} v_{c-1}$; (c) conserve both, in which case we must delete $v_i v_{c-2}$ for each $1 \leq i \leq c - 4$, resulting in branching vector $(1, 1, c - 4)$.}\label{fig:fig14}
   \end{minipage}\hfill
\end{figure}
To see that these achieve branching factor at most $2 + \epsilon$, we apply Proposition~\ref{prop:branch-factors} and take a large enough $c - 4$ that guarantees it (in particular, $(1, 1) + (c - 4)$ is dominated by $(1, 1) + (c - 4)^{2^{c-4}}$).
\end{proof}



\subsection{Chains that are complements of path graphs}

We now focus on the case where $G$ has an induced chain with $c$ vertices whose binary representation has only $0$s. 

\begin{lemma}\label{lem:chain-complement}
    Suppose that $G$ contains an induced chain with $c \geq 6$ vertices whose binary representation has only $0$s.  Then one can achieve branching vector $(c - 3, c - 3) + (1, 2, 3, \ldots, 2c - 6)$, which is at most $2 + \epsilon$ for large enough $c$.
\end{lemma}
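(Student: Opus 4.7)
The chain with binary representation consisting only of $0$'s induces on $\{v_1,\ldots,v_c\}$ the graph $\overline{P_c}$, in which $v_iv_j$ is an edge iff $|i-j|\geq 2$. Abbreviate $S_i := v_iv_{i+2}$ for $i\in\{1,\ldots,c-2\}$ (the \emph{short} edges) and $L_i := v_iv_{i+3}$ for $i\in\{1,\ldots,c-3\}$ (the \emph{long} edges). The first step is to observe that the only induced $P_4$'s on $\{v_1,\ldots,v_c\}$ are $Q_i := v_{i+2} - v_i - v_{i+3} - v_{i+1}$ for $i\in\{1,\ldots,c-3\}$, each with edge set $\{S_i,L_i,S_{i+1}\}$. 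This follows because in any induced $P_4$ $a-b-c-d$ the three non-edges $ac, ad, bd$ must correspond to consecutive index pairs in $\overline{P_c}$, which forces $\{a,b,c,d\}$ to be four consecutive integers and leaves only one $P_4$ on each such window. In particular, any cograph deletion set must intersect $\{S_i,L_i,S_{i+1}\}$ for every $i$, and consecutive $Q_i, Q_{i+1}$ share precisely the edge $S_{i+1}$.

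The plan is to exhibit $2c-4$ safe deletion sets. Two of them, $T_1$ and $T_2$, have cost $c-3$ and correspond to the two extremal covers of the $Q_i$-chain: one using only long edges, namely $T_1 = \{L_1, \ldots, L_{c-3}\}$, and another using a carefully chosen symmetric size-$(c-3)$ set $T_2$ of short edges that still hits every $Q_i$. For each $j \in \{1,\ldots,2c-6\}$ I then construct a further safe set $B_j$ of cost $j$ by a ``conservation'' scheme: I fix an ordering $e_1, e_2, \ldots, e_{2c-6}$ of edges along the chain (alternating short and long edges in order to traverse the $Q_i$-chain in a controlled fashion) and let $B_j$ consist of $e_j$ together with the $j-1$ deletions forced by conserving $e_1, \ldots, e_{j-1}$. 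The forcing mechanism is elementary: if two of the three edges of some $Q_i$ are conserved, the third must be deleted; the ordering is chosen so that each successive conservation activates exactly one new such $Q_i$, producing exactly one new forced deletion per step.

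Safety of the collection is verified by a case analysis on an optimal cograph deletion set $\E^*$ of $\overline{P_c}$: either $\E^*$ contains the ``long-edge'' pattern $T_1$ or the ``short-edge'' pattern $T_2$, or the earliest index $j$ such that $e_j \in \E^*$ determines a specific $B_j \subseteq \E^*$. Since the $\overline{P_c}$ occurs as an induced subgraph of the prime graph given to \textbf{branch}, the analysis transfers to the full graph. The resulting branching vector $(c-3, c-3) + (1, 2, \ldots, 2c-6)$ is exactly of the form $(1, 2, \ldots, c') + (\alpha c' + \beta)^\gamma$ appearing in Proposition~\ref{prop:branch-factors}(2), with $c' = 2c-6$, $\alpha = 1/2$, $\beta = 0$ and $\gamma = 2$; since $\alpha < 1$, the proposition yields branching factor at most $2+\epsilon$ for all sufficiently large $c$.

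The main obstacle is the explicit construction of $T_2$ and of the ordering $e_1, \ldots, e_{2c-6}$. Several natural candidates for $T_2$ (for example $\{S_1, \ldots, S_{c-3}\}$) fail to be subsets of any optimum---one can already check on $\overline{P_5}$ that $\{S_1,S_2\}$ is not contained in any optimal deletion set---so the choice of $T_2$ must be made with care. Similarly, the ordering of the $e_j$'s must be arranged so that each successive conservation forces exactly one new deletion---neither zero nor more than one---and so that the safety case analysis genuinely covers every possible optimum; this is where the interplay between the ``shared'' edges $S_{i+1}$ (which belong to both $Q_i$ and $Q_{i+1}$) and the ``private'' edges $L_i$ (which belong only to $Q_i$) becomes crucial.
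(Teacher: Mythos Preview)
Your proposal is not a proof but a plan, and the plan has a structural flaw. You explicitly defer the construction of $T_2$ and of the ordering $e_1,\ldots,e_{2c-6}$, calling it ``the main obstacle''; as written, nothing is actually proved. More seriously, the conservation scheme you describe cannot produce the branching vector you want. You say that each successive conservation activates exactly one new $Q_i$, forcing one new deletion. But the very first step already breaks: conserving the single edge $e_1$ cannot force any deletion, because every $Q_i$ has three edges and you need two of them conserved before the third is forced. Hence $B_2=\{e_2\}\cup(\text{forced by conserving }e_1)$ has cost $1$, not $2$, and the progression $1,2,3,\ldots$ never gets off the ground. More generally, conserving $j-1$ of the $S_i,L_i$ edges completes at most $j-2$ of the $Q_i$'s, so your $B_j$ has cost at most $j-1$. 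No choice of ordering on the short/long edges repairs this off-by-one.

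The paper's argument avoids this entirely by passing to the complement $\overline{G}$, where the chain becomes an honest path $x_1-x_2-\cdots-x_c$ and deletions in $G$ become \emph{insertions} in $\overline{G}$. The branching is then on the smallest index $i\geq 3$ such that the edge $x_1x_i$ is inserted (or on ``no such $i$''). The ``no such $i$'' case forces the cascade $x_2x_4,x_2x_5,\ldots,x_2x_c$, giving cost $c-3$. The case $i=4$ costs $1$. For $i\geq 5$, inserting $x_1x_i$ triggers two opposite cascades ($x_2x_4,\ldots,x_2x_{i-1}$ and $x_ix_{i-2},\ldots,x_ix_3$), for a total of $2i-7$ forced insertions, yielding the odd costs $3,5,\ldots,2c-7$. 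Finally $i=3$ is handled by observing that after inserting $x_1x_3$ one has a fresh induced path $x_1-x_3-x_4-\cdots-x_c$ of length $c-1$, and repeating the same three cases on it supplies the even costs $2,4,\ldots,2c-6$ together with a second $c-3$. Note that the forced insertions here (e.g.\ $x_2x_6$) are not ``short'' or ``long'' edges in your sense, so restricting attention to the $S_i,L_i$ family was already too narrow a framework.
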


\begin{proof}
    As done previously, finding the desired chain takes time $n^{O(c)}$ and we shall return a constant number of deletion sets.  Let $P$ be the set of vertices on the induced chain. 
    Note that performing edge deletions on a graph to make it $P_4$-free is equivalent to performing edge additions on the complement to make it $P_4$-free, because $P_4$ is self-complementary.  We take the simpler later view, since $\overline{G}$ is easier to work with.  
    Indeed, note that in the subgraph of $G$ induced by the vertices of the chain $P$, the only non-neighbor of a vertex of $P$ is its immediate predecessor on the chain.  Therefore, in $\overline{G}$, the chain $P$ is a path graph, that is, a chain with only vertices of type $1$.  We therefore return a safe set of edge additions on the path graph induced by $P$ on the complement of $G$.

    Denote the path induced by $P$ in $\overline{G}$ by $x_1 - x_2 - \ldots - x_c$.

Note that any optimal addition set of $\overline{G}$ either inserts no edge incident to $x_1$ whose other end is in $P$, or it does. In the latter case, we look at the minimum $i \in \{3, 4, \ldots, c\}$ such that $x_1 x_i$ is inserted.  After considering the case $x_1 x_4$, another choice of $i \neq 4$ enforces a structure that we can exploit further.  Let us look at the cases more closely.

\begin{figure}[H]
    \centering
    \includegraphics[width=0.7\linewidth]{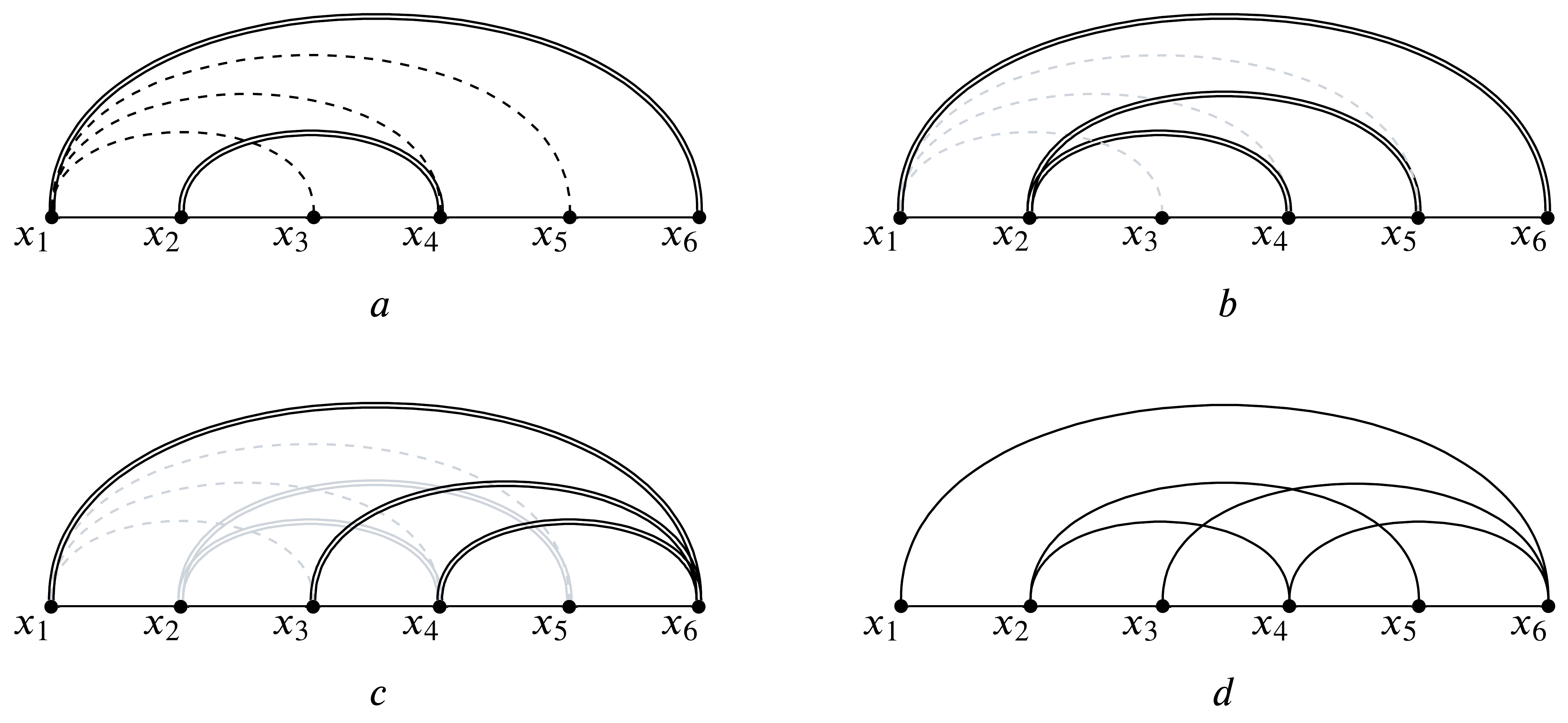}
    \caption{Illustration of Case 3 when $x_i = 6$ (here, dotted edges indicate edges that are assumed to not be inserted, and double lines edges that are assumed to be inserted, or forced).  In (a), if assuming that the only edge inserted incident to $x_1$ is $x_6$, then we must insert $x_2 x_4$.  This creates the $P_4$ $x_1 - x_2 - x_4 - x_5$, and as shown in (b) the insertion of $x_2x_5$ is forced.  There is still the $P_4$ $x_1 - x_6 - x_5 - x_4$, and in (c) $x_6 x_4$ is inserted, which in turn enforces $x_6 x_3$.  Subfigure (d) shows the final result.}
    \label{fig:chain-complement-insert}
\end{figure}

\begin{enumerate}

\item 
We insert no edge incident to $x_1$.  In that case, inserting $x_2 x_4$ is forced because of the $P_4$ $x_1 - x_2 - x_3 - x_4$.  This insertion creates the $P_4$ $x_1 - x_2 - x_4 - x_5$, and 
$x_2 x_5$ is forced.  This creates the $P_4$ $x_1 - x_2 - x_5 - x_6$, and 
$x_2 x_6$ is forced.  This goes on until all $x_2 x_i$ edges are forced for $i = 4 .. c$, so this case implies $c - 3$ insertions.

\item 
We insert $x_1 x_4$.  That's $1$ insertion.

\item 
In the rest of the cases, we assume that we insert some edge incident to $x_1$, but never $x_1 x_4$.  We branch over the first neighbor acquired by $x_1$ by an edge insertion, i.e.,  we branch over the possible $x_i$'s, where $x_i$ is such that $x_1 x_i$ is inserted but none of $x_1 x_3, x_1 x_4, \ldots, x_1 x_{i-1}$.  Note that $x_i \neq x_4$.
We save the case $x_i = x_3$ for last, and notice the following:
\\
If $x_i = x_5$, then there is the $P_4$ $x_1 - x_2 - x_3 - x_4$ that enforces inserting $x_2 x_4$.
There is also the $P_4$ $x_1 - x_5 - x_4 - x_3$ which enforces inserting $x_5 x_3$.  This case has $3$ insertions.
\\
If $x_i = x_6$, then there is the $P_4$ $x_1 - x_2 - x_3 - x_4$ that enforces inserting $x_2 x_4$.  Then there is $x_1 - x_2 - x_4 - x_5$ that enforces inserting $x_2 x_5$.
There is also the $P_4$ $x_1 - x_6 - x_5 - x_4$ which enforces $x_6 x_4$.  Then there is $x_1 - x_6 - x_4 - x_3$ enforcing $x_6 x_3$.  This case has 5 insertions.
\\
Generalizing for any $i \geq 5$, we see that  $x_2 x_4$ must be inserted, which enforces $x_2 x_5$, which in turn enforces $x_2 x_6$, and so on, so that all of $x_2 x_5, \ldots, x_2 x_{i-1}$ are inserted.  Likewise, after inserting $x_1 x_i$, $x_i x_{i-2}$ is forced, followed by $x_i x_{i - 3}$, and so on until all of $x_i x_3, x_i x_4, \ldots, x_i x_{i-2}$ are inserted.  The number of insertions is $2i - 7$.

So far, the insertion sets described correspond to the branching vector $(c - 3) + (1) + (3, 5, 7, \ldots, 2c - 7)$.  It remains to handle the case $x_i = x_3$.  We count $1$ insertion for $x_1 x_3$, but we cannot stop there, since appending a $(1)$ to this branching vector does not yield our desired branching factor.  

However, after inserting $x_1 x_3$ in $\overline{G}$, the sequence
$x_1 - x_3 - x_4 - \ldots - x_c$ forms an induced path with $c - 1$ vertices.  We repeat the same three previous cases as we just did, but over this path.  Denoting by $x_j$ the first neighbor of $x_1$ acquired on this new path, this yields the following sub-branches:
\begin{itemize}
    \item $x_j$ does not exist: this case makes $c - 1 - 3 + 1 = c - 3$ insertions ($c - 1$ is the number of vertices on the path, $-3$ is the same as case 1 from before, and $+1$ to count the insertion $x_1 x_3$).

    \item 
    $x_j = x_4$: this case is excluded: recall that if we get here, we have established that $x_1 x_4$ is not inserted. 

    \item 
    $x_j = x_5$: counting the insertion $x_1 x_3$, that's $2$ insertions.

    \item 
    $j > 5$.  Using the same arguments as in case 3 above, we have $2(j - 1) - 7$ forced insertions.  Counting $x_1 x_3$, that's $2(j - 1) - 6$ insertions.
\end{itemize}

Hence, the insertion sets built when $x_i = x_3$ correspond to branching vector $(c - 3) + (2) + (4, 6, 8, \ldots, 2c - 6)$.
No more cases are needed.

\end{enumerate}

The final branching vector we get is $(c - 3, 1, 3, 5, 7, \ldots, 2c - 7) + (c - 3, 2, 4, 6, 8, \ldots, 2c - 6)$, which is $(c - 3, c - 3) + (1, 2, 3, \ldots, 2c - 6)$.  

Now, this branching vector has the form required by Proposition~\ref{prop:branch-factors}, second statement, by writing $(1, 2, \ldots, 2c - 6) + (1/2 \cdot (2c - 6))^2$, where $\alpha = 1/2, \beta = 0, \gamma = 2$.  Consequently, for large enough $c$ we can obtain branching factor at most $2 + \epsilon$.
\end{proof}

\subsection{Chains that are path graphs}

We now focus on the last possible unavoidable subgraph that prime graph $G$ should contains, which are chains of $1$s, that is, long induced paths.
One possibility is that $G$ is itself a path graph, that is, there are no vertices outside of the chain.  Even with vertex weights, handling this case in polynomial time is a standard exercise which will also be useful later on.

\begin{lemma}\label{lem:path-dp}
    Suppose that $G = (V, E, \omega)$ is a (vertex-weighted) path graph.  Then the \textsc{Cograph Deletion} problem can be solved in time $O(n)$, where $n = |V|$.
\end{lemma}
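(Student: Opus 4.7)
The plan is to reduce the problem to a straightforward linear-time dynamic program by exploiting the fact that paths that are $P_4$-free are precisely disjoint unions of paths of length at most $2$ (i.e., components with at most $3$ vertices).

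First I would observe that since $G$ is an induced path $v_1 - v_2 - \cdots - v_n$, any edge deletion set $\E \subseteq E(G)$ produces a graph $G \ominus \E$ that is a disjoint union of subpaths of $G$, each one consisting of a consecutive block of vertices. This subgraph is $P_4$-free if and only if each such block has at most $3$ vertices, since any subpath on $4$ or more vertices contains an induced $P_4$, and any path on $\leq 3$ vertices has no $P_4$. Hence the \textsc{Cograph Deletion} problem on $G$ is exactly equivalent to choosing a set $D \subseteq \{1, \ldots, n-1\}$ of ``cut positions'' such that the resulting consecutive blocks all have size at most $3$, minimizing $\sum_{i \in D} \omega(v_i)\omega(v_{i+1})$.

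Next I would set up a dynamic program. Let $f(i)$ denote the minimum cost to delete edges inside $v_1, \ldots, v_i$ so that the subpath becomes $P_4$-free, with the extra convention that the block containing $v_i$ ends at $v_i$. The base cases are $f(0) = 0$ and, more generally, $f(i) = 0$ for $i \leq 3$ when we keep the entire prefix as one block. For $i \geq 4$, the last block has size $s \in \{1,2,3\}$, so it contains $v_{i-s+1}, \ldots, v_i$ and requires deleting the edge $v_{i-s} v_{i-s+1}$ (provided $i - s \geq 1$). This yields the recurrence
\[
f(i) = \min_{s \in \{1,2,3\},\, i - s \geq 0} \Big( f(i-s) + \omega(v_{i-s})\,\omega(v_{i-s+1}) \Big),
\]
with the convention that the added weight term is $0$ when $i - s = 0$. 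The answer is $f(n)$, and the corresponding deletion set can be recovered by standard back-pointers.

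Finally I would observe that each value $f(i)$ is computed in constant time from at most three previously computed values, so the full table is filled in $O(n)$ time, matching the claimed bound. The main (minor) obstacle is simply being precise about the base cases and about the boundary case where a block starts at $v_1$, so that no phantom ``cut edge'' is charged; beyond that, correctness is immediate from the structural characterization of $P_4$-free subgraphs of a path stated in the first step.
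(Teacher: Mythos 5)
Your proposal is correct and takes essentially the same approach as the paper: a left-to-right dynamic program over the path with state $f(i)$ equal to the optimal deletion cost on the prefix $v_1,\ldots,v_i$, base cases $f(i)=0$ for $i\leq 3$, and the same three-way recurrence (deleting the edge $v_{i-s}v_{i-s+1}$ for $s\in\{1,2,3\}$), giving $O(n)$ time. The only cosmetic difference is that you derive the recurrence from the global characterization that a $P_4$-free subgraph of a path is a union of blocks of at most three consecutive vertices, whereas the paper argues locally that at least one of the last three edges in the $P_4$ $x_{i-3}-x_{i-2}-x_{i-1}-x_i$ must be deleted; both justifications yield the identical recurrence.
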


\begin{proof}
    We use dynamic programming over the path, traversing it from left to right.  Denote the path of $G$ by $x_1 - x_2 - \ldots - x_n$. 
    For $i \in [n]$, we define $\F(i)$ as the minimum cost of edge deletions needed to transform $G[\{x_1, \ldots, x_i\}]$ into a cograph.
    We define $\F(0) = 0$.  

    As a base case, we have $\F(1) = \F(2) = \F(3) = 0$.  
    For $i > 3$, it is easy to see that $\F(i)$ satisfies the recurrence
    \begin{align*}
        \F(i) = \min \begin{cases}
            \F(i - 1) + \omega(x_{i-1} x_i) \\
            \F(i - 2) + \omega(x_{i-2} x_{i-1}) \\
            \F(i - 3) + \omega(x_{i-3} x_{i-2}).
        \end{cases}
    \end{align*}
    To justify, observe that at least one of $x_{i-1} x_i, x_{i-2} x_{i-1}$, or $x_{i-3} x_{i-2}$ must be deleted to eliminate the $P_4$ formed by $x_{i-3} - x_{i-2} - x_{i-1} - x_i$.  Each recurrence entry corresponds to one of these choices, and after the deletion is performed, we must make the remaining subpaths a cograph, hence the recurrences.  Computing $\F(n)$ only takes time $O(n)$.
\end{proof}

Now, let $P$ be a vertex set of $G$ that induces a path with $3c$ vertices, and denote this path by $x_1 - x_2 - \ldots - x_{3c}$.  We use $3c$ instead of $c$ here mostly for convenience: we often look at thirds of $P$ because they are long enough, and it is easier to state that these subpaths have $c$ vertices instead of saying they have $c/3$ vertices all the time.  
We can ensure that $3c$ is large enough by adjusting $C$.  These details will be dealt with in the proof of the main theorem later on.

A concept that will useful several times is branching ``around'' an edge.  
That is, let $x_i$ be a vertex of $P$ such that 
$3 \leq i \leq 3c - 3$.  Then \emph{branching around} the edge $x_i x_{i+1}$ consists of building the four deletion sets corresponding to the following possibilities:\\
(a) delete $x_i x_{i+1}$;\\
(b) conserve $x_i x_{i+1}$ and delete both $x_{i-1} x_i, x_{i+1} x_{i+2}$;\\
(c) conserve both $x_i x_{i+1}$ and $x_{i-1} x_i$, enforcing the deletion of $x_{i-2} x_{i-1}$ and $x_{i+1} x_{i+2}$;\\
(d) conserve both $x_i x_{i+1}$ and $x_{i+1} x_{i+2}$, enforcing the deletion of $x_{i-1} x_i$ and $x_{i+2} x_{i+3}$.

One can easily confirm that this produces a safe set of deletion sets, since any $P_4$-free deletion set must perform one of those four cases.  The corresponding branching vector is $(1, 2, 2, 2)$.  This achieves branching factor less than $2.303$.  
We could thus stop here and get time $O^*(2.303^k)$, which is the same as Tsur's algorithm~\cite{Tsur:[3]}, although our algorithm's branching cases can be verified by hand without relying automated branching enumeration.  To improve on that complexity, we must work a bit more.

From now on, denote by $N(P)$ the set of neighbors of $P$ outside of $P$, that is, $N(P) = \left(\bigcup_{i \in [3c]} N(x_i) \right) \setminus P$ (where $N(x_i)$ denotes the set of neighbors of $x_i$ in $G$).  
We shall assume that $N(P)$ is non-empty, as otherwise $G$ is a path graph and we can use Lemma~\ref{lem:path-dp}.
Importantly, notice that under the assumption that $N(P) \neq \emptyset$, then some vertex $v \in N(P)$ must have a non-neighbor in $P$.  Indeed, if every vertex in $N(P)$ is a neighbor to all of $P$, then all possible edges between $P$ and $N(P)$ are present and $P$ is a non-trivial module of $G$, contradicting that it is prime.  
We denote $U = \{u \in N(P) : u$ has at least one non-neighbor in $P\}$, so that $U \neq \emptyset$.  

To finish the setup, for $j, l \in [2c+1]$ we denote $P_{j, l} = \{x_j, x_{j+1}, \ldots, x_{l}\}$, which is the set of vertices in the subpath of $P$ from positions $j$ to $l$ (which is empty if $l < j$).

To ease notation, all the subsequent statements use the notions of $G, P, N(P), U$, and $P_{j,l}$'s as we just defined them, and assume that $U$ is non-empty.

\begin{lemma}\label{lem:multiconditions}
    For any $\epsilon > 0$, there is a large enough $c$ such that if any of the following statement holds, then one can achieve branching factor at most $2 + \epsilon$:
    \begin{enumerate}
        \item 
        there is a subpath $P_{j, j + c - 1}$ with $2 \leq j \leq 2c$ such that each vertex of $P_{j, j + c - 1}$ has degree $2$ in $G$;

        \item 
        some vertex of $U$ has at least $c/3$ neighbors in $P$.

        \item 
        some edge $x_i x_{i+1}$ has weight at least two, with $3 \leq i \leq 3c - 3$.
        
    \end{enumerate}
\end{lemma}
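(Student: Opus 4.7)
The plan is to handle the three conditions separately, producing in each case a safe collection of deletion sets with branching factor at most $2+\epsilon$ for a suitably large $c$.

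Condition 3 is the easiest. I would branch around the heavy edge $x_ix_{i+1}$ using the four standard cases (delete; conserve and delete both outer flanks; conserve with the left flank; conserve with the right flank). Because $\omega(x_i)\omega(x_{i+1}) \geq 2$, at least one of $\omega(x_i), \omega(x_{i+1})$ is at least $2$, and this weight propagates to one of the two outer edges $x_{i-1}x_i$ or $x_{i+1}x_{i+2}$. Re-tallying the four case costs then yields a branching vector dominated by $(2,3,2,3)$, whose branching factor is the largest real root of $x^3-2x-2$, approximately $1.77$, well below $2+\epsilon$.

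For Condition 2, let $u \in U$ have $t \geq c/3$ neighbors in $P$. Since $u$ also has a non-neighbor in $P$ (as $u \in U$), the $0/1$ adjacency pattern of $u$ to $P$ has a transition: some neighbor $x_j$ of $u$ has a path-neighbor $x_{j+1}$ (or $x_{j-1}$, by symmetry) that is not a neighbor of $u$. For every other neighbor $x_i$ of $u$ with $i \notin \{j-1,j,j+1,j+2\}$, the four-tuple $(x_i,u,x_j,x_{j+1})$ induces a $P_4$ (the required adjacencies and non-adjacencies all follow from the path structure and our choice of $j$), and all at least $t-3 \geq c/3-3$ of these $P_4$s share the edges $ux_j$ and $x_jx_{j+1}$. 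I would branch: (a) delete $ux_j$; (b) conserve $ux_j$ and delete $x_jx_{j+1}$; (c) conserve both, forcing the deletion of $ux_i$ for every qualifying $x_i$. This yields branching vector $(1,1,c/3-3)$, which is dominated by $(1,1)+(c/3-3)^{2^{c/3-3}}$ and therefore, by Proposition~\ref{prop:branch-factors}(1), has branching factor at most $2+\epsilon$ for large enough $c$.

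Condition 1 is the main obstacle, and I would handle it via a DP branching indexed by a finite set of \emph{boundary signatures}. The crucial structural feature is that the interior vertices $x_j, \ldots, x_{j+c-1}$ have no external neighbors, so every $P_4$ containing an interior vertex either lies entirely in the extended subpath $P_{j-1,j+c}$ or exits it only through $x_{j-1}$ or $x_{j+c}$. I would branch on the $2^4=16$ possible states of the four boundary edges $e_{-1}=x_{j-1}x_j$, $e_0=x_jx_{j+1}$, $e_{c-2}=x_{j+c-2}x_{j+c-1}$, $e_{c-1}=x_{j+c-1}x_{j+c}$. For each signature $s$, I would apply an adaptation of Lemma~\ref{lem:path-dp} on the interior edges with initial and terminal DP states determined by $s$, producing a deletion set $\E_s$ of minimum cost $D(s) \geq \lceil(c-3)/3\rceil - O(1)$ destroying every $P_4$ of the extended subpath consistent with $s$. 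The resulting branching vector has all $16$ entries of order $c/3$, so the branching factor $a$ satisfies $a^{c/3-O(1)} \leq 16$, which tends to $1$ as $c \to \infty$ and is thus well below $2+\epsilon$.

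The hard part is justifying safety. Given an optimal $\E^*$, its boundary signature is some $s^*$ among the $16$, and I would argue that replacing the interior part of $\E^*$ with $\E_{s^*}$ yields another optimal $P_4$-free editing set $\E^{**}$. The reason is that (i) interior $P_4$s are destroyed by DP design, (ii) external-crossing $P_4$s such as $y-x_{j-1}-x_j-x_{j+1}$ depend only on the boundary signature (unchanged) and on external deletions (unchanged from $\E^*$), and (iii) the cost does not increase since $D(s^*)$ is minimum. Therefore $\E_{s^*} \subseteq \E^{**}$ and the branching collection is safe. The subtle technical point is to set up the DP so that it correctly accounts for every $P_4$ whose edges lie entirely in the extended subpath, including those like $x_{j-1}x_jx_{j+1}x_{j+2}$ that straddle the boundary, which the initial/terminal DP states dictated by $s$ are designed to enforce.
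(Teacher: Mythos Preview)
Your treatments of Conditions~2 and~3 are correct and essentially match the paper's. For Condition~3 the paper is cruder: it only notes that branch~(a) costs at least $2$ and the remaining three branches each delete two edges of weight $\geq 1$, giving $(2,2,2,2)$ with branching factor exactly $2$; your refinement to $(2,3,2,3)$ is valid but unnecessary. For Condition~2 the paper sets up the transition with the roles of neighbor and non-neighbor reversed (the non-neighbor is $x_i$ and the neighbor is $x_{i+1}$, yielding the $P_4$s $x_i-x_{i+1}-u-z$), but the resulting branching vector $(1,1,c/3-3)$ is identical.

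For Condition~1 your approach is correct but genuinely different from the paper's. The paper does not branch on boundary signatures. Instead it observes that any optimal deletion set must remove at least one edge from each of the two end-$P_4$s of the degree-$2$ subpath, namely $x_j x_{j+1} x_{j+2} x_{j+3}$ and $x_{j+c-4} x_{j+c-3} x_{j+c-2} x_{j+c-1}$, and branches on the $3\times 3=9$ pairs of such deletions. After any such pair of deletions, the middle portion of the subpath becomes an \emph{isolated connected component} (all its vertices have degree $2$), which is a path on at least $c-6$ vertices; that component is then solved optimally and without branching by the path DP of Lemma~\ref{lem:path-dp}. This gives branching vector $(\lfloor (c-6)/3\rfloor)^9$. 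The safety argument is almost immediate because optimal solutions decompose over connected components, so the DP output can be substituted for whatever $\E^*$ does on the isolated path with no cross-boundary interaction to worry about. Your $16$-signature approach reaches the same asymptotic conclusion, but its safety argument requires the interior-swap reasoning you outline (in particular, checking that re-inserting interior edges cannot create a $P_4$ reaching outside the extended subpath), which is more delicate. On the other hand, your formulation would adapt more readily to situations where two cuts do not suffice to fully disconnect the subpath.
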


\begin{proof}
    Note that if $P$ exists, it can be found in time $n^{O(c)}$.  Moreover, checking whether any of the conditions above holds can be done in polynomial time (in particular, computing $U$ takes time $O(n + m)$).
    
    First, suppose that $G$ has the subpath $P_{j, j + c - 1}$, denoted $x_j - x_{j+1} - \ldots - x_{j+c-1}$, containing only degree $2$ vertices of $G$.  Then, any solution must delete at least one of the three edges of the $P_4$ $x_j - x_{j+1} - x_{j+2} - x_{j+3}$.  Likewise, it must delete one of the edges of $x_{j+c-4} - x_{j+c-3} - x_{j+c-2} - x_{j+c-1}$.  Assume that $c$ is large enough so that $x_{j+3}$ and $x_{j+c-4}$ are distinct.
    We add to our set of deletion sets all $3 \cdot 3$ ways of choosing a deletion in the first $P_4$ and a deletion in the second, which amounts to $9$ deletion sets that each make two deletions.  

    After applying any of those deletion set, because the subpath $P_{j, j + c - 1}$ has only degree $2$ vertices, the graph resulting from the two deletions has a connected component that is a path with at least $c - 6$ vertices.  We can safely solve that connected component optimally, without branching.   For that, we use the linear time algorithm from Lemma~\ref{lem:path-dp}.  Any solution on that path deletes at least $\lfloor (c - 6)/3 \rfloor$ edges.  This number of deletions happens for each of the 9 branches described above, resulting in branching vector $(\lfloor (c - 6)/3 \rfloor)^9$ or better.  One can check that for $c \geq 18$, the corresponding branching factor is less than $2 + \epsilon$ (it actually tends to $1 + \epsilon$).

    Second, suppose that some $u \in U$ has at least $c/3$ neighbors in $P$.  By the definition of $U$, $u$ has a non-neighbor $x_i \in P$.  Assume that $x_{i+1}$ is a neighbor of $u$ (this is without loss of generality: if the first neighbor of $u$ after $x_i$ is some $x_{i+j}$, then redefine $x_i$ as $x_{i+j-1}$, and if such an $x_{i+j}$ does not exist, then reverse the path).  
    Notice that $u$ has at least $c/3 - 3$ neighbor in $P$ that are not neighbors of $x_i$ nor $x_{i+1}$ in $G$ (minus $3$ because we exclude $x_{i-1}, x_{i+1}, x_{i+2}$).  Let $Z$ be the set of neighbors of $u$ on $P$ that are not neighbors of $x_i$ nor $x_{i+1}$.
    Consider the building deletion sets using the following cases: 
    (a) delete $x_i x_{i+1}$; (b) delete $x_{i+1} u$; 
    (c) conserve both edges.  In that case, $x_i - x_{i+1} - u - z$ is a $P_4$ for each $z \in Z$, and we must delete each edge $uz$.  This results in branching vector $(1, 1, c/3 - 3)$, and for large enough $c$ we can achieve branching factor $2 + \epsilon$.   

    Third, suppose that $\omega(x_i x_{i+1})$ has weight $2$ or more, with $3 \leq i \leq n - 3$.  Then, if we branch around $x_i x_{i+1}$ as defined above, the first branch which deletes $x_ix_{i+1}$ reduces the parameter by at least $2$.  The branching vector is therefore $(2, 2, 2, 2)$ or better, whose branching factor is $2$.
\end{proof}

We actually achieve a much better upper bound on the number of neighbors in $P$ that a vertex $u \in U$ could have.  The conditions in the lemmata that follow are easily doable in polynomial time, so we will skip arguing this.  

\begin{lemma}\label{lem:three-nbrs}
    If some vertex $u \in U$ has at least three neighbors in $P_{3, 3c - 2}$, then one can achieve branching factor at most $2 + \epsilon$ with a large enough $c$.
\end{lemma}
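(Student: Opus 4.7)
The strategy is to leverage the three neighbors of $u$ together with the long path $P$ to force a cascade of deletions, ultimately yielding a branching vector dominated by one of those in Proposition~\ref{prop:branch-factors}. First, by Lemma~\ref{lem:multiconditions}, I would assume that $u$ has fewer than $c/3$ neighbors on $P$ (otherwise part 2 applies), that every path edge has unit weight (otherwise part 3 applies), and that no subpath of $P$ of length $c$ consists solely of degree-2 vertices (otherwise part 1 applies).

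Let $y_1 = x_{a_1}, y_2 = x_{a_2}, y_3 = x_{a_3}$ with $3 \leq a_1 < a_2 < a_3 \leq 3c - 2$ be three neighbors of $u$ on $P_{3, 3c-2}$. The plan is a case analysis based on (i) whether any two of $y_1, y_2, y_3$ are consecutive on $P$, and (ii) which of the positions $x_{a_i \pm 1}$ are neighbors or non-neighbors of $u$. In each case, I would identify a pair of critical edges, typically two of $\{u y_1, u y_2, u y_3\}$, and branch on the three possibilities of deleting one, deleting the other, or conserving both.

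The key observation in the conservation branch is that when we keep two edges $u y_r, u y_s$ with $r \neq s$, each induced $P_4$ of the form $x_{a_r - 1} - y_r - u - y_s$ or $x_{a_r + 1} - y_r - u - y_s$ (when the respective $x_{a_r \pm 1}$ is a non-neighbor of $u$, not adjacent to $y_s$) forces the deletion of the path edge adjacent to $y_r$ on that side, and analogously around $y_s$. Since $u$ has at most $c/3 - 1$ neighbors on $P$ but $P$ has $3c$ vertices, almost every $x_{a_i \pm 1}$ is a non-neighbor of $u$, producing multiple forced deletions. Combined with the long remaining path structure (on which induced $P_4$s can be optimally resolved in polynomial time by Lemma~\ref{lem:path-dp} once the initial cuts have been made), the branching vector takes a form like $(1, 1) + (\Theta(c))^{\gamma}$ for some constant $\gamma \geq 1$, which by Proposition~\ref{prop:branch-factors} yields branching factor at most $2 + \epsilon$ for large enough $c$.

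The main obstacle I anticipate is the subcase where the three neighbors are tightly clustered, such as when $a_1, a_2, a_3$ are consecutive integers, so the $P_4$ cascades concentrate in a small region of $P$ and no single conservation branch forces a $\Theta(c)$ number of deletions locally. In this situation, I would argue by exploiting the long structure of $P$ away from the cluster: after the initial cuts, the residual graph still contains a long induced path whose optimal cograph deletion by Lemma~\ref{lem:path-dp} contributes $\Theta(c)$ forced deletions per branch, restoring the needed growth in the branching vector entries so that Proposition~\ref{prop:branch-factors} still yields branching factor at most $2 + \epsilon$.
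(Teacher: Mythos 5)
Your opening step — invoking Lemma~\ref{lem:multiconditions} to assume $u$ has fewer than $c/3$ neighbors on $P$, that path edges have unit weight, and that there is no long all--degree-$2$ subpath — matches the paper. But from there your argument diverges and has a genuine gap.

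The central claim that the conservation branch yields a branching vector of the form $(1,1) + (\Theta(c))^{\gamma}$ is not substantiated. Conserving two edges $u y_r, u y_s$ forces only a \emph{constant} number of deletions: each $P_4$ of the form $x_{a_r \pm 1} - y_r - u - y_s$ forces deleting a path edge incident to $y_r$ (and symmetrically around $y_s$), giving at most four forced deletions, not $\Theta(c)$. Your attempt to boost this to $\Theta(c)$ by appealing to Lemma~\ref{lem:path-dp} on the ``long remaining path structure'' does not work here: in Lemma~\ref{lem:multiconditions}~(1), the appeal to Lemma~\ref{lem:path-dp} is valid only because the subpath consists entirely of degree-$2$ vertices, so two cuts genuinely isolate a path component whose resolution cost can be safely counted. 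In your scenario the vertices $y_1, y_2, y_3$ are heavy (adjacent to $u$), and the cuts you make do not isolate a degree-$2$ path component; the residual graph is not a path graph, so Lemma~\ref{lem:path-dp} does not apply and you cannot transfer the $\Theta(c)$ cost of eventually solving the path into a single branching-vector entry. The paper instead proceeds quite differently: using the bound on $|N(u) \cap P|$, it locates an index $i$ with $u x_i \in E(G)$ while $u x_{i-1}, u x_{i-2} \notin E(G)$, then splits on whether $u x_{i+1}$ is an edge and performs a careful, fully local branching on the path edges near $x_i$ together with the $u$-edges, yielding constant-entry branching vectors $(1,3,4,3,3,4)$ and $(1,2,4,3,4)$, each with branching factor exactly $2$. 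No $\Theta(c)$-length branches are needed, and Proposition~\ref{prop:branch-factors} is not invoked in this lemma.

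To salvage your approach you would need a concrete argument that the conservation branch actually isolates a long path of degree-$2$ vertices (which generally it does not), or you would need to abandon the $(1,1)+(\Theta(c))^{\gamma}$ template and instead produce, by more careful local casework around a single well-chosen neighbor of $u$, a bounded-entry branching vector with branching factor at most $2$, as the paper does.
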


\begin{proof}
    Denote $P' = P_{3, 3c - 2}$ for the duration of the proof.
    Let $u \in U$ and suppose that $u$ has at least three neighbors in $P'$.  By the second statement of Lemma~\ref{lem:multiconditions}, we may assume that $u$ has at most $c/3$ neighbors in $P'$, as otherwise we can achieve branching factor at most $2 + \epsilon$. 
    This implies that there are at least two consecutive vertices $x_{i-2}, x_{i-1}$ in $P'$ that are non-neighbors of $u$.  We assume that $x_i$ is a neighbor of $u$, and that $i \leq 3c - 2$ (this is without loss of generality: if $u$ has no neighbor in $P'$ after $x_{i-1}$, reverse the path so that it does, and then redefine $x_{i-2}, x_{i-1}$ to be the immediate predecessors of $x_i$ if needed).
    Note that the vertices $x_{i-3}$ through $x_{i+2}$ all exist since $x_{i-2}$ and $x_i$ are in $P'$.
    
    There two cases depending on whether $u x_{i+1}$ is an edge or not.  If it is not an edge, let $y, z$ be two neighbors of $u$ on $P$ other than $x_{i}$.  Figure~\ref{fig:three-nbrs} illustrates the possible deletion sets.  These lead to branching vector $(1, 3, 4, 3, 3, 4)$ and branching factor $2$.

    \begin{figure}[H]
        \centering
        \includegraphics[width=0.7\linewidth]{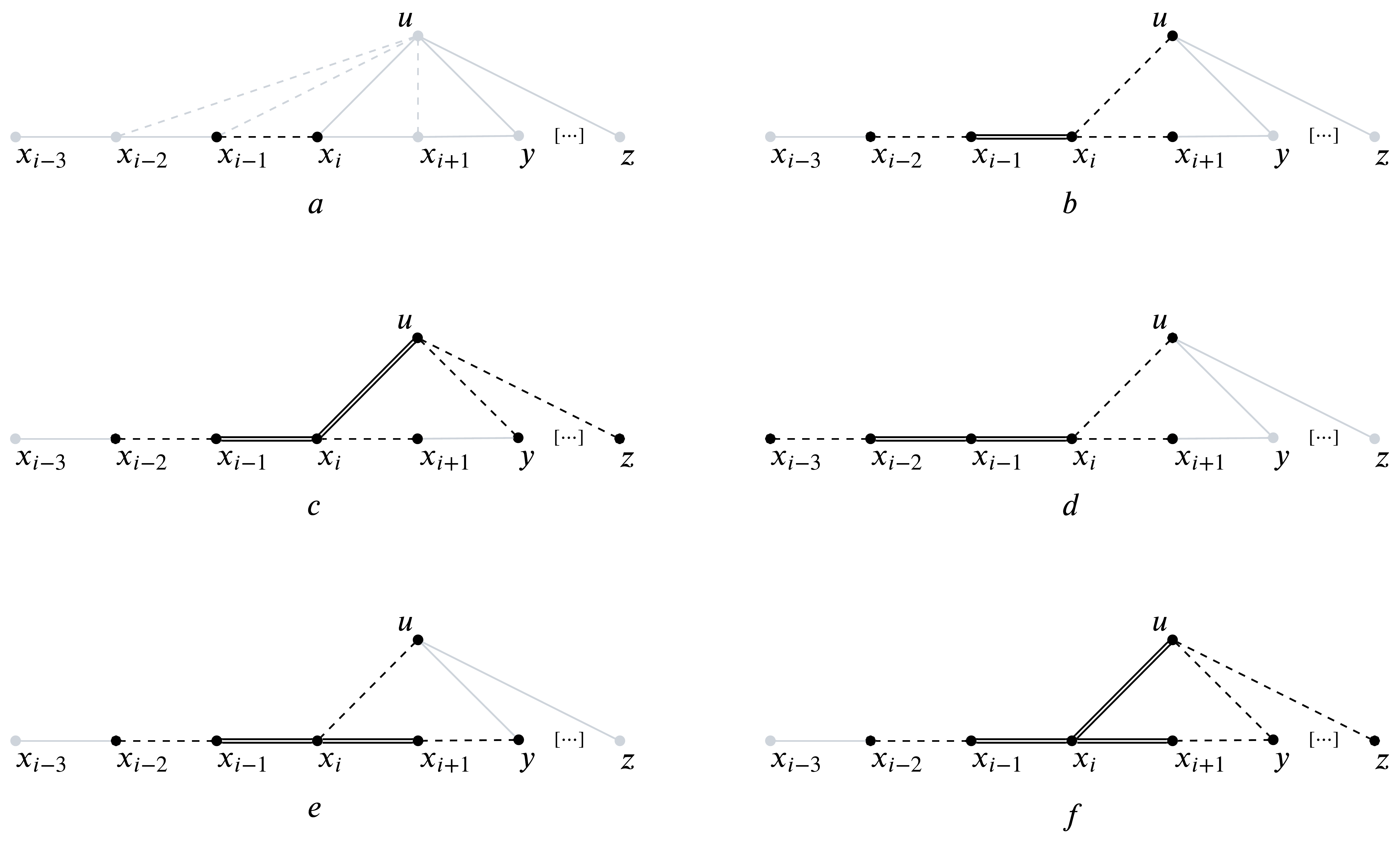}
        \caption{Cases when $u x_{i+1}$ is not an edge.  Here we illustrate $y = x_{i+2}$, but this does not need to be the case --- what matters is that $u$ does not have any of $\{x_{i-2}, x_{i-1}, x_{i+1}\}$ as neighbors. 
        We branch around $x_{i-1} x_i$, but also investigate subbranches to get better branching factors.  That is, either we: (a) delete $x_{i-1} x_i$; or conserve it for the remaining cases. Then either we (b) delete both $x_{i-2} x_{i-1}$ and $x_i x_{i+1}$.  In that case, either $x_i u$ is deleted or not, and the subfigure shows the case where it is deleted; (c) if $x_i u$ is conserved instead, then we must delete both $uy, uz$; (d) we conserve $x_{i-2} x_{i-1}$, in which case we must delete $x_{i-3} x_{i-2}, x_i x_{i+1}, x_i u$; (e) we conserve $x_{i} x_{i+1}$.  In that case, we must delete $x_{i-2} x_{i-1}$ and $x_{i+1} x_{i+2}$.  Then, $x_i u$ could be deleted or not, and the subfigure shows the case where it is deleted; (f) if $x_i u$ is conserved, we also reach a situation where we must delete $uy, uz$.}
        \label{fig:three-nbrs}
    \end{figure}

    If $u x_{i+1}$ is an edge, let $y$ be a neighbor of $u$ on $P'$ other than $x_i, x_{i+1}$ (note, $y = x_{i+2}$ is possible).  Figure~\ref{fig:three-nbrs-2} illustrates the cases.  These lead to branching vector $(1, 2, 4, 3, 4)$ and branching factor $2$.

    \begin{figure}[H]
        \centering
        \includegraphics[width=0.6\linewidth]{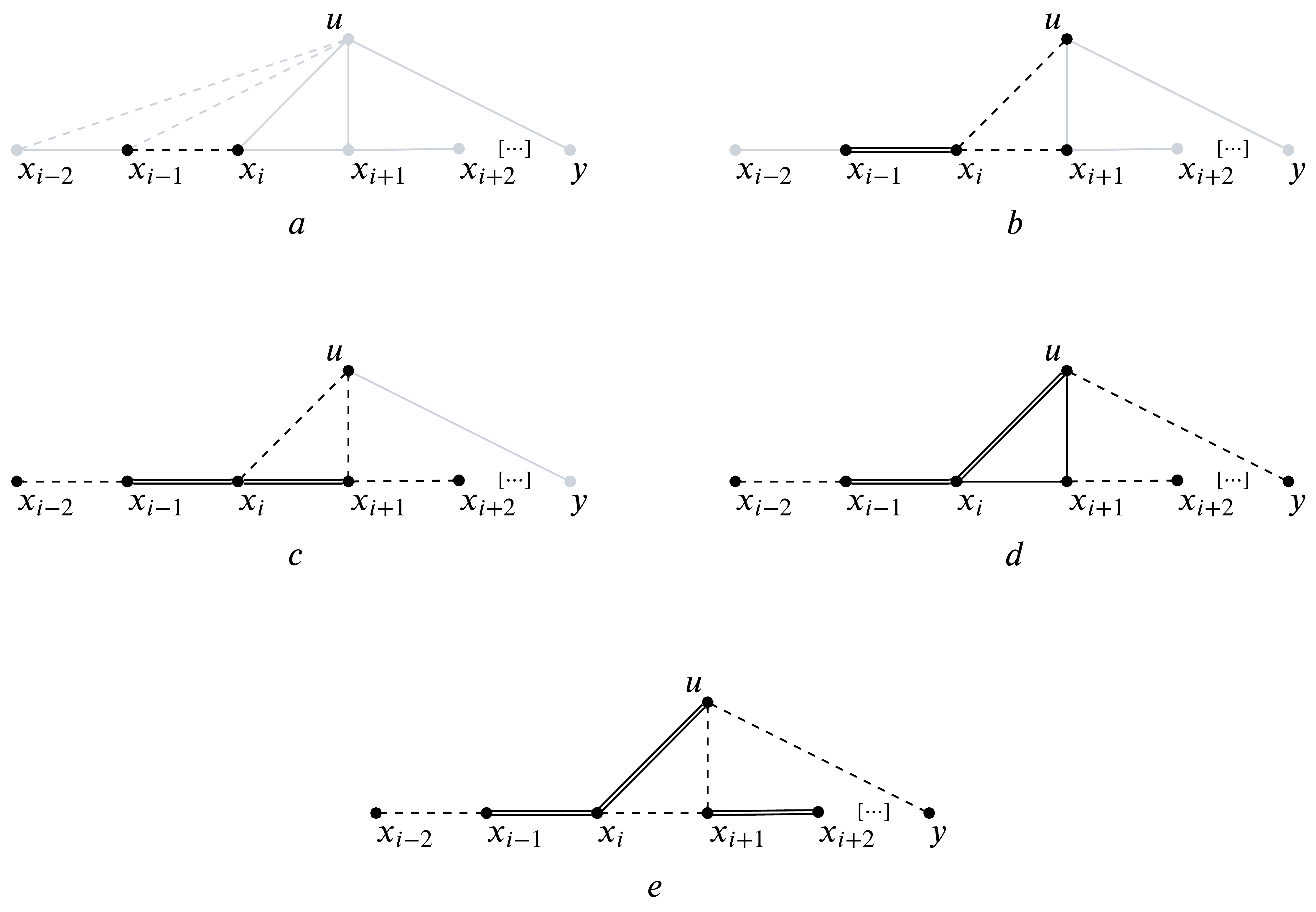}
        \caption{Branching cases when $u x_{i+1}$ is an edge.  We either: (a) delete $x_{i-1} x_i$; or conserve it in the remainder of the cases. Then either (b) delete $x_i u$ and then delete $x_i x_{i+1}$; (c) delete $x_i u$ but conserve $x_i x_{i+1}$.  In that case, $x_{i-2} x_{i-1}$, $x_{i+1} x_{i+2}$ must be deleted, as well as $x_{i+1} u$; (d) conserve $x_i u$ and delete $x_{i+1} x_{i+2}$.  Then $x_{i-2} x_{i-1}$ and $uy$ must be deleted; (e) conserve $x_i u$ and conserve and $x_{i+1} x_{i+2}$ as well.  In that case, we cannot keep $x_i x_{i+1}$ and it is deleted.  This in turn enforces the deletion of $u x_{i+1}$ (because of $x_{i-1} - x_i - u - x_{i+1}$), plus the same two deletions as the previous case.  }
        \label{fig:three-nbrs-2}
    \end{figure}

    All possibilities lead to branching factor $2 + \epsilon$ or better.  One can easily check that the described modifications are safe, since any deletion set must perform at least one of the stated deletions, which concludes the proof.
\end{proof}

We next need to distinguish vertices of $P$ that have neighbors in $N(P)$ from those that do not.  We say that a vertex $x_i$ of $P$ is \emph{light} if it has no neighbor in $N(P)$, and otherwise $x_i$ is \emph{heavy}.  
Note, light vertices have degree two in $G$, except possibly $x_1$ or $x_{3c}$ which could be light and have degree one.
We look at possible sequences of light and heavy vertices, and show that they all lead to branching factor $2 + \epsilon$.

\begin{lemma}\label{lem:no-two-light}
    Suppose that $P$ has two consecutive light vertices $x_i, x_{i+1}$ where $4 \leq i \leq 2c$.  Then one may achieve branching factor at most $2 + \epsilon$ with a large enough $c$. 
\end{lemma}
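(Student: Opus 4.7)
The plan is to branch around the edge $x_i x_{i+1}$, which by the construction preceding the lemma yields the base branching vector $(1,2,2,2)$ with branching factor approximately $2.303$. Since $x_i$ and $x_{i+1}$ are light (degree $2$ in $G$), the edge $x_i x_{i+1}$ lies in exactly one $P_4$, namely $x_{i-1}-x_i-x_{i+1}-x_{i+2}$, which is precisely what justifies this four-way branching. To improve the factor to $2+\epsilon$, I need to force additional deletions in the cost-$2$ subbranches by exploiting the heavy vertices adjacent to the light block.

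My first step is to identify the maximal block $L=\{x_j,\ldots,x_l\}$ of consecutive light vertices containing $\{x_i,x_{i+1}\}$. If $|L|\geq c$, then since $4\leq i\leq 2c$ I can extract a length-$c$ subpath of $L$ lying inside $P_{2,2c+1}$, every vertex of which has degree $2$ in $G$, and invoke statement~$1$ of Lemma~\ref{lem:multiconditions} directly. Otherwise $|L|<c$; because $U\neq\emptyset$ forces $L\neq P$, at least one of the boundary vertices $x_{j-1}$, $x_{l+1}$ exists and is heavy.

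The key observation for the bounded case is that in the subbranch of ``branch around $x_j x_{j+1}$'' that conserves both $x_{j-1}x_j$ and $x_j x_{j+1}$, every neighbor $u\in N(x_{j-1})\cap N(P)$ creates a fresh $P_4$ $u-x_{j-1}-x_j-x_{j+1}$ whose only deletable edge (all other edges of this $P_4$ being conserved) is $u x_{j-1}$, forcing at least one extra deletion on that side. When both $x_{j-1}$ and $x_{l+1}$ are heavy (which always holds if $|L|=2$, since any light neighbor would extend $L$), a symmetric argument on the right yields the improved branching vector $(1,2,3,3)$ whose branching factor is exactly $2$ by the factorization $x^3-x^2-x-2=(x-2)(x^2+x+1)$. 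For $|L|\geq 3$ with only one side of $L$ heavy, the same reasoning only gives the worse vector $(1,2,2,3)\approx 2.15$, and I would further branch inside the problematic subbranch by cascading the $P_4$-forcing argument along an additional position of $L$, combining with Lemma~\ref{lem:three-nbrs} to bound the external-neighborhood contribution and extract the missing forced deletion.

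The main obstacle is precisely this asymmetric sub-case: the naive one-sided-heavy analysis stops at factor $\approx 2.15$, which is above $2+\epsilon$ for small $\epsilon$, and bridging the gap requires a careful simultaneous branching around two well-chosen edges of $L$ (one near each extremity, reachable because $|L|<c$ forces the opposite boundary to also be close) so that the forced deletions from the heavy boundary vertices accumulate across both sides of the branching, yielding a vector of a form that Proposition~\ref{prop:branch-factors} certifies as achieving factor $2+\epsilon$ for large enough $c$.
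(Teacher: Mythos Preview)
Your handling of the case $|L|=2$ (both $x_{i-1}$ and $x_{i+2}$ heavy) matches the paper's first case exactly and yields the correct $(1,2,3,3)$ vector with branching factor $2$. The invocation of Lemma~\ref{lem:multiconditions} for $|L|\geq c$ is also fine.

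The genuine gap is the case $|L|\geq 3$ (equivalently, $x_{i-1}$ is also light while $x_{i+2}$ is heavy). You correctly identify that the one-sided argument gives only $(1,2,2,3)$ with factor about $2.15$, but your plan to bridge this via ``simultaneous branching around two well-chosen edges of $L$'' together with an appeal to Proposition~\ref{prop:branch-factors} is not a proof: you do not specify the edges, do not enumerate the resulting branches or their costs, and Proposition~\ref{prop:branch-factors} applies only to vectors of two very specific shapes which your construction does not visibly produce. Invoking Lemma~\ref{lem:three-nbrs} is also a red herring here, since the problematic branch involves no vertex of $U$ at all. There is no evident reason this route reaches $2+\epsilon$.

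The paper closes this case with a completely different and much simpler idea: an \emph{exchange argument}. Assuming (after the WLOG reduction) that $x_{i+2}$ is heavy and $x_{i-1},x_i,x_{i+1}$ are all light, one branches on the $P_4$ $x_{i-2}-x_{i-1}-x_i-x_{i+1}$ with three options: (a) delete $x_{i-2}x_{i-1}$; (b) conserve it and delete $x_{i-1}x_i$; (c) conserve both, forcing deletion of $x_{i-3}x_{i-2}$ and $x_ix_{i+1}$. Options (b) and (c) give costs $1$ and $2$. For (a), after deleting $x_{i-2}x_{i-1}$ the three light vertices $x_{i-1},x_i,x_{i+1}$ dangle from $x_{i+2}$; one argues that \emph{some} optimal solution additionally deletes $x_{i+1}x_{i+2}$ (if an optimal solution instead deleted $x_{i-1}x_i$ or $x_ix_{i+1}$, swap that deletion for $x_{i+1}x_{i+2}$; this cannot create a new $P_4$ because $x_{i-1},x_i,x_{i+1}$ then form their own connected component, and costs are equal since all edges of $P$ may be assumed to have weight $1$). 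Hence branch (a) has cost $2$, giving branching vector $(2,1,2)$ with factor exactly $2$. No growing $c$, no cascading, and no appeal to Proposition~\ref{prop:branch-factors} is needed.
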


\begin{proof}
    Assume that $x_i, x_{i+1}$ are light and, without loss of generality, that $x_{i+2}$ is an heavy vertex (there must exist a heavy vertex after $x_{i+1}$ as otherwise $P$ would have a long degree two subpath starting at $x_i$, and the bounds on $i$ allow us to invoke  Lemma~\ref{lem:multiconditions}).
    Likewise, we assume that every edge of $P$ has weight $1$ as otherwise we invoke~Lemma~\ref{lem:multiconditions}.  Also note that the vertices $x_{i-3}$ and $x_{i+3}$ exist.

    Suppose first that $x_{i-1}$ is heavy.  
    In that case, we branch around $x_{i} x_{i+1}$, with the definition of ``branching around'' described earlier.  When we consider the case that conserves the edges of the subpath $x_{i-1} - x_i - x_{i+1}$, the edges between $x_{i-1}$ and its neighbors in $U$ must be deleted, and that gives a deletion set of cost at least $3$ (see Figure~\ref{fig:twolight}).  The same occurs when we conserve $x_i - x_{i+1} - x_{i+2}$, and we get branching factor $(1, 2, 3, 3)$ or better, with branching factor $2$.

    \begin{figure}[H]
        \centering
        \includegraphics[width=0.65\linewidth]{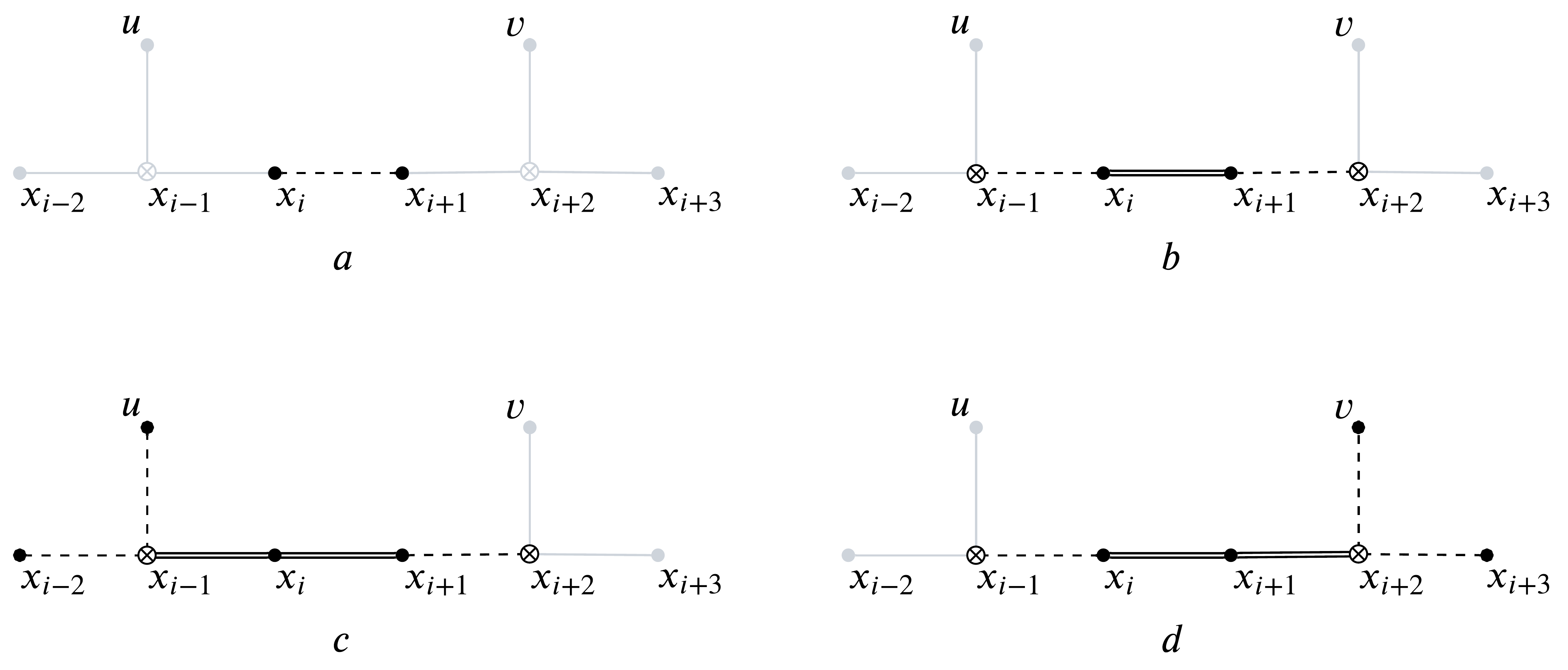}
        \caption{Branching when there are two consecutive light vertices $x_i, x_{i+1}$ and $x_{i-1}$ is heavy.  Light vertices are small filled circles and heavy vertices are large circles with a cross.}
        \label{fig:twolight}
    \end{figure}

    \begin{figure}[H]
        \centering
        \includegraphics[width=0.85\linewidth]{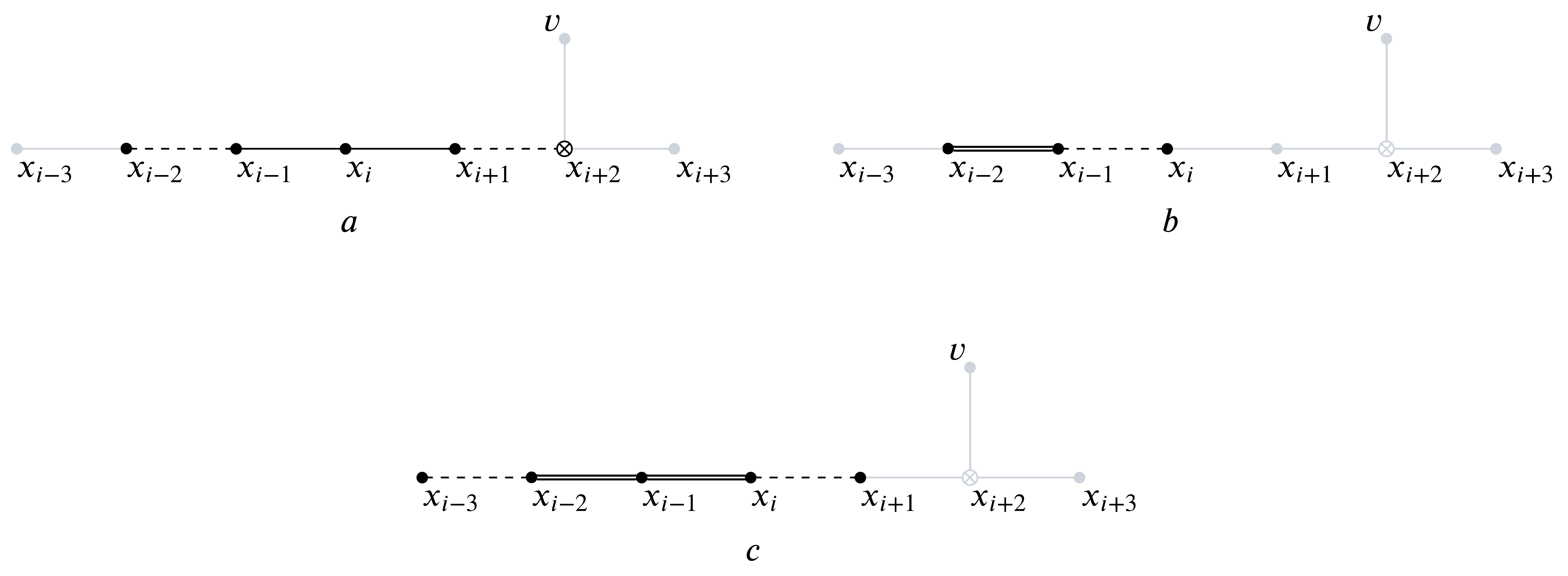}
        \caption{Three or more consecutive light vertices.}
        \label{fig:threelight}
    \end{figure}

    So suppose that $x_{i-1}$ is also a light vertex, as shown in Figure~\ref{fig:threelight} (it doesn't matter whether $x_{i-2}$ is light or heavy). 
 So either (a) we delete $x_{i-2} x_{i-1}$.  
    In the resulting graph $G'$, $x_{i-1} - x_i - x_{i+1} - x_{i+2}$ is a $P_4$ and only $x_{i+2}$ has neighbors outside of $P$.  
    In that case, we claim that there exists an optimal deletion set of $G'$ that deletes $x_{i+1} x_{i+2}$.  Indeed, if an optimal deletion set removes $x_{i-1} x_i$ or $x_i x_{i+1}$, we can reinsert that edge and delete $x_{i+1} x_{i+2}$ instead.  This cannot create new $P_4$s because $x_{i-1}, x_i, x_{i+1}$ are light and form their own connected component in the alternate graph.  Moreover, this alternate deletion set is also optimal because we assume that all edges of $P$ have weight 1.
    So in that branch, we can safely delete $x_{i+1} x_{i+2}$, which yields a deletion set of cost at least $2$.

    Next (b) we conserve $x_{i-2} x_{i-1}$ and delete $x_{i-1} x_i$; or (c) we conserve both $x_{i-2} x_{i-1}$ and $x_{i-1} x_{i}$, which enforces the deletion of $x_{i-3} x_{i-2}$ and $x_i x_{i+1}$.  
    This case results in branching vector $(2, 1, 2)$ and branching factor $2$.
    
\end{proof}

\begin{lemma}\label{lem:no-alt}
    Suppose that $P$ has vertices $x_{i-1}, x_i, x_{i+1}$ such that $x_{i-1}$ and $x_{i+1}$ are light and $x_i$ is heavy, where $4 \leq i \leq 3c - 2$. Then one can achieve branching factor at most $2 + \epsilon$ with a large enough $c$.  
\end{lemma}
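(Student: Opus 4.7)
The plan is to branch around the edge $x_{i-1}x_i$ and exploit a $U$-neighbor of $x_i$. First I would argue that $x_i$ has a neighbor $u \in U$: since $x_i$ is heavy it has some neighbor in $N(P)$, and any such neighbor in $N(P)\setminus U$ would be adjacent to every vertex of $P$, in particular to the light $x_{i-1}$, a contradiction. By Lemma~\ref{lem:three-nbrs} I may further assume $u$ has at most two neighbors in $P_{3,3c-2}$; since $x_i$ is one such neighbor, at most one of $\{x_{i-2},x_{i+2}\}$ is a neighbor of $u$, and by reversing $P$ if needed I may take $u \not\sim x_{i+2}$, so that $Q_R = u - x_i - x_{i+1} - x_{i+2}$ is an induced $P_4$.

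Next, I would branch around $x_{i-1}x_i$ in the four standard cases: (a) delete $x_{i-1}x_i$ (cost $1$); (b) conserve $x_{i-1}x_i$ and delete both $x_{i-2}x_{i-1}$ and $x_ix_{i+1}$ (cost $2$); (c) conserve $x_{i-1}x_i$ together with $x_{i-2}x_{i-1}$, forcing via the path $P_4$s the deletions of $x_{i-3}x_{i-2}$ and $x_ix_{i+1}$; (d) conserve $x_{i-1}x_i$ together with $x_ix_{i+1}$, forcing the deletions of $x_{i-2}x_{i-1}$ and $x_{i+1}x_{i+2}$. The key refinement is that in (c), whenever $u \not\sim x_{i-2}$, the $P_4$ $Q_L = x_{i-2} - x_{i-1} - x_i - u$ is present and its other three edges are all conserved in (c), forcing the deletion of $x_iu$ and raising the cost of (c) to $3$. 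For case (d), the symmetric third deletion is obtained by switching the branching axis to $x_ix_{i+1}$: in the analogous case (d$'$), conservation of $x_ix_{i+1}$ and $x_{i+1}x_{i+2}$ forces $x_iu$ via $Q_R$. Depending on the adjacencies of $u$, I pick whichever of the two branchings delivers the third forced deletion; combining the cases yields a safe collection with branching vector dominated by $(1,2,3,3)$, whose branching factor is exactly $2$, hence at most $2+\epsilon$ for any $\epsilon>0$.

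The main obstacle is producing the third forced deletion in case (d) of the branching around $x_{i-1}x_i$: the $P_4$ $Q_R$ through $u$ on the right is killed by the forced deletion of $x_{i+1}x_{i+2}$, so a third deletion cannot be read off locally. The fix is the axis-switching sub-case analysis described above; when $u \sim x_{i-2}$ (so $Q_L$ is not induced), the long-range $P_4$ $x_{i-3} - x_{i-2} - u - x_i$, valid because $u \not\sim x_{i-3}$ by Lemma~\ref{lem:three-nbrs}, serves as an alternative witness that forces the extra deletion. Boundary cases near $i=4$ or $i=3c-2$ where the required path neighbors do not exist are absorbed into the preceding lemmas.
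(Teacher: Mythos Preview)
Your approach has a genuine gap in the case analysis. Observe that case~(d) of the branching around $x_{i-1}x_i$ (conserve $x_{i-1}x_i$ and $x_ix_{i+1}$, hence delete $x_{i-2}x_{i-1}$ and $x_{i+1}x_{i+2}$) is \emph{literally the same case} as case~(c$'$) of the branching around $x_ix_{i+1}$; switching the axis does not replace it by a different situation. In that common case, both forced deletions sever the light endpoints $x_{i-1}$ and $x_{i+1}$ from the rest of $P$, so every $P_4$ through the conserved subpath $x_{i-1}{-}x_i{-}x_{i+1}$ would need a fourth vertex adjacent to $x_{i-1}$ or to $x_{i+1}$, which no longer exists. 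The edge $x_iu$ is not conserved, so the path $x_{i-1}{-}x_i{-}u{-}w$ does not force any specific deletion. Consequently the best vector you obtain from either axis is $(1,2,2,3)$, whose branching factor is roughly $2.148$, not $2$. Your fallback witness $x_{i-3}{-}x_{i-2}{-}u{-}x_i$ when $u\sim x_{i-2}$ does not help either: none of its edges is conserved in the relevant case, so it forces nothing.

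The paper takes a completely different route that does not use $u$ at all. It keeps the standard four-way branching around $x_{i-1}x_i$ but argues, via an exchange argument on optimal solutions, that case~(b) (conserve $x_{i-1}x_i$ while deleting both surrounding edges) is never needed: if an optimal solution does this, one can re-insert $x_ix_{i+1}$ and, if necessary, delete $x_{i+1}x_{i+2}$ instead, using that $x_{i-1}$ and $x_{i+1}$ are light and hence interchangeable pendants of $x_i$ after the cuts. Dropping case~(b) leaves the safe collection~(a), (c), (d) with branching vector $(1,2,2)$ and branching factor exactly~$2$. The missing idea in your attempt is precisely this optimality exchange; a purely local $P_4$-forcing argument through a single $U$-neighbor cannot close the gap in the symmetric middle case.
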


\begin{proof}
    We assume that all edges of $P$ have weight $1$, or else invoke Lemma~\ref{lem:multiconditions}.  Note that vertices $x_{i-3}$ and $x_{i+2}$ exist.  We branch around $x_{i-1} x_i$, but we argue that we can skip the branch where we delete the two edges surrounding it, and only consider the cases a, b, c shown in Figure~\ref{fig:light-heavy-light-1}.  To be specific, we either delete $x_{i-1} x_i$, or conserve it. 
    Under the assumption that the edge is conserved, we argue that some optimal solution does not delete both $x_{i-2} x_{i-1}$ and $x_i x_{i+1}$.  This is shown in Figure~\ref{fig:light-heavy-light-2}.
    To see this, assume that $G'$ is a cograph obtained from $G$ after applying an optimal deletion set, such that both edges are deleted (but in which $x_{i-1} x_i$ is present).

    \begin{figure}[H]
        \centering
        \includegraphics[width=0.85\linewidth]{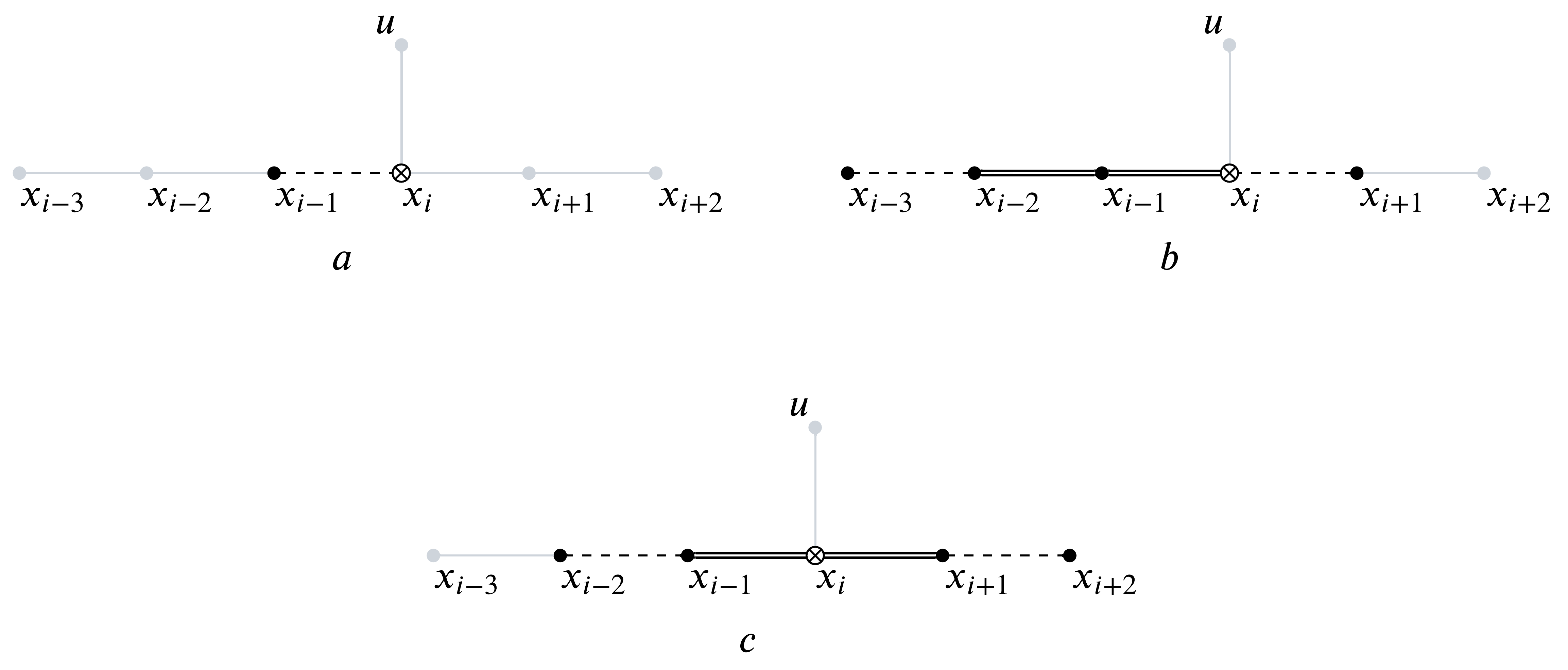}
        \caption{Cases for the light - heavy - light pattern.}
        \label{fig:light-heavy-light-1}
    \end{figure}

    \begin{figure}[H]
        \centering
        \includegraphics[width=0.85\linewidth]{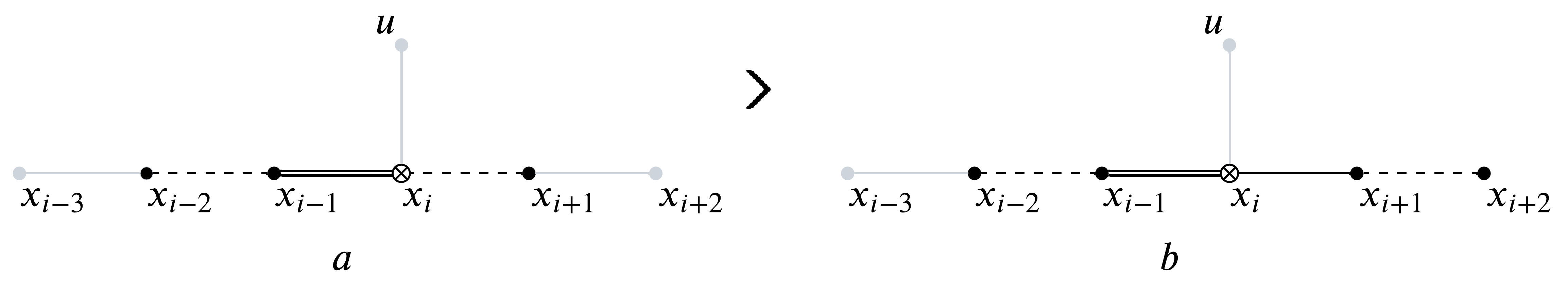}
        \caption{The deletion set with both edges around $x_{i-1} x_i$ being deleted is not necessary, since some alternate optimal solution as shown in (b) exists.}
        \label{fig:light-heavy-light-2}
    \end{figure}

    If $x_{i+1} x_{i+2}$ is deleted as well, consider the graph $G''$ obtained from $G'$ by adding $x_i x_{i+1}$ back in.  If $G''$ is not a cograph, then some $P_4$ contains that edge.  Then, because $x_{i+1}$ only has $x_i$ as a neighbor in that graph, the $P_4$ starts with the edge $x_{i+1} x_i$.  So that $P_4$ has the form $x_{i+1} - x_i - u - v$.  Moreover, $x_{i-1}$ also has only $x_i$ as a neighbor in $G''$, and we can obtain the $P_4$ $x_{i-1} - x_i - u - v$. 
    Since $x_{i+1}$ is not involved, that $P_4$ is also in $G'$, a contradiction.  The fact that we can add $x_i x_{i+1}$ back in contradicts the optimality of $G'$. 

    If $x_{i+1} x_{i+2}$ is not deleted in the optimal deletion set, then consider the alternate solution $G''$ obtained by adding $x_i x_{i+1}$ back in and deleting $x_{i+1} x_{i+2}$ (see Figure~\ref{fig:light-heavy-light-2}).  Again, $x_{i+1}$ and $x_{i-1}$ both have only $x_i$ as a neighbor. If $x_{i+1}$ is in some $P_4$ $x_{i+1} - x_i - u - v$, then $G''$ also has $x_{i-1} - x_i - u - v$.  In $G'$, adding back $x_{i+1} x_{i+2}$ cannot create a shortcut in this $P_4$, so that $P_4$ would be in $G'$, a contradiction.  Moreover, this alternate solution is also optimal since we assume that every edge of $P$ has weight $1$.  This proves our claim.

    So, we do not need to consider the case where $x_{i-1} x_i$ is conserved and its two surrounding edges are deleted. 
    Thus only the cases in Figure~\ref{fig:light-heavy-light-1} and needed, which results in branching vector $(1, 2, 2)$ and branching factor $2$.
\end{proof}

We reach the last possibility.

\begin{lemma}\label{lem:no-two-heavy}
    Suppose that $P$ has two consecutive heavy vertices $x_i, x_{i+1}$, where $c/2 \leq i \leq 3c/2$.
    Then one can achieve branching factor at most $2 + \epsilon$ with a large enough $c$.
\end{lemma}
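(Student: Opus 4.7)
First I would apply Lemma~\ref{lem:multiconditions} to assume every edge of $P$ has weight one, and Lemma~\ref{lem:three-nbrs} to assume every $v \in N(P)$ has at most two neighbors in $P_{3, 3c-2}$; in particular each such $v$ has at most one neighbor on $P$ near $x_i, x_{i+1}$ besides whichever of $x_i, x_{i+1}$ it is already known to be adjacent to. Pick $u \in N(x_i) \cap N(P)$ and $u' \in N(x_{i+1}) \cap N(P)$, which exist because both vertices are heavy. The proof then case-splits on whether $u, u'$ can be chosen distinct with $u \not\sim x_{i+1}$ and $u' \not\sim x_i$, or whether every such choice forces $u = u'$ to be a common neighbor.

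In the ``distinct-neighbors'' case, considering the generic sub-case where $u$ has no extra neighbor in $\{x_{i-1}, x_{i+2}\}$, similarly for $u'$, and $uu' \notin E(G)$, the induced subgraph on $\{u, x_{i-1}, x_i, x_{i+1}, x_{i+2}, u'\}$ contains four induced $P_4$s all crossing the edge $x_i x_{i+1}$: $u - x_i - x_{i+1} - u'$, $u - x_i - x_{i+1} - x_{i+2}$, $x_{i-1} - x_i - x_{i+1} - u'$, and $x_{i-1} - x_i - x_{i+1} - x_{i+2}$. Their side edges form a $K_{2,2}$-pattern with two ``left'' edges $\{u x_i, x_{i-1} x_i\}$ and two ``right'' edges $\{x_{i+1} u', x_{i+1} x_{i+2}\}$. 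I would branch with three safe deletion sets: $\{x_i x_{i+1}\}$, or (conserving $x_i x_{i+1}$) $\{u x_i, x_{i-1} x_i\}$, or (conserving $x_i x_{i+1}$) $\{x_{i+1} u', x_{i+1} x_{i+2}\}$. Safety follows because every hitting set of the four $P_4$s that avoids $x_i x_{i+1}$ is a vertex cover of this $K_{2,2}$ and must contain both left or both right edges. This yields branching vector $(1, 2, 2)$ with factor exactly $2$. Sub-cases where some extra adjacency occurs (e.g.\ $u \sim x_{i-1}$, $uu' \in E(G)$, or $u$ adjacent to $x_{i+2}$) are handled by replacing the missing crossing $P_4$s with $P_4$s that still exist and tailoring the branching analogously, typically producing vectors dominated by $(1, 2, 2)$ or $(1, 2, 2, 2)$.

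In the common-neighbor case, Lemma~\ref{lem:three-nbrs} forces $u$ to be adjacent to exactly $\{x_i, x_{i+1}\}$ in $P_{3, 3c-2}$, exposing the side $P_4$s $x_{i-2} - x_{i-1} - x_i - u$ and $u - x_{i+1} - x_{i+2} - x_{i+3}$ together with the central $x_{i-1} - x_i - x_{i+1} - x_{i+2}$. I would branch around $x_i x_{i+1}$ using its four standard sub-cases. In the three conservation sub-cases, the two forced path deletions together with $u$'s restricted adjacency destroy all three local $P_4$s, yielding cost $2$. In the deletion sub-case, removing $x_i x_{i+1}$ reveals the induced $P_7$ on $\{x_{i-2}, x_{i-1}, x_i, u, x_{i+1}, x_{i+2}, x_{i+3}\}$, whose four $P_4$s cannot be eliminated by a single deletion; I would enumerate the constant number of minimum cograph-completions of this $P_7$ as sub-branches of cost $3$.

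The main obstacle is verifying that the common-neighbor case achieves branching factor at most $2 + \epsilon$: the naive enumeration of the six minimum $P_7$-breakings, combined with the three cost-$2$ conservation sub-branches, yields a vector with factor near $2.36$, which is above the target. Closing this gap is the crucial step, and I expect it to require either pruning the sub-branches (for instance by running a DP analogous to Lemma~\ref{lem:path-dp} on the exposed $P_7$ to commit to a single canonical pair of forced deletions, turning the deletion sub-case into a single cost-$3$ branch to give the vector $(3, 2, 2, 2)$ of factor strictly less than $2$) or by exploiting the long path segments of $P$ on either side of $x_i, x_{i+1}$ to amortize the cost as $c$ grows. With this refinement, the branching factor can be driven down to $2 + \epsilon$ for any fixed $\epsilon > 0$ once $c$ is chosen large enough.
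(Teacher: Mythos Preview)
Your distinct-neighbor case is pleasant: in the generic sub-case the $K_{2,2}$ vertex-cover argument is valid and yields $(1,2,2)$, which is tighter than the paper's $(1,2,3,3)$. However, the non-generic sub-cases are not as harmless as you suggest. If, say, $u' \sim x_{i-1}$ (which is precisely the situation the paper singles out), the $P_4$ $x_{i-1}-x_i-x_{i+1}-u'$ disappears, the $K_{2,2}$ loses an edge, and a solution can now hit the three remaining $P_4$s with $\{u x_i,\, x_{i+1}x_{i+2}\}$, which is neither ``both left'' nor ``both right''. So the $(1,2,2)$ branching is no longer safe there and you really do need a separate argument; the paper handles this by shifting the branched edge to $x_{i+1}x_{i+2}$ or $x_{i+2}x_{i+3}$ and still gets $(1,2,3,3)$.

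The real gap is the common-neighbor case, and your own diagnosis is accurate: the vector $(3)^6+(2,2,2)$ has factor about $2.36$. The proposed fix, running a DP on the exposed $P_7$ to commit to a \emph{single} canonical pair of deletions, does not work. Lemma~\ref{lem:path-dp} applies only when the path is the whole graph; here the vertices $x_{i-2},x_{i-1},x_{i+2},x_{i+3}$ and $u$ have further neighbors in $G$, so a minimum $P_7$-deletion need not be contained in any optimal solution of $G$. (It is true that every $P_7$-hitting set contains \emph{some} minimum $2$-set, which makes the six-way sub-branch safe, but that is exactly what gives you $2.36$; picking one of the six is not.) Your fallback of ``exploiting the long path segments'' is the right instinct but needs a concrete mechanism.

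The paper's resolution is structural rather than local. Once the distinct-neighbor case is dispatched, every pair of consecutive heavy vertices in the middle range shares a common $U$-neighbor. The paper then first eliminates three consecutive heavies (two overlapping triangles give $(1,2,3,3)$), and with at most two consecutive heavies it invokes Lemmas~\ref{lem:no-two-light} and~\ref{lem:no-alt} to force the pattern light--heavy--heavy--light to \emph{repeat} along a stretch of $P$. This exposes a six-vertex window $a,b,c,d,e,f$ with three distinct common $U$-neighbors $u,v,w$ over consecutive heavy pairs, and a hand-tailored branching on that window yields the vector $(1,4,2,5,4,4)$ with factor below $1.97$. The key idea you are missing is that the common-neighbor pair cannot be resolved in isolation; you must couple it with the neighboring light/heavy pattern, which the earlier lemmas guarantee.
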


\begin{proof}
    As a first case, assume  that $x_i, x_{i+1}$ each have distinct neighbors in $U$.  That is, $x_i$ has neighbor $u \in U$ and $x_{i+1}$ has neighbor $v \in U$, and $x_i v$ and $x_{i+1} u$ are non-edges.
    If neither $x_{i-1} v$ nor $x_{i+2} u$ is an edge, then we can do the branching around $x_i x_{i+1}$ as follows: (1) delete $x_i x_{i+1}$; (2) delete its two surrounding edges; (3) conserve $x_{i-1} x_i$ and $x_i x_{i+1}$, which enforces deleting two edges of $P$, plus the edge $x_{i+1} v$; (4) conserve $x_i x_{i+1}$ and $x_{i+1} x_{i+2}$, which enforces deleting two edges of $P$, plus the edge $x_{i} u$.  This has branching vector $(1, 2, 3, 3)$ and branching factor $2$.

    So we may assume that at least one of $x_{i-1} v$ or $x_{i+2} u$ is an edge.  Suppose that $x_{i-1} v \in E(G)$, bearing in mind that the case where $x_{i+2} u \in E(G)$ can be handled in an identical manner by a symmetric argument.
    Note that by Lemma~\ref{lem:three-nbrs}, we may assume that $v$ has at most two neighbors in $P_{3, 3c-2}$, and so $x_{i-1}, x_{i+1}$ are the only neighbors of $v$ on that subpath.  Since we have $c/2 \leq i \leq 3c/2$, we may assume that $c$ is large enough so that $v$ has no neighbors among $x_{i+2}, x_{i+3}, x_{i+4}$. 
    
    Given that, we have two possibilities depending on whether $u x_{i+2}$ is an edge or not.  Figure~\ref{fig:twoheavy-1} enumerates the deletions sets to return when $u x_{i+2}$ is not an edge, and Figure~\ref{fig:twoheavy-2} when it is an edge.  Both cases lead to branching factor $2$.

    \begin{figure}[H]
        \centering
        \includegraphics[width=0.7\linewidth]{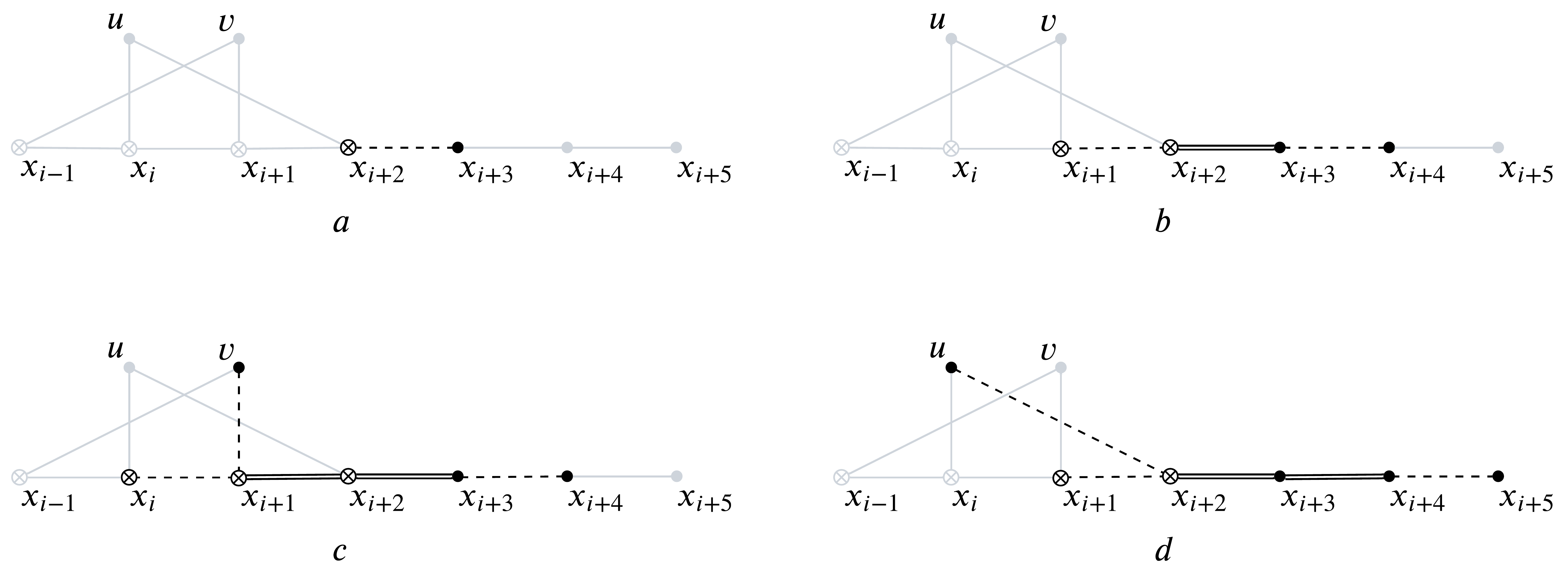}
        \caption{if $u x_{i+2}$ is an edge, we branch around $x_{i+2} x_{i+3}$.  Here, we assume that $u$ and $v$ do not have at most two neighbors in $P_{3, 3c - 2}$, so that they have no additional neighbors on the subpath illustrated.  So in the branching around, when we make the subpath $x_{i+1} - x_{i+2} - x_{i+3}$ conserved in (c), the edge $x_{i+1} v$ must be deleted.  Likewise in the last case when making $x_{i+2} - x_{i+3} - x_{i+4}$ conserved, $x_{i+2} u$ must be deleted.  This yields branching vector $(1, 2, 3, 3)$ and branching factor $2$. }
        \label{fig:twoheavy-1}
    \end{figure}

    \begin{figure}[H]
        \centering
        \includegraphics[width=0.7\linewidth]{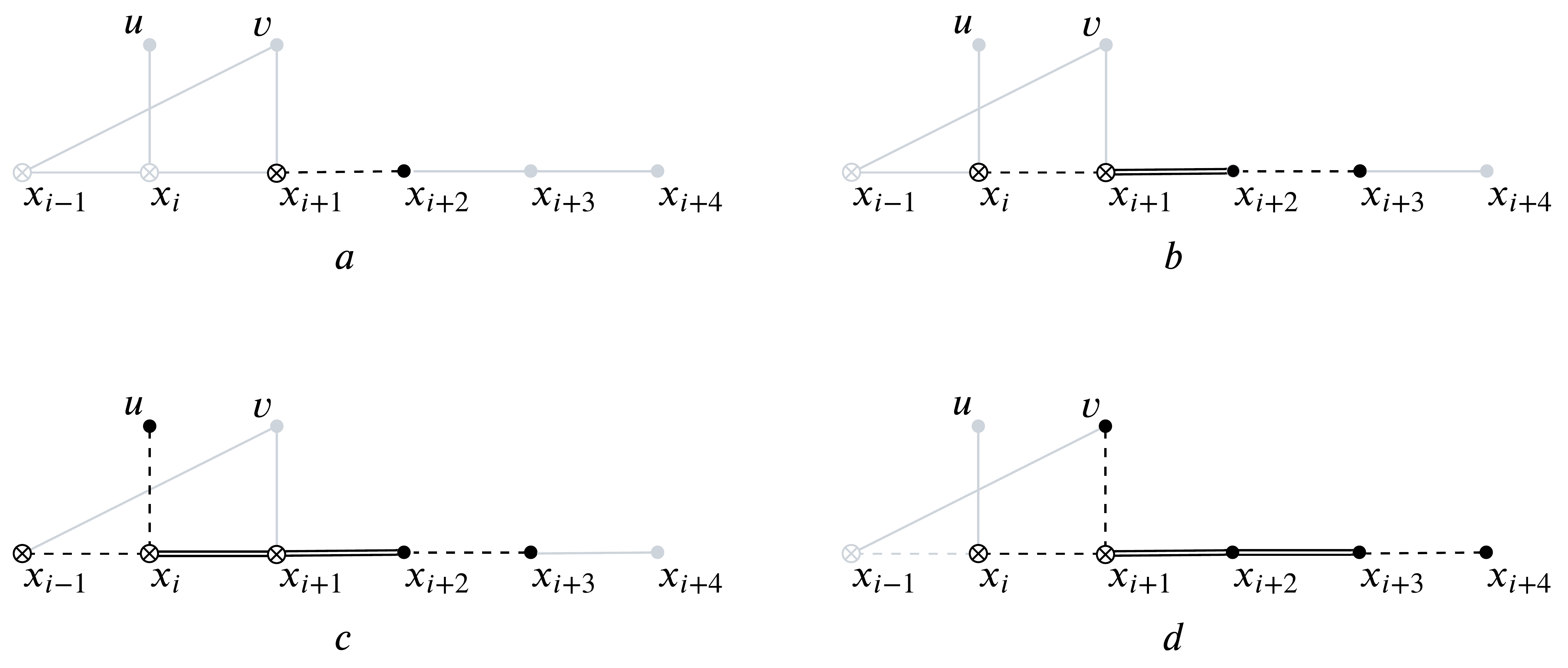}
        \caption{if $u x_{i+2}$ is not an edge, we branch around $x_{i+1} x_{i+2}$.  In this case, the main difference being that when $x_i - x_{i+1} - x_{i+2}$ is conserved we must delete $x_i u$.  This also yields branching vector $(1, 2, 3, 3)$ and branching factor $2$.
         }
        \label{fig:twoheavy-2}
    \end{figure}

    We may now assume that, in the range of subscripts from $c/2$ to $3c/2$, two consecutive heavy vertices do not each have a neighbor in $U$ of their own.  This implies that for two such vertices, one vertex has a neighborhood in $U$ that is a subset of the other. 
    In particular, two such consecutive vertices have at least one common neighbor in $U$.
    We next handle the case where there are three consecutive heavy vertices.
    So suppose that $x_i, x_{i+1}, x_{i+2}$ are heavy, where $c/2 \leq i \leq 3c/2 - 2$.  
    Then, $x_i, x_{i+1}$ have a common neighbor $u$ in $U$.  Also, $x_{i+1} x_{i+2}$ have a common neighbor $v$ in $U$.  We may assume that $u \neq v$, as otherwise $u$ would have three neighbors in $P_{3, 3c-2}$ and we can invoke Lemma~\ref{lem:three-nbrs}.  We can then branch around $x_{i+2} x_{i+3}$ as in Figure~\ref{fig:double-triangle}.

    \begin{figure}[H]
        \centering
        \includegraphics[width=0.65\linewidth]{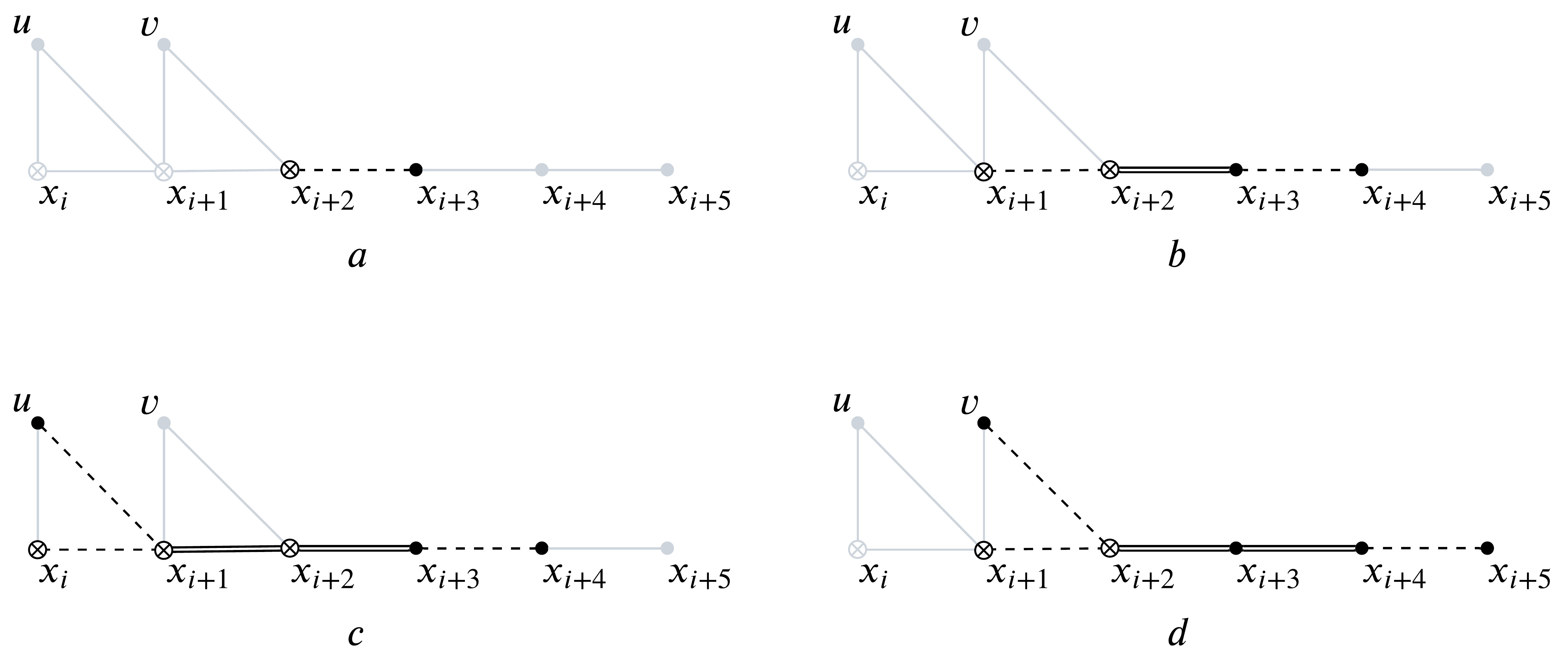}
        \caption{Building deletion sets when there are three consecutive heavy vertices, where we assume that $u$ and $v$ have only two neighbors among the vertices of $P$ illustrated (which will be the case if $c$ is large enough).  In (c), when we make the subpath $x_{i+1} - x_{i+2} - x_{i+3}$ conserved, $x_{i+1} u$ must be deleted.  In (d), when we make $x_{i+2} x_{i+3}$ conserved we must delete $x_{i+2} v$.  This yields branching vector $(1, 2, 3, 3)$ and branching factor $2$. 
        }
        \label{fig:double-triangle}
    \end{figure}

    So, we may assume that there are never more than two consecutive heavy vertices in the range of subscripts from $c/2$ to $3c/2$. 
    In particular, this means that there is $i \in \{c, c+1, c+2\}$ such that $x_i$ is a light vertex.  We can deduce that the structure illustrated in Figure~\ref{fig:last-effing-case} must occur (where the predecessor of $a$ in the figure plays the role of $x_{i+1}$) using the following observations:
    \begin{itemize}
        \item 
        The possible range of $i$ allows applying Lemma~\ref{lem:no-two-light}, so we may assume that there are not two consecutive light vertices, so $x_{i+1}$ is heavy.  
        \item 
        Then by Lemma~\ref{lem:no-alt}, we may assume that the pattern light - heavy - light does not occur, so $x_{i+2}$ is heavy as well.  By our assumption, $x_{i+1}$ and $x_{i+2}$ have a common neighbor in $U$.

        \item 
        There are no three consecutive heavy's, so $x_{i+3}$ is light.

        \item 
        Repeating the same logic, $x_{i+4}, x_{i+5}$ are heavy and have a common neighbor in $U$, and $x_{i+6}$ is light.
    \end{itemize}

    Since $i \in \{c, c+1, c+2\}$, with a large enough $c$ this chain of ``light $-$ heavy $-$ heavy $-$ light''  can be extended as long as desired, and the vertices $a, b, c, d, e, f$ illustrated in Figure~\ref{fig:last-effing-case} can be found.  Since we may assume that the common neighbor of consecutive heavy vertices have two neighbors on $P_{3, 3c-2}$ and not more, we may perform the branching as illustrated.  This results in branching vector $(1,2,3,4,5)$ and branching factor $1.97$. 

    \begin{figure}[H]
        \centering
        \includegraphics[width=0.75\linewidth]{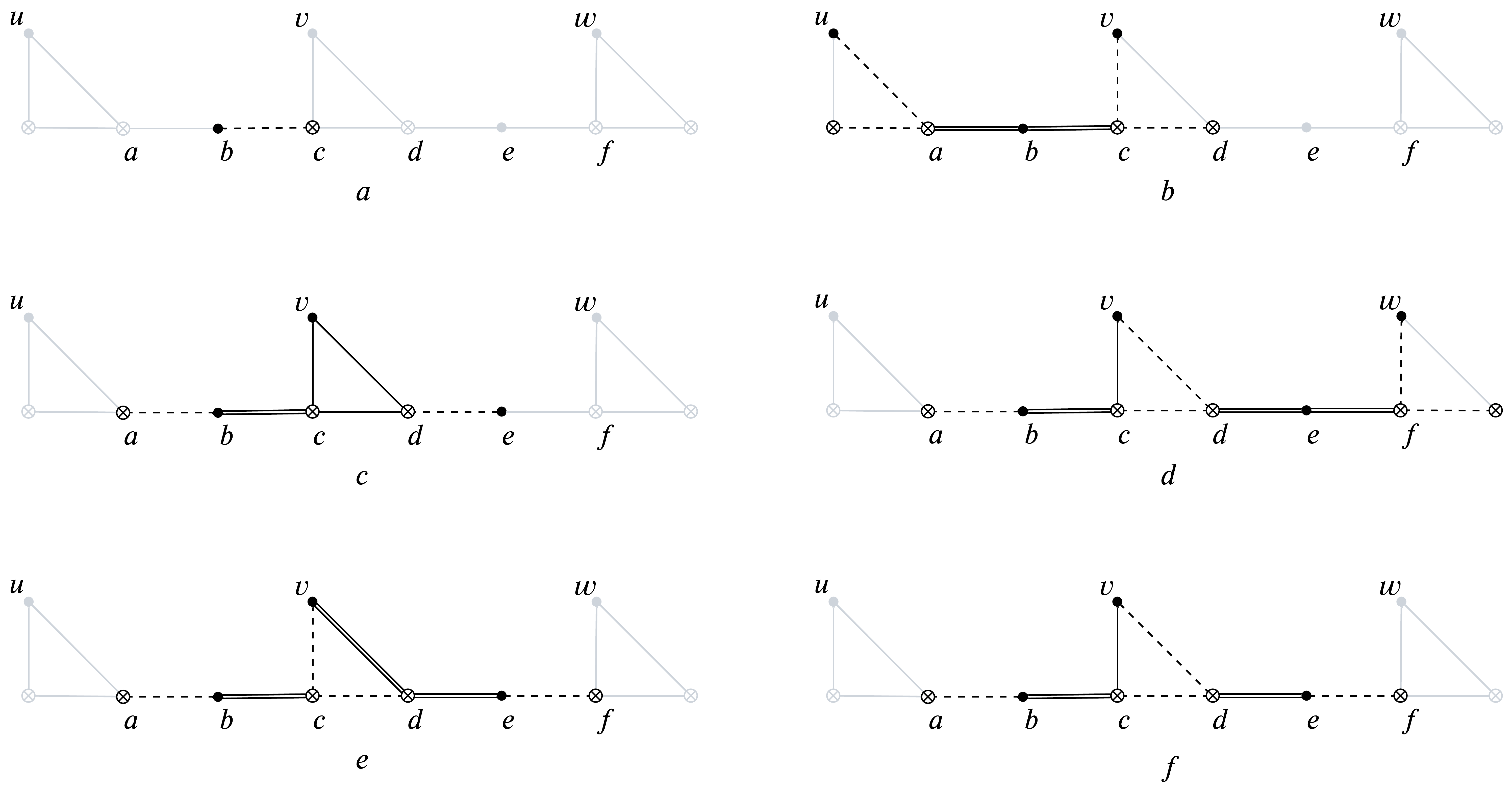}
        \caption{The final ultimate case, where to ease reading we now use vertex names $a, b, c, d, e, f$, where $a, c, d, f$ are heavy and $b, e$ are light.   We assume that $u, v, w$ are vertices of $U$ that have exactly two neighbors among the vertices of $P$ illustrated.  
        We consider the following cases: (a) delete $bc$.  Then in what follows, we assume that $bc$ is conserved;  (b) conserve $ab$. This enforces the $4$ deletions shown. 
        In what follows, we assume that $bc$ is conserved and $ab$ is deleted;  
        (c) delete $de$, resulting in $2$ deletions.
        In what follows, we further assume that $de$ is conserved, and thus $cd$ must be deleted in all subsequent cases because of $b - c - d - e$; 
        (d) conserve $ef$.  This enforces a total of $5$ deletions as shown.  In what follows, we further assume that $ef$ is deleted;
        (e) conserve $dv$, where $v$ is a common neighbor of $c, d$ in $U$.  Given that $cd$ must be deleted, $cv$ must then be deleted because of $e - d - v - c$.  With $ab$ and $ef$ this amounts to $4$ deletions; 
        (f) delete $dv$.  With $ab, ef$ assumed to be deleted, plus $cd$ forced to be deleted, we have $4$ deletions.
        \\
        The result is branching vector $(1, 4, 2, 5, 4, 4)$ whose branching factor is less than $1.969$.}
        \label{fig:last-effing-case}
    \end{figure}

    This concludes the proof, since every possible situation leads to a safe deletion set and a branching factor of $2 + \epsilon$ or less.   
\end{proof}

The penultimate deduction we can make from all this cases analysis follows.

\begin{corollary}\label{cor:paths}
    For any constant $\epsilon > 0$, there is a large enough constant $c$ such that if $G$ contains an induced path with $3c$ vertices as an induced subgraph, then one can achieve branching factor at most $2 + \epsilon$. 
\end{corollary}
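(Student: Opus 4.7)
The plan is to invoke the accumulated lemmas in a systematic case analysis, based on the structure of $G$ around an induced path with $3c$ vertices. Let $P = x_1 - x_2 - \cdots - x_{3c}$ be the induced path, and let $N(P)$ and $U$ be the sets defined in the preceding setup. Finding $P$ itself takes $n^{O(c)}$ time by brute-force enumeration of induced subgraphs of the correct size, which is polynomial for constant $c$.

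First handle the trivial case $N(P) = \emptyset$: here $G$ is itself a path graph, so Lemma~\ref{lem:path-dp} yields an optimal deletion set in polynomial time, and we return a safe collection consisting of just this single deletion set (effective branching factor $1$). Otherwise, as argued in the setup (using primality of $G$), the set $U$ is non-empty. Next, check whether any of the three conditions of Lemma~\ref{lem:multiconditions} hold; if so, that lemma delivers branching factor $2 + \epsilon$ for large enough $c$. We also check whether some $u \in U$ has three or more neighbors in $P_{3,3c-2}$; if so, apply Lemma~\ref{lem:three-nbrs}.

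In the remaining case, classify each vertex of $P$ as \emph{light} or \emph{heavy}, and focus on the middle sub-range $\lceil c/2 \rceil \leq i \leq \lfloor 3c/2 \rfloor$, which contains $\Theta(c)$ consecutive vertices and lies within the index constraints of all three pattern lemmas (for large enough $c$). The key structural observation is that any sufficiently long binary sequence over $\{L,H\}$ must contain one of the three patterns already handled: either two consecutive $L$s, two consecutive $H$s, or, failing both, the sequence alternates and any window of three consecutive positions yields either an LHL or an HLH pattern. Since an alternating sequence of four or more vertices always contains an LHL sub-pattern (in either $LHLH$ or $HLHL$, positions $1,2,3$ or $2,3,4$ spell LHL), we can invoke Lemma~\ref{lem:no-alt} in the alternating case. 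In the other two cases we apply Lemmas~\ref{lem:no-two-light} and~\ref{lem:no-two-heavy} respectively. Each of these three lemmas, when applicable, achieves branching factor at most $2 + \epsilon$ for large enough $c$.

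The main obstacle is really just bookkeeping: one must verify that the index constraints of the various lemmas (such as $4 \leq i \leq 2c$ in Lemma~\ref{lem:no-two-light} and $c/2 \leq i \leq 3c/2$ in Lemma~\ref{lem:no-two-heavy}) are all simultaneously satisfiable by picking $c$ large enough, and that the sub-range $\lceil c/2\rceil, \ldots, \lfloor 3c/2\rfloor$ falls within every such window. Since each of the lemmas invoked already requires its own minimum $c$ to achieve branching factor $2 + \epsilon$, we take $c$ to be the maximum over all these thresholds. Then, exactly one of the enumerated cases applies, and in every case we produce in polynomial time a safe collection of deletion sets with branching factor at most $2 + \epsilon$, completing the proof.
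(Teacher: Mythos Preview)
Your proposal is correct and follows essentially the same approach as the paper: reduce to the path case via Lemma~\ref{lem:path-dp} when $N(P)=\emptyset$, and otherwise find one of the patterns $LL$, $HH$, or $LHL$ near the middle of the path and invoke the corresponding lemma. The paper is slightly terser---it inspects only the specific positions $x_c, x_{c+1}, x_{c+2}$ rather than a whole range---and it does not explicitly pre-check Lemmas~\ref{lem:multiconditions} and~\ref{lem:three-nbrs}, since those are already invoked internally by the pattern lemmas; your extra checks are harmless but redundant.
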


\begin{proof}
    If $G$ is itself a path graph, we use Lemma~\ref{lem:path-dp}. Otherwise, we focus on $x_c$ and its surrounding vertices (we choose that vertex because it far enough from the path's ends for all the lemmata to be applicable).  If both vertices $x_c$ and $x_{c+1}$ are heavy, we invoke Lemma~\ref{lem:no-two-heavy}.  Suppose that one of these two vertices, call it $x_i$, is light. If $x_{i+1}$ is light, we invoke Lemma~\ref{lem:no-two-light}.  So assume $x_{i+1}$ is heavy.  If $x_{i+2}$ is heavy, then we invoke Lemma~\ref{lem:no-two-heavy}.  
    Finally, if $x_{i+2}$ is light, we have the light - heavy - light pattern and we invoke Lemma~\ref{lem:no-alt}.
\end{proof}

We have gathered all the necessary ingredients to obtain our main result.

\begin{theorem}
    For any real number $\epsilon > 0$, the \textsc{Cograph Deletion} problem can be solved in time $O^*((2 + \epsilon)^k)$.
\end{theorem}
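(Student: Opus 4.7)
The plan is to invoke Algorithm~\ref{alg:alg} with $\H = \{P_4\}$ and apply Theorem~\ref{thm:algocorrect}. Since $P_4$ is prime (it has no non-trivial modules), correctness is immediate for any choice of $C$. Recognizing $P_4$-free graphs is doable in linear time~\cite{habib2005simple}, so the only missing ingredient is a polynomial-time \textbf{branch} routine that, when fed a prime vertex-weighted graph, returns a safe collection of deletion sets whose branching factor is at most $2+\epsilon$.

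First I would fix the constants. Given the target $\epsilon > 0$, I take $c$ to be the maximum of the lower bounds on $c$ demanded by Lemma~\ref{lem:fixed-graphs}, Lemma~\ref{lem:easy-chains}, Lemma~\ref{lem:chain-complement}, Corollary~\ref{cor:paths}, and also large enough so that Lemma~\ref{lem:forced-subchain} is applicable (i.e., at least $c \geq 8$ and a multiple of $3$). Then I choose $C$ to be the constant of Theorem~\ref{thm:chudnov} instantiated with the parameter $\max(4c, 3c) = 4c$ in place of $c$ (so that any prime graph with $\geq C$ vertices contains one of the seven unavoidable subgraphs with the inflated parameter). With these choices, the ``$|V| < C$'' base case of Algorithm~\ref{alg:alg} is handled in constant time by brute force, and whenever \textbf{branch} is called on a quotient graph $G/\M^*$ with at least $C$ vertices, the graph is prime and large enough for the unavoidable-subgraph machinery to apply.

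The \textbf{branch} routine proceeds by case analysis on which structure from Theorem~\ref{thm:chudnov} is found. Finding an induced copy of any of the seven structures with $4c$ vertices requires only brute-force enumeration over $n^{O(c)}$ vertex subsets, which is polynomial since $c$ is a constant. If one of the six specific graphs (or their complements) is found, I return the deletion sets described in Lemma~\ref{lem:fixed-graphs}, which yields branching factor at most $2+\epsilon$. Otherwise the only remaining unavoidable structure is a chain with $4c$ vertices; I apply Lemma~\ref{lem:forced-subchain} to extract a subchain on $c$ vertices whose binary code is either all $0$s, all $1$s, or ends with one of $[001], [011], [0101]$. The first case is handled by Lemma~\ref{lem:chain-complement}, the third by Lemma~\ref{lem:easy-chains}, and the middle case (all $1$s, i.e., an induced $P_c$ after rescaling by the factor of $3$ used inside Corollary~\ref{cor:paths}) is handled by Corollary~\ref{cor:paths}. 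In each case the achieved branching factor is at most $2+\epsilon$, so the branching factor of \textbf{branch} overall is at most $2+\epsilon$.

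Finally, applying Theorem~\ref{thm:algocorrect} with $g(n) = O(n+m)$ and $h(n) = n^{O(c)}$ yields total running time $O(n \cdot (2+\epsilon)^k \cdot (n^{O(c)} + n + m)) = O^*((2+\epsilon)^k)$, since $c$ depends only on $\epsilon$. The step I expect to be the most delicate is reconciling the various lower bounds on $c$ scattered across the intermediate lemmas (each uses a slightly different subchain length or indexing convention, e.g.\ $4c$ vs.\ $3c$ vs.\ $c$) and making sure that the chain case of Lemma~\ref{lem:forced-subchain} actually produces the exact pattern required by the subsequent lemma that consumes it; beyond that, the argument is an assembly of the already-proved pieces.
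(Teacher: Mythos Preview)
Your proposal is correct and follows essentially the same assembly as the paper's proof: invoke Theorem~\ref{thm:algocorrect} with $\H=\{P_4\}$, then implement \textbf{branch} by searching for one of the unavoidable structures and dispatching to Lemma~\ref{lem:fixed-graphs}, Lemma~\ref{lem:easy-chains}, Lemma~\ref{lem:chain-complement}, or Corollary~\ref{cor:paths}. The one bookkeeping point you flagged is indeed the only wrinkle: the paper instantiates Theorem~\ref{thm:chudnov} with parameter $12c$ (not $4c$), so that after Lemma~\ref{lem:forced-subchain} shrinks a $4\cdot(3c)$-vertex chain to a $3c$-vertex subchain, the all-$1$s case directly yields the $3c$-vertex induced path that Corollary~\ref{cor:paths} consumes; your $4c$ would only produce a $c$-vertex path, so you would need to triple your $c$ before invoking the corollary.
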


\begin{proof}
    Since \textsc{Cograph Deletion} is an instance of the $\HH$-free Modification problem with $\H$ a set of prime graphs, we may use the algorithm from Theorem~\ref{thm:algocorrect}.
    Recall that the running time of the algorithm is $O(n \cdot b^k (g(n) + h(n) + n + m))$, where $g(n)$ is the time to decide whether a graph $G$ is a cograph, $h(n)$ is the running time of the \textbf{branch} function, and $b$ is its branching factor.
    For cographs, $g(n) \in O(n^2)$ is possible (actually, $O(n + m)$, but we have not included $m$ in the arguments of $g$~\cite{habib2005simple}).
    Thus, we only need to analyze the running time of our \textbf{branch} function and its branching factor.

    Also recall that the algorithm needs a constant $C$ to know when to solve using a brute-force algorithm.
    We assume that $C$ is a large enough constant, so that any prime graph with at least $C$ vertices contains one of the unavoidable subgraphs from Theorem~\ref{thm:chudnov} with large enough value of $12c$.  Here, we use $12c$ because Lemma~\ref{lem:forced-subchain} then guarantees that, if $G$ does not contain one of the specific graphs, there is a chain with $3c$ vertices with one of the unavoidable patterns in the binary representation, and if this pattern is a path it needs to have $3c$ vertices for Corollary~\ref{cor:paths} to be applicable. 
    The other unavoidable subgraphs all apply if we replace $c$ with $12c$.  
    Moreover, large enough means that all the previous lemmata are applicable with the desired value of $\epsilon$.
    Theorem~\ref{thm:chudnov} guarantees that a suitable constant $C$ always exists, since the corresponding $12c$ only needs to be large enough with respect to the constant $\epsilon$.

    When \textbf{branch} receives a prime graph $G$ with $C$ vertices or more, we must find which unavoidable subgraph occurs.  We spend time $n^{O(c)}$ to brute-force search for one of the specific subgraphs from Lemma~\ref{lem:fixed-graphs}. 
 If one is found, we invoke the lemma, and otherwise we spend time $n^{O(c)}$ to find a chain with $12c$ vertices, which must exist by Theorem~\ref{thm:chudnov}.   If the chain's binary representation contains one of $[0101], [001], [011]$, then we use Lemma~\ref{lem:easy-chains}.  
    Otherwise, by Lemma~\ref{lem:forced-subchain} the binary representation has a long substring with only $0$s, in which case we use Lemma~\ref{lem:chain-complement}, or only $1$s, in which case we use Corollary~\ref{cor:paths}.  

    This shows that we can devise a function \textsc{branch} running in time $n^{O(c)}$ and returning a safe set of deletion sets for \textsc{Cograph Deletion}, with beanching factor at most $2 + \epsilon$.  This proves that Algorithm~\ref{alg:alg} can be implemented to run in time $O^*((2 + \epsilon)^k)$.
\end{proof}

\section{Conclusion}

We conclude with some remarks on the complexity of the algorithm and possible future uses.  
The reader may notice that the polynomial factors hidden in the $O^*$ notation are significant.  
Recall that the graphs with less than $C$ vertices delegate to an unspecified \textbf{solve} routine, which may take time up to $2^{O(C^2)}$ if we naively enumerate all possible editing sets.  This may impractical if $C = 100$, for instance.  Moreover, the value of $C$ depends on the desired $\epsilon$ in an unspecified manner. 
 The dependency could be established clearly by a deeper analysis on the $C$ required for our desired $c$ and the exact branching vectors that we achieve, but such an analysis is probably not worth it at the moment: we expect $C$ to be large, possibly in the hundreds, for any useful $\epsilon$ such as $0.5$ or less.
 In practice though, this could be mitigated by using other existing FPT algorithms for \textbf{solve} routines.  The other probably more important bottleneck is the $n^{O(c)}$ complexity to find one of the unavoidable subgraphs from Theorem~\ref{thm:chudnov}.  As we mentioned before, perhaps finding one of these is FPT in parameter $c$, ideally with a complexity of the form $2^c n^{O(1)}$, or to be optimistic, perhaps even feasible in polynomial-time using the proof ideas from~\cite{Chudnovsky:[1]} that led to finding the unavoidable subgraphs.  We leave this open, but the possibility should be investigated before the algorithm can become practical.  
 
 As for other uses of the general algorithm, let us recall that it is plausible that it can improve the complexity of the \textsc{Cograph Editing} problem.  On the other hand, the analysis for the simpler \textsc{Cograph Deletion} is already difficult, with many possible branching cases, so more clever views will be needed to make this extension interesting.  Let us also mention that improving the current kernel of \textsc{Cograph Editing} from $O(k^2 \log k)$ to $O(k^2)$ may be doable using our approach, since if the quotient graph is a large prime graph, we know it requires a large number of edge modifications, which in turn can be used to bound the size of the modules.

 The cluster editing problems may also be meaningful to look at in the near future.  An ideal would be to solve \textsc{Cluster Deletion}, as known as \textsc{$P_3$-free Deletion}, in time $O((1 + \epsilon)^k)$ using our techniques.  Achieving branching factor $1 + \epsilon$ appears doable if the prime graph contains one of the specific graphs or easy chains --- but the case where the graph only contains long induced paths seems difficult.  As for other $\HF$ editing problems, the unavoidable subgraphs from Theorem~\ref{thm:chudnov} may or may not contain a large number of graphs from $\H$.  For example, they do not all enforce modifications for the $P_6$-free editing problem, so it remains to find other properties of prime graphs that can be useful for a large number of $\HF$ graph classes.

\bibliographystyle{plain}
\bibliography{biblio/refs}

\end{document}